\newcommand{\R}{\mathbb{R}}
\newcommand{\C}{\mathbb{C}}
\newcommand{\N}{\mathbb{N}}
\newcommand{\Z}{\mathbb{Z}}
\newcommand{\T}{\mathbf{T}}
\renewcommand{\Re}{\operatorname{Re}}
\renewcommand{\Im}{\operatorname{Im}}
\theoremstyle{plain}
\newtheorem{lem}{Lemma}
\newtheorem{thm}{Theorem}
\newtheorem{prop}{Proposition}
\newtheorem{cor}{Corollary}
\theoremstyle{definition}
\newtheorem{definition}{Definition}
\newtheorem{remark}{Remark}
\newtheorem{assumption}{Assumption}
\def\pageoption{3}
\def\Title{Weighted shape-constrained estimation for the autocovariance sequence from a reversible Markov chain}
\begin{document}

\def\spacingset#1{\renewcommand{\baselinestretch}%
{#1}\small\normalsize} \spacingset{1}

\ifnum\pageoption=2\begin{bibunit}\fi

\title{\bf \Title}
  \author{Hyebin Song\thanks{Both authors contributed equally.} \, and Stephen Berg\footnotemark[1]  \thanks{Corresponding author: sqb6128@psu.edu} \\
Department of Statistics, Pennsylvania State University}
\maketitle
\bigskip
\begin{abstract}
We present a novel weighted $\ell_2$ projection method for estimating autocovariance sequences and spectral density functions from reversible Markov chains. \citet{berg2023efficient} introduced a least-squares shape-constrained estimation approach for the autocovariance function by projecting an initial estimate onto a shape-constrained space using an $\ell_2$ projection. While the least-squares objective is commonly used in shape-constrained regression, it can be suboptimal due to correlation and unequal variances in the input function. To address this, we propose a weighted least-squares method that defines a weighted norm on transformed data. Specifically, we transform an input autocovariance sequence into the Fourier domain and apply weights based on the asymptotic variance of the sample periodogram, leveraging the asymptotic independence of periodogram ordinates. Our proposal can equivalently be viewed as estimating a spectral density function by applying shape constraints to its Fourier series. We demonstrate that our weighted approach yields strongly consistent estimates for both the spectral density and the autocovariance sequence. Empirical studies show its effectiveness in uncertainty quantification for Markov chain Monte Carlo estimation, outperforming the unweighted moment LS estimator and other state-of-the-art methods.
\end{abstract}

\noindent%
{\it Keywords:}  Markov chain Monte Carlo, Shape-constrained inference, Autocovariance sequence, Spectral density, Asymptotic variance 
\vfill

\newpage
\spacingset{1.5}  
\section{Introduction}\label{sec: 1_intro}
Markov chain Monte Carlo (MCMC) methods have been essential tools for statistical inference in complex models where analytical solutions or direct sampling from a target distribution $\pi$ are intractable. By constructing a suitable Markov chain $X=X_0,X_1,X_2,\dots$, which converges to the target distribution, we can estimate quantities of interests which are functions of the target distribution. For instance, suppose one is interested in estimating the expectation of $g$ with respect to the target distribution, i.e., $\mu_g = \int g(x) \pi(dx)$ for some $g:\mathsf{X} \to \R$. The Markov chain sample mean $Y_M = M^{-1}\sum_{t=0}^{M-1} g(X_t)$ estimates $\mu_g$.  

In general, the $g(X_t)$ are not independent, and it is of interest to estimate the covariance structure between the draws. Estimating this allows one to assess the degree of correlation between the draws and evaluate the quality of estimates based on MCMC simulations~\citep{geyer1992practical,jones2006fixed}. The autocovariance function $\gamma:\Z \to \R$, defined as $\gamma(k) = \operatorname{Cov}_\pi(g(X_0), g(X_k))$ for $k \in \Z$,  measures dependence between observations $g(X_t)$ at different time lags.  For instance, the estimated autocorrelation function (ACF) plot is often used as a visual tool to assess the degree of dependence in draws from an MCMC sampler. Typically, the empirical autocovariance sequence $\tilde{r}_M$, defined as 
\begin{align}\label{def:empirical_autocov}
    \tilde{r}_M(k)=  \frac{1}{M} \sum_{t=0}^{M-1-|k|}\left\{g(X_t)-Y_M\right\}\left\{g(X_{t+|k|})-Y_M\right\} 
\end{align}
for $k$  such that $|k| < M$ and $\tilde{r}_M(k)=0$ otherwise, is used as an estimate for $\gamma$, and is pointwise consistent~(e.g., \citealt{anderson1971statistical}). 

Moreover, the autocovariance sequence $\gamma$ is closely related to the asymptotic variance $\sigma^2$ of the sample mean, defined as
\begin{align}
    \sigma^2:=\underset{M\to\infty}{\lim}\,M\,\text{Var}(M^{-1}\sum_{t=0}^{M-1}g(X_t)).\label{eq:mcmcavar}
\end{align}
The square root $\sigma$ of $\sigma^2$ in~\eqref{eq:mcmcavar} is also called the Monte Carlo Standard Error (MCSE). Under mild regularity conditions (e.g., ~\citealp{Haggstrom2007-sy}), the limit in the definition of $\sigma^2$ exists, and it turns out that $\sigma^2$ is the infinite sum of $\gamma$ over all lags, i.e., $\sigma^2=\sum_{k\in\mathbb{Z}}\gamma(k)$.  Estimating $\sigma^2$ has an important practical application, as $\sigma^2$ can be used to quantify the precision of the Monte Carlo average $Y_M$ relative to the posterior mean $\mu_g$. Note that since $\sigma^2$ depends on $\gamma$ from all lags, pointwise consistency is not sufficient to guarantee asymptotic variance convergence. For instance, while each $\tilde{r}_M(k)$ is consistent for $\gamma(k)$, the sum $\sum_{k\in\Z} \tilde{r}_M(k)$ is not a consistent estimate for $\sigma^2$. Therefore, alternative estimators have been proposed, including batch-mean and overlapping batch mean estimators, spectral variance estimators, and initial sequence estimators \citep{meketon1984overlapping, glynn1990simulation, glynn1991estimating, Damerdji1991-oj,geyer1992practical,damerdji1994strong,damerdji1995mean,flegal2010batch,vats2022lugsail}.

In the time series literature, it is also common to estimate the closely related quantity, the spectral density \begin{align}
    \phi_{\gamma}(\omega)=\sum_{k\in\mathbb{Z}}e^{-ik\omega}\gamma(k)=\sum_{k\in\mathbb{Z}}\gamma(k)\cos(\omega k) + i\sum_{k\in\mathbb{Z}}\gamma(k)\sin(\omega k),\label{eq:spectralDensity}
\end{align} which is the Fourier transform of the autocovariance sequence.  The spectral density characterizes how the power of a time series signal is distributed across different frequencies, making it a crucial quantity to estimate for understanding the cyclic properties of a signal in time series analysis. Spectral properties of stationary stochastic processes have been studied since~\citet{wiener1930generalized} and~\citet{khintchine1934korrelationstheorie}. 
Non-parametric estimation of the spectral density function~\eqref{eq:spectralDensity} using the periodogram, which is the Fourier transform of the empirical autocovariance function $\tilde{r}_M$, dates back to~\citet{kendall1945analysis},~\citet{bartlett1946theoreticalff}, and~\citet{bartlett1948smoothing}, and has since been extensively researched, with a focus on smoothing techniques using suitable kernel and bandwith choices \citep{wahba1980automatic, pawitan1994nonparametric,kooperberg1995rate, fan1998automatic,mcmurry2004nonparametric}.

There is a notable connection between asymptotic variance estimation and spectral density estimation. From the definition of the spectral density \eqref{eq:spectralDensity}, evaluating the spectral density function at $\omega=0$ yields the asymptotic variance $\sum_{k\in \Z}\gamma(k)$, also known as the long run variance in time series literature. This connection has been leveraged to devise consistent estimators for the long run variance or asymptotic variance in previous literature \citep{heidelberger1981spectral, Damerdji1991-oj,  flegal2010batch, mcelroy2024estimating}.

In this paper, we study the spectral density and asymptotic variance estimation problems in the case that $\{X_{t}\}$ is a reversible Markov chain. As in~\citet{berg2023efficient}, our approach leverages the fact that the autocovariance sequence $\gamma$ from a reversible Markov chain can be expressed as a mixture of exponentials:
\begin{align}\gamma(k)=\int_{[-1,1]}\alpha^{|k|}\,\mu(d\alpha),\label{eq:mixture}
\end{align} $\forall k\in\mathbb{Z}$, with respect to some nonnegative measure $\mu$. The representation~\eqref{eq:mixture} imposes constraints on the true autocovariance sequence, similar to settings of estimating functions on discrete supports with various shape constraints \citep{durot2013least, Giguelay2017-sy,Balabdaoui2020-nn}. \citet{berg2023efficient} used the least-squares approach, which minimizes the squared $\ell_2$ distance between an input $r_M$, such as the empirical autocovariance, and the function $m$, i.e.,
\begin{equation}\label{eq: objective_LS}
\sum_{k=-\infty}^\infty (r_M(k)-m(k))^2
\end{equation}
subject to $m$ admitting a mixture representation as in \eqref{eq:mixture}.  While least-squares objectives are commonly used in shape-constrained regression (e.g., \citealp{Barlow1972-ej, kuosmanen2008representation, guntuboyina2018nonparametric, Balabdaoui2020-nn}), the objective \eqref{eq: objective_LS} essentially gives equal weights among the input data $r_M(k)$, and does not take the correlation between $r_M(k)$ into consideration, which can be a source of estimation inefficiency.

We propose a novel weighted least-squares approach to better account for unequal variances and correlations in the input data. Our main idea lies in defining a weighted norm based on distances in the Fourier-transformed space, where correlations between (transformed) samples tend to be reduced. We assign weights to these transformed samples based on their asymptotic variances. Since the minimization is performed in the Fourier domain, our method can be also viewed as estimating a function by applying shape constraints to its Fourier series, which to our knowledge has not been previously explored. 

We provide characterizations of the estimator and study the asymptotic properties of the resulting spectral density and asymptotic variance estimators. While our approach builds upon the shape-constrained estimation framework of \citet{berg2023efficient}, the transformation into the Fourier domain and the introduction of a weight function necessitate non-trivial modifications to their analysis. Notably, leveraging the variation diminishing property of totally positive kernels~\citep{karlin1968total}, we establish that while the objective is minimized over the space of infinite mixtures of scaled exponentials, the minimizer of the weighted objective has indeed a finite number of components. We also derive a tighter bound on the number of support points in the mixing measure, i.e., number of mixing components, than in~\citet{berg2023efficient}. We further establish the strong consistency of our proposed spectral density and asymptotic variance estimators under mild smoothness conditions on the estimated weight function in the Fourier domain, which are related to the rate of decay of its Fourier series in the time domain.

The rest of this paper is structured as follows. Section \ref{sec: 2_problem-setup} provides background on Markov chains and on the unweighted shape-constrained autocovariance sequence estimator from \citet{berg2023efficient}. Section \ref{sec: 3_weighted-momentLS} introduces our proposed estimator, the weighted moment least squares estimator, and provides some characterizations of the proposed estimator. Section \ref{sec: 4_statistical_analysis} presents statistical convergence results for the proposed estimators, including strong convergence of the autocovariance sequence, spectral density, and asymptotic variance estimates. Section \ref{sec: 5_empirical} demonstrates empirical improvements of the proposed method compared to the unweighted moment least squares estimator and other state-of-the-art estimators on some simulated and real data examples, including an AR(1) chain and a Bayesian LASSO Gibbs sampling chain~\citep{park2008bayesian, rajaratnam2019uncertainty}.

\section{Problem set-up and background}\label{sec: 2_problem-setup}
We first fix a few notations. For a measure space $(\mathsf{X}, \mathcal{X}, \pi)$ and measurable function $f:\mathsf{X} \to \C$, we define the $L^p$ norm $\|f\|_{L^p(\pi)} = \{\int |f(x)|^p \pi(dx)\}^{1/p}$, and define the $L^p$ space as 
$L^p(\mathsf{X}, \mathcal{X}, \pi):=\{f: \mathcal{X} \to \C; \|f\|_{L^p(\pi)}  <\infty\}$ for $p \ge 1$. We abbreviate $L^p(\mathsf{X}, \mathcal{X}, \pi)$ by $L^p(\pi)$ if there is no danger of confusion. When $p=2$, $L^2(\mathsf{X}, \mathcal{X}, \pi)$ is a Hilbert space with the inner product $[f,g]_\pi = \int f(x) \overline{g(x)} \pi(dx)$.

For a sequence $x: \Z \to \C$, we define $\|x\|_p = (\sum_{k\in\mathbb{Z}}|x(k)|^p)^{1/p}$, and we let $\ell_p(\Z,\C) = \{x:\Z\to\C; \|x\|_p <\infty\}$ for $p\ge 1$. We use $\Re:\C\to \R$ and $\Im:\C\to\R$ to denote functions which return the real part or imaginary part of a complex number. We let $\ell_p(\Z,\R)$ be the set of real-valued sequences, which we define as the subspace of $\ell_p(\Z,\C)$, such that $\ell_p(\Z, \mathbb{R}) = \{x\in \ell_p(\Z,\C);  \Im(x)=0\}$. When $p=2$, $\ell_2(\Z,\C)$ an inner product space, equipped with the inner product
$\braket{x,y}=\sum_{k\in\mathbb{Z}}x(k)\overline{y(k)} 
$ for functions $x,y:\Z \to \C$.

Throughout the paper, we consider a discrete time, time-homogeneous Markov chain $X=\{X_t\}_{t=0}^{\infty}$ on a general state space $\mathsf{X}$ with a $\sigma$-algebra $\mathcal{X}$. 
We let $Q$ be the Markov transition kernel so that $Q(x,A) = P(X_{t+1} \in A|X_t = x) $ for all $x \in \mathsf{X}$ , $A \in \mathcal{X}$, and $t =1,2,\dots$. An initial measure $\nu$ for $X_0$ and a transition kernel $Q$ define a Markov chain probability measure $P_\nu$ for $X = (X_0, X_1, X_2,\dots )$ on the canonical sequence space $(\Omega, \mathcal{F})$. We write $E_\nu$ to denote expectation with respect to $P_\nu$. We say the chain $X$ (or the kernel $Q$) has a stationary probability distribution $\pi$ if there exists a probability distribution such that $\int Q(x,A) \pi(dx) = \pi(dx)$. We say the chain $X$ (or kernel $Q$) is reversible with respect to $\pi$, if for any $A,B \in \mathcal{X}$, we have $\int_A  \pi(dx) Q(x,B) = \int_B \pi(dy) Q(y,A)$.  

For a function $f:\mathsf{X} \to \R$ and transition kernel $Q$, we define the associated linear operator $Q$ by $(Qf)(x) = \int f(y) Q(x,dy)$. We refer to this operator as a Markov operator. Also, we define $Q^k$ recursively for $k \ge 2$ by $Q^{k}f(x) = Q(Q^{k-1}f)(x)$. For an operator $T$ on $L^2(\pi)$, we define its spectrum as $\sigma(T)=\{\lambda \in \mathbb{C} ;(T-\lambda I)^{-1} \text { does not exist or is unbounded}\}$. For Markov operators $Q$, we define the spectral gap $\delta(Q)$ of $Q$ as $\delta(Q)=1-\sup \{|\lambda| ; \lambda \in \sigma(Q_0)\}$ where $I$ is the identity operator, $Q_0$ is defined as
$Q_0 f=Q f-E_\pi\left[f\left(X_0\right)\right] f_0$ and $f_0 \in L^2(\pi)$ is the constant function such that $f_0(x)=1$ for all $x \in \mathsf{X}$.  

Throughout the paper, we assume a Harris ergodic, $\pi$-reversible Markov chain $X = (X_0,X_1,\dots)$ with a positive spectral gap. More specifically, we assume the following:
\begin{assumption}[Assumptions on Markov chain]\label{assumption1}
\, 
\begin{enumerate}[label={1\alph*}, itemsep=0em]
\item \label{cond:harris_ergodicity} (Harris ergodicity) $X$ is $\psi$-irreducible, aperiodic, and positive Harris recurrent.
\item \label{cond:piReversible}
(Reversibility) $Q$ is $\pi$-reversible for a probability measure $\pi$ on $(\mathsf{X},\mathscr{X})$.
\item \label{cond:geometric_ergodicity}
(Geometric ergodicity) There exists a real number $\rho<1$ and a non-negative function $M$ on the state space $\mathsf{X}$ such that $\|Q^n(x,\cdot) - \pi(\cdot)\|_{\rm TV} \le M(x)\rho^n, \mbox{  for all } x\in \mathsf{X},$ where $\|\cdot\|_{\rm TV}$ is the total variation norm.
\end{enumerate}
\end{assumption}

\begin{remark}
For the definitions of $\psi$-irreducibility, aperiodicity, and positive Harris recurrence, see e.g., \citet{meyn2009markov}. Reversibility and geometric ergodicity are two key assumptions for the class of Markov chains we consider in this paper. Since $\pi$-reversibility is a sufficient condition for $\pi$-stationarity, reversible Markov chains have been widely used in the MCMC literature, as checking reversibility is often an easy way to verify that the constructed Markov chain converges to the correct target distribution.
Examples of reversible Markov chains include Metropolis~\citep{metropolis1953equation}, Metropolis-Hastings chains~\citep{hastings1970monte}, random-scan Gibbs samplers \citep{geman1984stochastic,liu1995covariance,greenwood1998}, and marginal chains from data augmentation Gibbs samplers \citep{wongKongLiu}.
The geometric ergodicity condition~\ref{cond:geometric_ergodicity} implies exponential convergence of the Markov chain $X$ to its target distribution $\pi$. When the state space $\mathsf{X}$ is finite, all irreducible and aperiodic Markov chains are geometrically ergodic, though the constant $\rho$ in~\ref{cond:geometric_ergodicity} may be impractically large. When $Q$ is reversible, $Q$ has a positive spectral gap if and only if the chain is geometrically ergodic~\citep{roberts1997geometric,kontoyiannis2012geometric}. As $|\gamma(k)| \le (1-\delta(Q))^k \gamma(0)$ by \eqref{eq:mixture} for any $k \in \Z$, the reversibility and geometric ergodicity conditions ensure that the covariances between the MCMC draws decay exponentially.
\end{remark}

We also assume that the function $g:\mathsf{X}\to\R$ is in $L^2(\pi)$, i.e,
\begin{assumption}[Square integrability]\label{cond:integrability}$\int g(x)^2\pi(dx)<\infty$.
\end{assumption}

It is well known that the autocovariance sequence $\gamma$ from a reversible Markov chain has the following spectral representation, due to the spectral theorem for self-adjoint operators (e.g., \citealp{Rudin1973-ry}).
\begin{prop}\label{prop:gamma_momentLSrep} Assume \ref{cond:piReversible} and \ref{cond:integrability}. 
The true autocovariance sequence $\gamma(k)= [ g, Q_0^k g ]_\pi, k \in \mathbb{Z}$, has the following representation for some positive measure $\mu_\gamma$
\begin{equation}\label{eq: mixture2}
\gamma(k)=\int_{\sigma\left(Q_0\right)} \alpha^{|k|} \mu_\gamma(d\alpha)    
\end{equation}
where $\sigma\left(Q_0\right)$ is the spectrum of the linear operator $Q_0$. Moreover, $\sigma\left(Q_0\right)$ lies on the real axis, and $\sigma\left(Q_0\right) \subseteq[-1,1]$.
\end{prop}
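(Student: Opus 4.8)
The plan is to realize $\gamma$ as the sequence of moments of a scalar spectral measure attached to $g$, obtained by passing $g$ through the spectral resolution of the self-adjoint operator $Q_0$. First I would reduce to the centered case. Since $\gamma(k)=\operatorname{Cov}_\pi(g(X_0),g(X_k))$ is unchanged when a constant is added to $g$, and since $Q_0 f_0=0$, I may assume without loss of generality that $E_\pi[g]=0$. Under this normalization $Q_0 g=Qg$, and because stationarity gives $E_\pi[Q^j g]=E_\pi[g]=0$ for every $j$, the operator $Q_0$ preserves the mean-zero subspace and $Q_0^k g=Q^k g$ for all $k\ge 0$. The Markov property $E_\pi[g(X_k)\mid X_0=x]=(Q^k g)(x)$ then yields $\gamma(k)=E_\pi[g(X_0)g(X_k)]=[g,Q^k g]_\pi=[g,Q_0^k g]_\pi$ for every $k\ge 0$, where the $k=0$ case reduces to $\gamma(0)=\|g\|_{L^2(\pi)}^2=\operatorname{Var}_\pi(g)$ precisely because $g$ is centered.

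Next I would verify that $Q_0$ is a bounded self-adjoint operator on $L^2(\pi)$. Reversibility \ref{cond:piReversible} is equivalent to self-adjointness of $Q$, i.e. $[Qf,h]_\pi=[f,Qh]_\pi$. Writing $Pf=E_\pi[f]f_0$ for the orthogonal projection onto the constants (orthogonal because $[f,f_0]_\pi=E_\pi[f]$ and $\|f_0\|_{L^2(\pi)}=1$), we have $Q_0=Q-P$, which is self-adjoint as a difference of self-adjoint operators. Moreover $Q_0$ annihilates the constants and coincides with $Q$ on their orthogonal complement, so $\|Q_0\|\le\|Q\|\le 1$ since Markov operators are $L^2(\pi)$-contractions. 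A bounded self-adjoint operator has real spectrum contained in $[-\|Q_0\|,\|Q_0\|]$, which delivers at once the ``moreover'' claim that $\sigma(Q_0)$ lies on the real axis and $\sigma(Q_0)\subseteq[-1,1]$.

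With self-adjointness in hand I would invoke the spectral theorem for bounded self-adjoint operators to obtain a projection-valued measure $\{E(\cdot)\}$ supported on $\sigma(Q_0)$ with $Q_0=\int_{\sigma(Q_0)}\alpha\,dE(\alpha)$ and, through the associated functional calculus, $Q_0^k=\int_{\sigma(Q_0)}\alpha^k\,dE(\alpha)$ for each integer $k\ge 0$. Defining the scalar measure $\mu_\gamma(B)=[g,E(B)g]_\pi=\|E(B)g\|_{L^2(\pi)}^2$ for Borel $B\subseteq\sigma(Q_0)$ produces a nonnegative finite measure, with nonnegativity immediate since each $E(B)$ is an orthogonal projection. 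Pairing the functional-calculus identity with $g$ then gives $\gamma(k)=[g,Q_0^k g]_\pi=\int_{\sigma(Q_0)}\alpha^k\,\mu_\gamma(d\alpha)$ for all $k\ge 0$. Finally, since $\gamma(-k)=\gamma(k)$ by stationarity and $\alpha^k=\alpha^{|k|}$ for $k\ge 0$, this extends to $\gamma(k)=\int_{\sigma(Q_0)}\alpha^{|k|}\,\mu_\gamma(d\alpha)$ for all $k\in\Z$, as claimed.

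I expect the only genuinely delicate point to be the bookkeeping around centering: one must track that the rank-one correction $P$ defining $Q_0$ both preserves self-adjointness and is exactly what forces the $k=0$ moment to equal the variance rather than $E_\pi[g^2]$, and that identifying $\mu_\gamma$ with the scalar spectral measure of $g$ (not of a rescaled or uncentered vector) is what makes the moment identity hold at every lag. The spectral-theorem machinery itself is standard once $Q_0$ is known to be bounded and self-adjoint, so the substance of the argument lies in confirming those two operator-theoretic properties and correctly attaching the mixing measure to $g$.
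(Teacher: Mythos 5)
Your proof is correct and follows exactly the route the paper intends: the paper states this proposition as a known consequence of the spectral theorem for bounded self-adjoint operators (citing Rudin), and your argument — centering $g$, verifying that reversibility makes $Q_0=Q-P$ a self-adjoint contraction, and reading off $\gamma(k)=\int\alpha^{|k|}\,[g,E(d\alpha)g]_\pi$ from the functional calculus — is the standard derivation being invoked. The care you take with the rank-one correction and the $k=0$ lag is exactly the bookkeeping the paper leaves implicit.
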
 
\begin{remark}
Mixture representations for a sequence, and the corresponding support of the mixing measure, are closely linked to the \textit{shape} of the sequence. This connection has been leveraged in various shape-constrained regression problems, as it allows for the transformation of these problems into non-parametric mixture modeling problems. For instance, convex functions on $\N$ can be represented as mixtures of ``trifunctions", and completely monotone functions on $\N$ admit a representation as a scale mixture of exponentials with a mixing measure supported on $[0,1]$~\citep{durot2013least, Balabdaoui2020-nn}. 
\end{remark}

\citet{berg2023efficient} utilized this connection and proposed a momentLS estimator for $\gamma$ from Markov chain, in which they perform least-squares estimation of $\gamma$ subject to the mixture constraint \eqref{eq: mixture2}. 
For a given initial autocovariance sequence estimate $r_{M} \in \ell_2(\Z,\R)$, \citet{berg2023efficient} proposed the (unweighted) moment least squares estimator (moment LSE) as the projection of $r_{M}$ onto a $C$-\textit{moment} sequence space 
\begin{equation}\label{def: C_moment_space}
\mathscr{M}_{\infty}(C)=\{x\in \mathbb{R}^{\mathbb{Z}}:\,x(k)=\int\alpha^{|k|}\,\mu(d\alpha),\,\forall k\in\mathbb{Z},\,\text{for some $\mu \in \mathcal{M}_\R$ with $\operatorname{Supp}(\mu)\subseteq C$}\}    
\end{equation}
which is the space of functions $x$ on $\Z$ such that $x$ can be represented as an infinite scale mixture of exponentials for a measure $\mu$ supported on $C$.  Here, $\mathcal{M}_\R$ denotes the set of Radon measures, i.e., positive regular measures which are finite on all compact sets (see, e.g., \citealp{folland1999real}).

More concretely, for a given $C\subseteq [-1,1]$, the momentLS $\Pi(r_{M};C)$ is given by
\begin{align}\label{def: unweighted_momentLS}
\Pi(r_{M};C)=\underset{f\in\mathscr{M}_{\infty}(C)\cap \ell_2(\mathbb{Z},\R)}{\arg\min}\;\|r_{M}-f\|_2^2.
\end{align} 
When $C = [-1+\delta,1-\delta]$ for a nonnegative constant $0\leq \delta\leq 1$, for notational simplicity we let $\mathscr{M}_{\infty}(\delta) = \mathscr{M}_{\infty}([-1+\delta,1-\delta])$ and $\Pi(r_M;\delta) = \Pi(r_M;[-1+\delta,1-\delta])$.
We note that optimizing \eqref{def: unweighted_momentLS} over $f \in \mathscr{M}_\infty(C)$ is equivalent to optimizing over the corresponding mixing measures $\mu_f$ such that $f(k) = \int\alpha^{|k|} \mu_f(d\alpha)$, as there is a one-to-one correspondence between the mixing measure $\mu_f$ and the moment sequence $f$ (ref. Proposition 2 in \citealp{berg2023efficient}).

\citet{berg2023efficient} showed that the momentLS estimator $\Pi(r_M;\delta)$ is a strongly consistent estimator for $\gamma$ in the $\ell_2$ sense when the hyperparameter $\delta$ is chosen sufficiently small so that $\operatorname{Supp}(\mu_\gamma) \subseteq [-1+\delta,1-\delta]$, and the initial sequence estimator $r_M$ satisfies some mild conditions. 
Since we will utilize the same set of conditions in this paper, we summarize conditions on the initial autocovariance estimator $r_M$ as follows:
\begin{assumption} [conditions on initial autocovariance estimator $r_M$]
\, 
\begin{enumerate}[label={3\alph*}, itemsep=0em]
\item \label{cond:rM_pwConvergence} (a.s. elementwise convergence)
$r_M(k) \underset{M \rightarrow \infty}{\rightarrow} \gamma(k)$ for each $k \in \mathbb{Z}, P_x$-almost surely, for any initial condition $x \in \mathrm{X}$,    
\item \label{cond:rM_finiteSupport} (finite support)
    $r_M(k)=0$ for $k \geq n(M)$ for some $n(M)<\infty$, and
\item \label{cond:rM:even}(even function with a peak at 0)
    $r_M(k)=r_M(-k)$ and $r_M(0) \geq\left|r_M(k)\right|$ for each $k \in \mathbb{Z}$.
\end{enumerate}
\end{assumption}

\begin{remark}\label{remark:emp_autocov} 
We note that conditions \ref{cond:rM_pwConvergence}--\ref{cond:rM:even} are quite mild, particularly because we do not require the initial sequence estimator to converge to $\gamma$ in the $\ell_2$ sense. Notably, common estimators such as the empirical autocovariance sequence $\tilde{r}_M$ and windowed autocovariance sequences with increasing window sizes satisfy \ref{cond:rM_pwConvergence}--\ref{cond:rM:even} (ref. Proposition 7 in \citealp{berg2023efficient}).
\end{remark}

We introduce a few notations regarding moment sequences, i.e., sequences in $\mathscr{M}_\infty(C)$ for some $C \subseteq [-1,1]$. For $\alpha \in [-1,1]$,  we define $x_\alpha: \Z \to \R$ by $x_\alpha(k) = \alpha^{|k|}$ for each $k\in \Z$. We write $m=\int x_\alpha \,\mu(d\alpha)$ to refer to a moment sequence $m:\Z\to\R$ which satisfies
\begin{equation}\label{eq: moment_seq_rep}
    m(k) = \int x_\alpha(k)\, \mu(d\alpha) = \int \alpha^{|k|} \,\mu(d\alpha),\,\,\forall k\in\Z.
\end{equation}
If \eqref{eq: moment_seq_rep} holds, we say that $\mu$ is a \textit{representing measure} for $m$ since $\mu$ uniquely determines $m$.
Finally, for future reference, we summarize the previous main results for the (unweighted) moment LS in Theorem \ref{thm:unweighted}.
\begin{thm}[\citet{berg2023efficient}]\label{thm:unweighted}
Suppose $X$ is a Markov chain with transition kernel $Q$ satisfying~\ref{cond:harris_ergodicity}--\ref{cond:geometric_ergodicity}, and suppose $g:\mathsf{X}\to\mathbb{R}$ satisfies~\ref{cond:integrability}. Let $\gamma:\mathbb{Z}\to\mathbb{R}$ with $\gamma(k)=\operatorname{Cov}_{\pi}(g(X_0),g(X_k))$, $ k\in\mathbb{Z}$ denote the autocovariance sequence for $g$. Let $\delta>0$ such that $\operatorname{Supp}(\mu_{\gamma})\subseteq [-1+\delta,1-\delta]$, where $\mu_{\gamma}$ denotes the representing measure for $\gamma$. Suppose $r_M$ is an initial autocovariance sequence estimator satisfying~\ref{cond:rM_pwConvergence}--\ref{cond:rM:even}. Then for each initial condition $x\in\mathsf{X}$, 
\begin{enumerate}\setlength{\itemsep}{0em}
    \item $\|\Pi(r_M;\delta)-\gamma\|_2\underset{M\to\infty}{\to} 0$, $P_x$-a.s.
    \item $P_x(\hat{\mu}_{\delta M}\to \mu_{\gamma}\,\text{vaguely, as $M\to\infty$})=1$, where $\hat{\mu}_{\delta M}$ denotes the representing measure for $\Pi(r_M;\delta)$, and
    \item $\sigma^2(\Pi(r_M;\delta))\to\sigma^2(\gamma)$ $P_x$-a.s. as $M\to\infty$,
\end{enumerate} where we define $\sigma^2(m)=\sum_{k\in\mathbb{Z}}m(k)$ for a sequence $m$ on $\mathbb{Z}$.
\end{thm}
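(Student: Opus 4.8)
The plan is to exploit that, by the support assumption $\operatorname{Supp}(\mu_\gamma)\subseteq[-1+\delta,1-\delta]$, the truth $\gamma$ is itself feasible, $\gamma\in\mathscr{M}_\infty(\delta)\cap\ell_2(\Z,\R)$, so that $\Pi(r_M;\delta)$ is the metric projection of $r_M$ onto a \emph{fixed} closed convex cone (whose well-posedness and variational characterization I take from \citet{berg2023efficient}). Since condition~\ref{cond:rM_pwConvergence} only supplies elementwise, not $\ell_2$, convergence of $r_M$, nonexpansiveness of the projection is useless on its own; instead I would work entirely with the representing measures $\hat\mu_{\delta M}$. The single structural fact driving everything is that any sequence $m$ represented by a measure $\mu$ supported on $[-1+\delta,1-\delta]$ obeys the uniform geometric envelope $|m(k)|\le\mu([-1+\delta,1-\delta])(1-\delta)^{|k|}$. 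My strategy is therefore to prove Part~2 (vague convergence $\hat\mu_{\delta M}\to\mu_\gamma$) first and then read off Parts~1 and~3 from it together with this envelope.

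For Part~2 I would first establish precompactness. The key observation is that the total mass of $\hat\mu_{\delta M}$ equals the lag-zero value $\Pi(r_M;\delta)(0)$, since $x_\alpha(0)=1$. Using the optimality and complementary-slackness conditions of the projection onto the cone together with condition~\ref{cond:rM:even} (which gives $|r_M(k)|\le r_M(0)$) and the geometric envelope, one bounds $\Pi(r_M;\delta)(0)$ by a constant multiple of $r_M(0)/\delta$; since $r_M(0)\to\gamma(0)$ by~\ref{cond:rM_pwConvergence}, the total masses are a.s.\ uniformly bounded. As the measures live on the \emph{fixed} compact set $[-1+\delta,1-\delta]$ with uniformly bounded mass, every subsequence has a vaguely convergent sub-subsequence.

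It then remains to identify every subsequential limit $\mu_*$, with moment sequence $m_*=\int x_\alpha\,\mu_*(d\alpha)$, as $\mu_\gamma$. Vague convergence and continuity of $\alpha\mapsto\alpha^{|k|}$ on the compact support give $\Pi(r_M;\delta)(k)\to m_*(k)$ for each $k$. I would then pass the optimality conditions $\langle r_M-\Pi(r_M;\delta),x_\alpha\rangle\le 0$ for all $\alpha\in[-1+\delta,1-\delta]$ and the slackness identity $\int\langle r_M-\Pi(r_M;\delta),x_\alpha\rangle\,\hat\mu_{\delta M}(d\alpha)=0$ to the limit. Here the envelope is essential: writing $\langle r_M-\Pi(r_M;\delta),x_\alpha\rangle=\sum_k(r_M(k)-\Pi(r_M;\delta)(k))\alpha^{|k|}$ and splitting into $|k|\le K$ and $|k|>K$, the tail is uniformly small in $\alpha$ and $M$ while the head converges by~\ref{cond:rM_pwConvergence}, so the inner products converge to $\langle\gamma-m_*,x_\alpha\rangle$ uniformly in $\alpha$. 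The limiting relations $\langle\gamma-m_*,x_\alpha\rangle\le 0$ for all $\alpha$ and $\int\langle\gamma-m_*,x_\alpha\rangle\,\mu_*(d\alpha)=0$ are exactly the conditions characterizing $m_*$ as the projection of $\gamma$; since $\gamma$ is feasible this forces $m_*=\gamma$, and the one-to-one correspondence between moment sequences and representing measures (Proposition~2 of~\citet{berg2023efficient}) yields $\mu_*=\mu_\gamma$. As every subsequential limit is $\mu_\gamma$, the full sequence converges vaguely, giving Part~2.

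Parts~1 and~3 then follow quickly. For Part~1, the pointwise convergence $\Pi(r_M;\delta)(k)\to\gamma(k)$ just obtained, combined with the common geometric dominating envelope for $\Pi(r_M;\delta)$ and $\gamma$ (from the uniform mass bound), gives $\|\Pi(r_M;\delta)-\gamma\|_2\to 0$ by dominated convergence. For Part~3, since $\sigma^2(m)=\int\frac{1+\alpha}{1-\alpha}\,\mu(d\alpha)$ for a sequence represented by $\mu$ supported in $[-1+\delta,1-\delta]$, and $\alpha\mapsto\frac{1+\alpha}{1-\alpha}$ is continuous and bounded there, the vague convergence of Part~2 gives $\sigma^2(\Pi(r_M;\delta))\to\sigma^2(\gamma)$ directly. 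I expect the main obstacle to be the identification step: passing the variational conditions to the limit while assuming only elementwise convergence of $r_M$, where the geometric envelope supplied by the support restriction $C=[-1+\delta,1-\delta]$ is precisely what converts elementwise convergence into the uniform control required.
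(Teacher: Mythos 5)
Your argument is correct, but it runs in the opposite direction from the paper's. The paper does not reprove Theorem~\ref{thm:unweighted} directly: it obtains it as the special case $\phi_M\equiv 1$ of the weighted machinery, whose logical order is (i) the variational inequality of Lemma~\ref{lem:weightedl2_ineq}, which bounds $\|\Pi(r_M;\delta)-\gamma\|^2$ by $\int\langle x_\alpha,r_M-\gamma\rangle\,\hat\mu_{\delta M}(d\alpha)-\int\langle x_\alpha,r_M-\gamma\rangle\,\mu_\gamma(d\alpha)$, (ii) uniform a.s.\ convergence $\sup_{\alpha\in\mathcal K}|\langle x_\alpha,r_M-\gamma\rangle|\to 0$ (Proposition~\ref{prop:prop_c}) and the uniform mass bound on $\hat\mu_{\delta M}$ (Proposition~\ref{prop:measbdd}), giving $\ell_2$ convergence first, and then (iii) vague convergence of $\hat\mu_{\delta M}$ as a \emph{consequence} of $\ell_2$ convergence (Corollary~\ref{cor:meas_conv}, via Lemma~7 of \citet{berg2023efficient}), with the $\ell_1$/asymptotic-variance statements following from the geometric tail bound. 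You instead prove vague convergence first by Helly-type precompactness (your mass bound is essentially Proposition~\ref{prop:measbdd} with $\phi\equiv1$) plus identification of every subsequential limit through passing the optimality and complementary-slackness conditions of Propositions~\ref{prop:weighted_opt_ineq}--\ref{prop:weighted_opt_eq} to the limit, and then read off parts 1 and 3 from the geometric envelope and boundedness of $\alpha\mapsto\frac{1+\alpha}{1-\alpha}$ on $[-1+\delta,1-\delta]$. The shared ingredients are the same (mass bound, optimality conditions, the envelope $|m(k)|\le\mu([-1,1])(1-\delta)^{|k|}$, and uniform-in-$\alpha$ control of $\langle x_\alpha,r_M-\gamma\rangle$ obtained by head/tail splitting), but the mechanisms differ: the paper's route yields a quantitative upper bound on $\|\Pi(r_M;\delta)-\gamma\|^2$ that transfers cleanly to the weighted norm (which is why the paper organizes things that way), whereas your compactness-and-identification route is arguably more elementary for the fixed compact support $C=[-1+\delta,1-\delta]$ but is purely qualitative and leans on the uniqueness of the projection characterization at the identification step. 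Both are valid proofs of the stated theorem.
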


We extend this result in Theorem \ref{thm:weightedl1} in this paper to show that the convergence in 1. also holds with the $\|\cdot\|_2$ norm replaced with the $\|\cdot\|_1$ norm, and additionally that the spectral density estimates $\phi_{\delta M}$ based on $\Pi(r_M;\delta)$ converge uniformly to the true spectral density $\phi_{\gamma}$.

\section{Weighted moment least squares estimator}\label{sec: 3_weighted-momentLS}
\subsection{Motivation}
The least-squares objective $\sum_{k=-\infty}^\infty (r_M(k) - f(k))^2$ uses equal weights for each $r_M(k)$, and thus does not account for the unequal variances or dependencies between $r_M(k)$. However, there are non-zero covariances between $r_M(k)$ values, and the variances of $r_M(k)$ often differ significantly~\citep[e.g., Theorem 7.2.1][]{brockwell2009time}. This can lead to sub-optimal estimation of $\gamma$. 

To address this, we consider the transformation of an input sequence $r_M$ into the Fourier domain. It is well known that the Fourier transform of the periodogram ordinates $\tilde{r}_M(\omega_j),$ $j=0,1,...,\lfloor M/2\rfloor$, where $\omega_j=2\pi j/M, $ $j=0,1,...,M-1$ are the Fourier frequencies, behave approximately like independent exponential random variables with limiting variances equal to the squared true spectral density $\phi_\gamma$ at those Fourier frequencies~\citep[e.g.,][]{kokoszka2000periodogram,brockwell2009time,wu2011asymptotic}. 
On the other hand, the Fourier transform is an isometry. For any $f \in \ell_2(\Z,\C)$, we have 
\begin{align}\label{eq: isometry}
   \sum_{k=-\infty}^\infty (\tilde{r}_M(k) - f(k))^2=\frac{1}{2\pi}\int_{[-\pi,\pi]}\, |\widehat{\tilde{r}_{M}}(\omega)-\hat{f}(\omega)|^2\,d\omega
\end{align}
where $\widehat{\tilde{r}_{M}}$ and $\hat{f}$ refer to the Fourier transforms of $\tilde{r}_M$ and $f$, respectively (a precise definition of the Fourier transform for $\ell_2$ sequences will be provided shortly). These observations motivated us to work with the right-hand side of equation \eqref{eq: isometry} and propose a weighted norm by suitably weighting each $\widehat{\tilde{r}_{M}}(\omega)$ instead. Specifically, we propose to minimize a weighted objective
\begin{align}\label{eq: weighted_norm}
    \frac{1}{2\pi}\int_{[-\pi,\pi]}\, W(\omega)| \widehat{\tilde{r}_{M}}(\omega)-\hat{f}(\omega)|^2\,d\omega
\end{align}
with a weight function $W: [-\pi,\pi] \to \R^+$.  As for the weight function, we propose utilizing a suitable initial estimate of the spectral density so that each $\widehat{\tilde{r}_{M}}(\omega)$ is weighted based on the inverse of its limiting variance.

\subsection{Weighted inner product and weighted moment LSE}\label{subsec: 3_2_weighted_mLS}
In this subsection, we define a weighted inner product on $\ell_2(\mathbb{Z},\mathbb{C})$ such that the resulting induced norm corresponds to the weighted objective \eqref{eq: weighted_norm}.
To do so, we first introduce a few relevant definitions and notations in Fourier analysis. 
It is well known that there is a close connection between $2\pi$-periodic functions on $\R$ and sequences on $\Z$. 
Also, since all information on a $2\pi$-periodic function $f$ is contained in the values of $f$ on  $[-\pi,\pi]$, functions with period $2 \pi$ are functions on the quotient space $\T = \R\setminus 2\pi \Z = \{x+2\pi\Z; x\in \R\}$. Therefore, throughout the paper, we refer to $2\pi$-periodic functions and functions on $\T$ interchangeably. For functions $f$ on $\T$, we define the $L^p$ norm as $\|f\|_{L^p(\T)}=\left\{(2\pi)^{-1}\int_{[-\pi,\pi]} |f(x)|^p\,dx \right\}^{1/p}$, and denote the $L^p$ space as $L^p(\T)$. For $p=2$, we also define an inner product $(\cdot,\cdot)$ such that for $\hat{x},\hat{y} \in L^2(\T)$, $( \hat{x},\hat{y} ) = (2\pi)^{-1}\int_{[-\pi,\pi]}\hat{x}(\omega)\overline{\hat{y}(\omega)}d\omega.$ In addition, we use $C^k(\T)$ to denote the space of $2\pi$-periodic functions with $k$ continuous derivatives for $k\ge 0$. 

For $f\in \ell_2(\Z,\C)$, we define the Fourier transform $\hat{f}:\R \to \C$ of $f$ as
\begin{align}\label{def: Fourier_transform}
    \hat{f}(\omega)=\sum_{k\in\mathbb{Z}}e^{-ik\omega}f(k)
\end{align}    
where $\hat{f}(\omega) = \sum_{k\in\mathbb{Z}}e^{-ik\omega}f(k)$ denotes the $L^2(\T)$ limit of $S_N(\hat{f})(\omega)=\sum_{|k|\le N} e^{-i\omega k} f(k)$ as $N\to \infty$.
\begin{remark}
Since $e^{-ik(\omega+2\pi l)} = e^{-ik\omega}$ for any $l \in \Z$, $\hat{f}$ is a $2\pi$-periodic function on $\R$, i.e., $\hat{f} \in L^2(\T)$.
While the $L^2(\T)$-limit of $S_N(\hat{f})$ is well defined since $\{S_N(\hat{f})\}_N$ is a Cauchy sequence and $ L^2(\T)$ is a complete space, the pointwise limit of $S_N(\hat{f})(\omega)$ is not necessarily well-defined when $f$ is not absolutely summable. When $f\in \ell_1(\Z,\C)$, $S_N(\hat{f})$ converges for each $\omega$, and $\hat{f}$ is also the uniform limit of $S_N(\hat{f})$. See Section\ifnum\pageoption>1~\ref{supp_sec: 1_Fourier} \else ~S1 \fi in Supplementary Material for more details.     
\end{remark}

We define the subspace of $\ell_2(\Z,\mathbb{K})$ for $\mathbb{K}=\C$ or $\R$ consisting of even sequences, i.e., ${\ell}_2^{even}(\Z,\mathbb{K}) = \{x \in \ell_2(\Z,\mathbb{K}); x(k)= x(-k),\,\forall k \in \Z\}.$ While the Fourier transform of an $\ell_2$ sequence is complex-valued in general, the Fourier transform of an even, real-valued sequence is a real-valued, even function. For example, the spectral density function, which is the Fourier transform of the autocovariance sequence $\gamma$, is a real-valued, even function, as $\gamma(k) = \gamma(-k)$ for all $k \in \Z$.
\begin{lem}\label{lem: Fourier_even}
For $x \in {\ell}_2^{even}(\Z,\R)$, $\hat{x}$ is also a real-valued even function, i.e., $\Im(\hat{x}(\omega))=0$ and $\hat{x}(\omega) = \hat{x}(-\omega)$ for all $\omega$. 
\end{lem}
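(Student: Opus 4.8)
The plan is to establish both properties first for the truncated symmetric partial sums $S_N(\hat{x})$ and then transfer them to the $L^2(\T)$ limit $\hat{x}$, using that both real-valuedness and evenness are preserved under $L^2(\T)$ convergence.

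First I would rewrite the symmetric partial sum by pairing the $k$ and $-k$ terms. Using $x(-k)=x(k)$ and the reality of $x$,
\begin{align*}
S_N(\hat{x})(\omega) = \sum_{|k|\le N} e^{-ik\omega}x(k) = x(0) + \sum_{k=1}^N x(k)\left(e^{-ik\omega}+e^{ik\omega}\right) = x(0) + 2\sum_{k=1}^N x(k)\cos(k\omega).
\end{align*}
Since $x(k)\in\R$ and $\cos(k\omega)\in\R$, each $S_N(\hat{x})(\omega)$ is real-valued; and since $\omega\mapsto\cos(k\omega)$ is even, each $S_N(\hat{x})$ is an even function of $\omega$.

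Next I would pass to the limit. For the reality claim, because $\hat{x}$ is by definition the $L^2(\T)$-limit of $S_N(\hat{x})$ and $\Im(S_N(\hat{x}))=0$, the elementary bound $|\Im(z)|\le|z|$ gives $\|\Im(\hat{x})\|_{L^2(\T)} = \|\Im(\hat{x}-S_N(\hat{x}))\|_{L^2(\T)} \le \|\hat{x}-S_N(\hat{x})\|_{L^2(\T)}\to 0$, so $\Im(\hat{x})=0$ a.e. For the evenness claim, let $R$ denote the reflection $(R f)(\omega)=f(-\omega)$, which is an $L^2(\T)$-isometry by the change of variables $\omega\mapsto-\omega$; since $R\,S_N(\hat{x}) = S_N(\hat{x})$ for every $N$ and $R$ is continuous, $R\hat{x} = \lim_N R\,S_N(\hat{x}) = \lim_N S_N(\hat{x}) = \hat{x}$ in $L^2(\T)$, i.e.\ $\hat{x}(\omega)=\hat{x}(-\omega)$ a.e.

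The argument is essentially routine once one works in $L^2(\T)$; the only point requiring care is that the convergence defining $\hat{x}$ is in $L^2(\T)$ rather than pointwise, so the two symmetry properties are inherited only almost everywhere, and the statement ``for all $\omega$'' should be read for the canonical even, real representative furnished by the cosine series above (equivalently, when $x\in\ell_1(\Z,\C)$ the partial sums converge uniformly and the identities hold at every $\omega$). I do not anticipate a genuine obstacle; the main thing to get right is invoking the closedness of the real-valued and even subspaces of $L^2(\T)$ under $L^2$ limits, which the two displayed estimates make explicit.
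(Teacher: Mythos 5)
Your proposal is correct and follows essentially the same route as the paper: both proofs pair the $k$ and $-k$ terms using $x(k)=x(-k)$ to cancel the sine part and reduce $\hat{x}$ to a real cosine series. The only difference is that you handle the passage to the $L^2(\T)$ limit explicitly (via closedness of the real and even subspaces), whereas the paper manipulates the infinite series directly and leaves that step implicit.
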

\begin{proof} 
Note that 
\begin{align*}
    \hat{x}(\omega) = \sum_{k=-\infty}^\infty x(k)e^{-ik\omega} = \sum_{k=-\infty}^\infty x(k)\cos(wk) - i\sum_{k=-\infty}^\infty x(k) \sin(wk)=\sum_{k=-\infty}^\infty x(k)\cos(wk) 
\end{align*}
where we have $\sum_{k=-\infty}^\infty x(k) \sin(\omega k) = \sum_{k=1}^{\infty} x(-k) \sin(-\omega k)+\sum_{k=1}^{\infty} x(k) \sin(\omega  k)  = 0$ and we use the fact $\sin(\omega k) = -\sin(\omega k)$ and $x(k) = x(-k)$ for all $k \in \N$. 
\end{proof}

For $x,y \in \ell_2(\Z,\C)$,  we have from Parseval's identity (ref.\ifnum \pageoption>1~\ref{eq:parseval-2}\else~S-2\fi~in Supplementary Material) that the inner product of $x$ and $y$ coincides with the inner product of the Fourier transforms of $x$ and $y$, i.e., 
\begin{align}
\langle x, y \rangle = \sum_{k = -\infty}^\infty x(k) \overline{y(k)}=(2\pi)^{-1}\int_{[-\pi,\pi]}\hat{x}(\omega)\overline{\hat{y}(\omega)}d\omega = (\hat{x},\hat{y}),\label{eq:innerProduct}
\end{align} 
where $\hat{x}$ and $\hat{y}$ are the Fourier transforms of $x$ and $y$.  
Now, for a $2\pi$-periodic and continuous $\phi:\R\to\mathbb{R}^+$ satisfying $0< c_0\le \phi(\omega) \le c_1 <\infty$ for all $\omega\in[-\pi,\pi]$ and $x,y \in \ell^2(\Z,\C)$, we define the \textit{$\phi$-weighted} inner product between $x$ and $y$ as
\begin{align*}
\braket{x,y}_{\phi}&=(\hat{x}/\phi,\hat{y}/\phi)=\frac{1}{2\pi}\int_{[-\pi,\pi]}\frac{\hat{x}(\omega)\overline{\hat{y}(\omega)}}{|\phi(\omega)|^2}d\omega
\end{align*} 
and denote the induced norm by $\|x\|_{\phi}=\braket{x,x}_{\phi}^{1/2}$.

For reference later, we summarize our requirements on a valid weight function $\phi$ as below:
\begin{assumption}[Assumption on weight function]\label{cond:phi}
$\phi:\R\to\mathbb{R}^+$ is an even, continuous, and $2\pi$-periodic function satisfying $0< c_0\le \phi(\omega) \le c_1 <\infty$ for all $\omega\in[-\pi,\pi]$
\end{assumption}

\begin{remark}[weighted inner product]\label{rmk: weighted_inner_product}
For a weight function $\phi$ satisfying Assumption~\ref{cond:phi} below, it is easy to check that $\braket{x,y}_{\phi}$ is a valid inner product on $\ell_2(\Z,\C)$. Also, $\|x\|_{\phi}$ and $\|x\|_2$ are equivalent norms on $\ell_2(\Z,\C)$ since $c_1^{-2}\|x\|_2 \le \|x\|_{\phi} \le c_0^{-2}\|x\|_2$.
\begin{enumerate}
 \setlength{\itemsep}{0pt}
    \item conjugate symmetry: $\langle x, y\rangle_\phi= (\hat{x}/\phi, \hat{y}/\phi) = \overline{(\hat{y}/\phi, \hat{x}/\phi)} = \overline{\langle y, x\rangle}_\phi$
    \item linearity in the first term: $\langle a x+b y, z\rangle_\phi =( a \hat{x}/\phi+b \hat{y}/\phi, \hat{z}/\phi)  = a( \hat{x}/\phi, \hat{z}/\phi)+b( \hat{y}/\phi, \hat{z}/\phi) = a\langle x, z\rangle_\phi+b\langle y, z\rangle_\phi$
    \item  non-negativity: $\langle x, x\rangle_\phi =( \hat{x}/\phi, \hat{x}/\phi) =  \frac{1}{2\pi} \int_{[-\pi,\pi]}\frac{|\hat{x}(\omega)|^2}{|\phi(\omega)|^2} d\omega \ge 0$.
    \item non-degeneracy:  $0 = \langle x, x\rangle_\phi =  \frac{1}{2\pi} \int_{[-\pi,\pi]}\frac{|\hat{x}(\omega)|^2}{|\phi(\omega)|^2} d\omega$ iff $x= 0$ as $|\phi(\omega)|^2 \ge c_0^2 >0 $ by assumption (ref. Lemma\ifnum \pageoption>1~\ref{lem:a_ahat}\else~S-3\fi~in Supplementary Material\ifnum \pageoption>1~\ref{supp_sec: 1_Fourier}\else~S1\fi). 
\end{enumerate}
\end{remark}
\begin{remark}[weighted inner product for even sequences]
As an easy consequence of Lemma \ref{lem: Fourier_even}, for two even sequences $x,y \in {\ell}_2^{even}(\Z,\R)$ and a $2\pi$-periodic $\phi:\R\to\mathbb{R}^+$, we have 
\begin{align}\label{eq:l2even_innerProduct_sym}
\langle x, y \rangle_\phi  = \langle y, x \rangle_\phi    
\end{align}
 since $ \langle x, y \rangle_\phi = (2\pi)^{-1} \int \hat{x}(\omega)\overline{\hat{y}(\omega)} / |\phi(\omega)|^2 d\omega = \int \overline{\hat{x}(\omega)}\hat{y}(\omega) / |\phi(\omega)|^2 d\omega = \langle y, x \rangle_\phi$
due to $\Im(\hat{x}(\omega)) = 0 $ and $\Im(\hat{y}(\omega)) = 0$.  
\end{remark}


In analogy to weighted least squares regression, for an input autocovariance sequence $r_M \in \ell_2(\Z,\C)$, weight function $\phi$ satisfying~\ref{cond:phi}, and a closed set $C\subseteq [-1,1]$, we define a weighted moment LSE as\begin{align}
\Pi^{\phi}(r_{M};C)=\underset{f\in \mathscr{M}_\infty(C) \cap \ell_2(\Z,\R)}{\arg\min}\|r_{M}-f\|_{\phi}^{2}~\label{eq: weightedMLSE}
\end{align} which corresponds to minimizing $\|r_{M}-f\|_{\phi}^{2}$
over $f$ in the square-summable $C$-\textit{moment} sequence space, where we recall that the $C$-moment sequence space is the set of moment sequences $f$ such that $f = \int x_\alpha\, \mu_f(d\alpha) $ for some $\mu_f$, as defined in \eqref{def: C_moment_space}.
Given the one-to-one correspondence between the moment sequence $f$ and representing measure $\mu_f$, we solve \eqref{eq: weightedMLSE} by optimizing over measures $\mu_f$. We provide a detailed discussion of the optimization procedure for \eqref{eq: weightedMLSE} in Section \ref{subsec: 3_4 computation}.

\begin{remark}[weighted least squares objective]
By definition, $\|r_{M}-f\|_{\phi}^{2} = \langle r_M- f, r_M- f\rangle_\phi = ((\hat{r}_M- \hat{f})/\phi, (\hat{r}_M- \hat{f})/\phi) = \frac{1}{2\pi}\int |\frac{\hat{r}_M(\omega)-\hat{f}(\omega)}{\phi(\omega)}|^2 d\omega$. Therefore, the objective \eqref{eq: weightedMLSE} essentially corresponds to estimating $\hat{f}$ under the constraint that the Fourier series $f$ of $\hat{f}$ obeys a certain shape constraint or mixture representation. Moreover, when the input sequence $r_M$ is even, all of $\hat{r}_M$, $\hat{f}$, and $\phi$ are real-valued, and the objective becomes
\begin{align}\label{def: weighted_obj}
\|r_{M}-f\|_{\phi}^2=\frac{1}{2\pi}\int_{[-\pi,\pi]}\left(\frac{\hat{r}_{M}(\omega)-\hat{f}(\omega)}{\phi(\omega)}\right)^2d\omega.
\end{align} 
This is convenient because the optimization can be performed entirely in the real plane. As both $r_M(k)$ and $r_M(-k)$ are estimates for the $k$th lag autocovariance, and $\gamma(k)=\gamma(-k)$, there seems to be no practical advantage to considering non-even input sequences. Therefore, for the remainder of paper, we will will focus on even input sequences, i.e., sequences $r_M$ such that $r_M(k) = r_M(-k)$ for all $k\in \Z$. Similar objectives to \eqref{def: weighted_obj} have previously been considered in the time series literature for estimating the spectral density of autoregressive (AR) or parametric models as in ~\citet{shibata1981optimal} and~\citet{chiu1988weighted}. Additionally,~\citet{stoica2000ma} considers estimation of moving average (MA) models based on weighted least squares fitting to the empirical autocovariance sequence.
\end{remark}

The weighted optimization problem \eqref{eq: weightedMLSE} requires specifying both $\phi$ and $C$.  For the weight function $\phi$, we propose using an estimate of $\phi(\omega)$ such that $\phi(\omega)^2$ approximates the asymptotic variance of $\hat{r}_M(\omega)$. For instance, when $r_M=\tilde{r}_M$ (empirical autocovariance), we use an estimate of the spectral density function $\phi_\gamma$, such as the one obtained from the unweighted momentLS estimator. 

The set $C$ is related to the size of the moment sequence space we work with, as for $C_1 \subseteq C_2$, we have $\mathscr{M}_{\infty}(C_1)\subseteq \mathscr{M}_{\infty}(C_2)$. Intuitively, using a smaller search space is preferable, as long as the true $\gamma$ is a feasible point in the search space. In this paper, we will mainly consider a closed interval in $(-1,1)$ for $C$, e.g., $C = [-1+\delta,1-\delta]$ for some $\delta>0$. 
For $C=[-1+\delta,1-\delta]$, the objective \eqref{eq: weightedMLSE} includes a hyperparameter $\delta$ that determines the endpoints of $C$. 
While the ``optimal" choice of $\delta$ depends on the representing measure of $\gamma$,  we do not require knowledge of the optimal $\delta$ in both our theoretical analysis and practical use of the proposed estimator. Our theoretical guarantees hold for any sufficiently small $\delta>0$ that ensures $\gamma$ is a feasible point of the objective \eqref{eq: weightedMLSE}. In practice, we adopt a data-driven method for tuning $\delta$, as proposed by \citet{berg2023efficient} (see Section \ref{sec: 5_empirical} for more details). 

\paragraph{Asymptotic variance and spectral density estimates from momentLS estimates}
For a (weighted or unweighted) moment LS estimator $\Pi(r_M) =\Pi^\phi(r_M;C)$ or $\Pi(r_M)=\Pi(r_M;C)$ for $\gamma$, we propose plug-in estimators for the asymptotic variance $\sigma^2 = \sum_{k \in \Z} \gamma(k)$ and spectral density $\phi_\gamma(\omega) = \sum_{k\in\Z} \gamma(k) e^{-i\omega k}$ based on $\Pi(r_M)$. 

Specifically, given an input sequence $r_M \in \ell_2(\Z,\R)$, we define the estimated spectral density $\phi_M$ based on the moment LS estimator $\Pi(r_M)$ as the Fourier transform of $\Pi(r_M)$, i.e.,
\begin{align}\label{def: specden_mLS}
    \phi_M(\omega) = \sum_{k\in\Z} \Pi(r_M)(k) e^{-i\omega k}
\end{align}
where we note that the Fourier transform of $\Pi(r_M)$ is well-defined since $\Pi(r_M) \in \ell_2(\Z,\R)$ by the definition of moment LS estimators for $r_M \in \ell_2(\Z,\R)$.
Furthermore, when $r_M \in \ell_1(\Z,\R)$, we define the asymptotic variance estimator $\sigma^2(\Pi(r_M))$ based on $\Pi(r_M)$ as
\begin{align}\label{def: asymp_var_mLS}
   \sigma^2(\Pi(r_M))= \sum_{k\in \Z}\Pi(r_{M}) (k) 
\end{align}
which is well-defined since it can be shown that $\Pi(r_M)$ is absolutely summable when $r_M \in \ell_1(\Z,\R)$ (ref. Lemma\ifnum \pageoption>1~\ref{lem: l1projection}\else~S-8\fi~in Supplementary Material Section\ifnum \pageoption>1~\ref{supp_sec: 2_lemmas}\else~S2\fi). We note that any finitely supported sequence $r_M$ is automatically in $\ell_1(\Z,\R)$. Therefore both the spectral density and asymptotic variance estimators are well-defined for momentLS estimators with $r_M$ satisfying \ref{cond:rM_pwConvergence}--\ref{cond:rM:even}.

We now comment on the practical computation of \eqref{def: specden_mLS} and \eqref{def: asymp_var_mLS}, which involve infinite sums. From performing the minimization of the objective \eqref{eq: weightedMLSE}, we obtain the representing measure of $\hat{\mu}$ of $\Pi(r_M)$. Disregarding technicalities related to exchanging integration order or handling infinite sums in Fourier transforms, we have
$\sum_{k\in\Z} \Pi(r_M) (k) e^{-i\omega k} = \sum_{k\in\Z}  \int \alpha^{|k|}\hat{\mu}(d\alpha)e^{-i\omega k} =   \int \sum_{k\in\Z} \alpha^{|k|}e^{-i\omega k} \hat{\mu}(d\alpha) =   \int K(\alpha,\omega) \hat{\mu}(d\alpha)$
where $K(\alpha,\omega)=\frac{1-\alpha^2}{1-2\alpha\cos\omega+\alpha^2}$ is the Fourier transform of $x_{\alpha}: k \to \alpha^{|k|}$ (ref. Lemma\ifnum \pageoption>1~\ref{lem: PoissonKernel}\else~S-5\fi~in Supplementary Material Section\ifnum \pageoption>1~\ref{supp_sec: 2_lemmas}\else~S2\fi). Similarly,
$
    \sum_{k\in \Z}\Pi(r_M)(k) = \sum_{k\in \Z} \int \alpha^{|k|}\,\hat{\mu}(d\alpha) = \int \sum_{k\in \Z} \alpha^{|k|} \hat{\mu}(d\alpha) =  \int \frac{1+\alpha}{1-\alpha} \,\hat{\mu}(d\alpha).
$
Thus, we can compute $\phi_M$ and $\sigma^2(\Pi(r_M))$ by 
\begin{equation}\label{def: momentLS_est2}
     \phi_M(\omega ) = \int K(\alpha,\omega) \hat{\mu}(d\alpha) \quad \mbox{ and } \quad \sigma^2(\Pi(r_M))  = \int \frac{1+\alpha}{1-\alpha} \hat{\mu}(d\alpha)
\end{equation}
A formal justification of \eqref{def: momentLS_est2} is provided in Lemma\ifnum \pageoption>1~\ref{lem: fourier_xa}\else~S-6\fi~in Supplementary Material\ifnum \pageoption>1~\ref{supp_sec: 2_lemmas}\else~S2\fi~and Lemma 5 from \citet{berg2023efficient}.

\subsection{Characterization of weighted Moment LSE}
In this subsection, we provide some characterizations of the weighted moment least squares (LS) estimator. Specifically, we demonstrate that the minimizer of \eqref{eq: weightedMLSE} is unique and is characterized by two necessary and sufficient optimality conditions presented in Proposition \ref{prop:weighted_opt_ineq}. Furthermore, while the minimization is performed over the function space of (infinite) scale mixtures of exponentials $\int \alpha^{|k|} \mu (d\alpha)$, we show that the mixing measure $\hat{\mu}_C$ of the minimizer $\Pi^{\phi}(r_{M};C)$ is finite, with a support size bounded by the number of non-zero points in the input sequence.

The finite support property of the estimated mixing measure, as shown in Proposition \ref{prop:finiteSupport}, is a common feature in non-parametric mixture modeling and has been established in various settings \citep{jewell1982mixtures,lindsay1983geometry, lindsay1993uniqueness, Balabdaoui2020-nn}. Our approach is related to that in \citet{lindsay1993uniqueness}, who studied the properties of non-parametric maximum likelihood estimators of mixing distributions by leveraging the properties of totally positive kernels. In particular, we show that the Poisson kernel, defined as $K(\alpha,\omega) = \frac{1-\alpha^2}{1-2\alpha \cos(\omega) + \alpha^2}$, is a strictly totally positive kernel on $(-1,1) \times [-\pi, 0]$. This allows us to control the number of support points in the mixing measure of the minimizer based on the number of sign changes in $\widehat{\Pi^{\phi}(r;C)}(\omega)-\hat{r}(\omega)$. We believe that the total positivity of the Poisson kernel (ref. Corollary \ref{cor:tp_K}) has not been previously shown in the literature and may be of independent interest. 

First of all, we present the following Proposition \ref{prop: existence_and_uniqueness} which addresses the uniqueness of the weighted moment LSE, as well as its representing measure. Proposition \ref{prop: existence_and_uniqueness} essentially follows from the Hilbert space projection theorem and the equivalence between the weighted norm $\|\cdot\|_\phi$ and the standard $\ell_2$ norm $\|\cdot\|_2$ under Assumption  \ref{cond:phi}. We defer the proof to Section\ifnum \pageoption>1~\ref{supp_sec: 3_1_unique}\else~S3.1\fi~in the Supplementary Material.
\begin{prop}\label{prop: existence_and_uniqueness}
Suppose $r \in {\ell}_2^{even}(\mathbb{Z},\R)$ and $\phi$ satisfies Assumption \ref{cond:phi}. Let $C$ be a closed subset of $[-1,1]$. The minimizer $\Pi^{\phi}(r;C)$ of \eqref{eq: weightedMLSE} exists and is unique. Moreover, the representing measure $\mu \in \mathcal{M}_\R$ for $\Pi^{\phi}(r;C)$ such that $\Pi^{\phi}(r;C) = \int x_\alpha \,\mu(d\alpha)$ is unique.
\end{prop}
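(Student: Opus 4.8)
The plan is to exhibit $\Pi^{\phi}(r;C)$ as the metric projection of $r$ onto a closed convex set in a real Hilbert space, and then transfer uniqueness of the representing measure through the known moment-sequence correspondence. Since every $x_\alpha$ is even, the feasible set $K := \mathscr{M}_\infty(C)\cap\ell_2(\Z,\R)$ is contained in $H := {\ell}_2^{even}(\Z,\R)$, and $r\in H$ by hypothesis, so the entire problem lives in $H$. First I would record that on $H$ the form $\braket{\cdot,\cdot}_\phi$ is a genuine real, symmetric inner product by \eqref{eq:l2even_innerProduct_sym}, and that the induced norm satisfies the two-sided bound $c_1^{-2}\|\cdot\|_2\le\|\cdot\|_\phi\le c_0^{-2}\|\cdot\|_2$ from Remark \ref{rmk: weighted_inner_product}, which holds under Assumption \ref{cond:phi}. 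Because $(\ell_2(\Z,\R),\|\cdot\|_2)$ is complete and equivalent norms share the same Cauchy sequences and limits, $(H,\braket{\cdot,\cdot}_\phi)$ is a real Hilbert space.

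Next I would verify that $K$ is a nonempty, closed, convex subset of $H$. Nonemptiness is immediate ($0\in K$ via the zero measure), and convexity follows since a convex combination $\lambda\mu_1+(1-\lambda)\mu_2$ of representing measures supported on $C$ is again a nonnegative measure on $C$, so $\lambda f_1+(1-\lambda)f_2\in K$. The only delicate point is closedness. Since the norm equivalence implies $K$ is $\|\cdot\|_\phi$-closed if and only if it is $\|\cdot\|_2$-closed, it suffices to argue closedness in the ordinary $\ell_2$ topology, which is exactly the property underpinning the well-definedness of the unweighted estimator in \citet{berg2023efficient}. For completeness I would reprove it: given $f_n\in K$ with $f_n\to f$ in $\ell_2$, each $f_n$ has a representing measure $\mu_n$ on the compact set $C\subseteq[-1,1]$ with total mass $\mu_n(C)=f_n(0)\to f(0)$, hence uniformly bounded; by the Helly/Banach--Alaoglu selection theorem some subsequence converges weak-$*$ to a nonnegative measure $\mu$ on $C$, and since $\alpha\mapsto\alpha^{|k|}$ is continuous on $C$ we obtain $f(k)=\lim_n f_n(k)=\int\alpha^{|k|}\,\mu(d\alpha)$ for every $k$, so $f\in\mathscr{M}_\infty(C)$; combined with $f\in\ell_2$ (completeness) this gives $f\in K$.

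With $H$ a real Hilbert space and $K\subseteq H$ nonempty, closed, and convex, the Hilbert space projection theorem yields a unique minimizer of $f\mapsto\|r-f\|_\phi^2$ over $K$, which is precisely $\Pi^{\phi}(r;C)$; this settles existence and uniqueness of the estimator. Finally, uniqueness of the representing measure follows from the one-to-one correspondence between a moment sequence and its representing measure (Proposition 2 in \citealp{berg2023efficient}): since $\Pi^{\phi}(r;C)$ is a uniquely determined element of $\mathscr{M}_\infty(C)$, injectivity of the map $\mu\mapsto\int x_\alpha\,\mu(d\alpha)$ forces its representing measure $\mu\in\mathcal{M}_\R$ to be unique. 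I expect the closedness of $K$ to be the only genuinely nontrivial step, as everything else is a routine transfer of the standard projection argument to the equivalent weighted norm.
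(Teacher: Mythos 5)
Your proof is correct and follows essentially the same route as the paper's: both reduce the problem to the Hilbert space projection theorem onto a closed convex set, using the equivalence of $\|\cdot\|_{\phi}$ and $\|\cdot\|_2$ under Assumption~\ref{cond:phi}, and both obtain uniqueness of the representing measure from the one-to-one correspondence in Proposition 2 of \citet{berg2023efficient}. The only difference is that you spell out the $\ell_2$-closedness of $\mathscr{M}_{\infty}(C)\cap\ell_2(\mathbb{Z},\R)$ via a Helly selection argument, whereas the paper takes this as known from the unweighted setting.
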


We start with a characterization of the weighted moment LSE $\Pi^{\phi}(r;C)$, in Proposition \ref{prop:weighted_opt_ineq}. 
This characterization essentially follows from the fact that $\Pi^{\phi}(r;C)$ minimizes the weighted inner product norm. In Proposition~\ref{prop:weighted_opt_eq}, we show that $\Pi^{\phi}(r;C)$ and $r$ give the same inner product with $x_{\alpha}$, for $\alpha\in \text{Supp}(\mu_{C}^{\phi})$.
We omit the proofs of Propositions \ref{prop:weighted_opt_ineq} and \ref{prop:weighted_opt_eq}, as they follow similar approaches to Proposition 2.2 in~\citet{Balabdaoui2020-nn} and Proposition 5 in~\citet{berg2023efficient}, respectively.

\begin{prop}\label{prop:weighted_opt_ineq}
Suppose $r \in {\ell}_2^{even}(\mathbb{Z},\R)$ and $\phi$ satisfies Assumption \ref{cond:phi}. Let $C$ be a closed subset of $[-1,1]$. Then for $f \in \mathscr{M}_{\infty}(C) \cap \ell_2(\mathbb{Z})$, we have $f=\Pi^\phi(r ; C)$ if and only if
\begin{enumerate}
    \item for all $\alpha \in C \cap(-1,1),\left\langle f, x_\alpha\right\rangle_\phi \geq\left\langle r, x_\alpha\right\rangle_\phi$,
    \item $\langle f,f\rangle_\phi = \langle r,f\rangle_\phi$.
\end{enumerate}
\end{prop}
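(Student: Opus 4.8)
The plan is to recognize $\Pi^\phi(r;C)$ as the metric projection of $r$ onto the closed convex cone $K := \mathscr{M}_\infty(C)\cap \ell_2(\mathbb{Z},\mathbb{R})$, taken in the real Hilbert space obtained by equipping ${\ell}_2^{even}(\mathbb{Z},\mathbb{R})$ with the weighted inner product $\langle\cdot,\cdot\rangle_\phi$, and then to read off the two stated conditions as the variational (Karush--Kuhn--Tucker-type) characterization of that projection. Since $r$ is even and every moment sequence is even, and $\langle\cdot,\cdot\rangle_\phi$ is symmetric and real-valued on ${\ell}_2^{even}(\mathbb{Z},\mathbb{R})$ by \eqref{eq:l2even_innerProduct_sym} together with Lemma~\ref{lem: Fourier_even}, we are genuinely in a real Hilbert space. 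First I would verify that $K$ is a convex cone: it is closed under addition and nonnegative scaling because the sum of two scale mixtures of exponentials is again such a mixture (the representing measures add) and scaling by $c\ge 0$ multiplies the representing measure by $c$; moreover each generator $x_\alpha$ with $\alpha\in C\cap(-1,1)$ lies in $\ell_2$, whereas an $\ell_2$ moment sequence cannot have a representing measure charging $\{-1,1\}$, since an atom there would force $|g(k)|\not\to 0$. Existence and uniqueness of the minimizer are already supplied by Proposition~\ref{prop: existence_and_uniqueness}, so only the equivalence with conditions 1 and 2 remains.

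For the forward implication I would exploit two one-parameter families of feasible perturbations of the optimum $f=\Pi^\phi(r;C)$, using that $t\mapsto \|r-f_t\|_\phi^2$ attains its minimum at the optimum. Scaling radially along the cone, $f_t=(1+t)f\in K$ for $t\ge -1$; expanding $\|r-(1+t)f\|_\phi^2$ and differentiating at $t=0$ yields $\langle r,f\rangle_\phi=\langle f,f\rangle_\phi$, which is condition 2. Perturbing by a single generator, $f_s=f+s\,x_\alpha\in K$ for $s\ge 0$ and $\alpha\in C\cap(-1,1)$, the right derivative of $\|r-f-s x_\alpha\|_\phi^2$ at $s=0$ must be nonnegative, giving $\langle r-f,x_\alpha\rangle_\phi\le 0$, i.e.\ condition 1. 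The symmetry of $\langle\cdot,\cdot\rangle_\phi$ on even sequences is precisely what lets the cross terms combine cleanly in both expansions.

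For the reverse implication I would assume $f\in K$ satisfies 1 and 2 and show it minimizes $\|r-g\|_\phi^2$ over $g\in K$. Writing $g=\int x_\alpha\,\mu_g(d\alpha)$ with $\mu_g$ supported on $C$, the key inequality is $\langle r-f,\,g-f\rangle_\phi\le 0$; the expansion $\|r-g\|_\phi^2=\|r-f\|_\phi^2-2\langle r-f,g-f\rangle_\phi+\|g-f\|_\phi^2$ then gives $\|r-g\|_\phi^2\ge\|r-f\|_\phi^2$, with equality only at $g=f$ (which re-derives uniqueness). Condition 2 kills the contribution $\langle r-f,f\rangle_\phi=\langle r,f\rangle_\phi-\langle f,f\rangle_\phi=0$, and for the remaining term I would pass the inner product inside the mixing integral, $\langle r-f,g\rangle_\phi=\int\langle r-f,x_\alpha\rangle_\phi\,\mu_g(d\alpha)$, and apply condition 1 pointwise in $\alpha$.

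The main obstacle will be justifying this last interchange of the weighted inner product with the integral against $\mu_g$. This requires a Fubini/dominated-convergence argument for the representation $g=\int x_\alpha\,\mu_g(d\alpha)$ under $\langle\cdot,\cdot\rangle_\phi$, together with the observation that the representing measure of an $\ell_2$ moment sequence places no mass on $\{-1,1\}$, so that condition 1 (stated only for $\alpha\in C\cap(-1,1)$) holds $\mu_g$-almost everywhere and the integrand is nonpositive throughout the effective domain of integration. Once this measurability-and-summability bookkeeping is in place, the argument reduces to the standard convex-cone projection computation, paralleling Proposition~2.2 of \citet{Balabdaoui2020-nn}.
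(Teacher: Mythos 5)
Your proposal is correct and follows essentially the same route the paper intends: the paper omits this proof, deferring to the standard closed-convex-cone projection characterization of Proposition~2.2 in \citet{Balabdaoui2020-nn}, which is exactly your perturbation argument for the forward direction and the variational inequality $\langle r-f, g-f\rangle_\phi\le 0$ for the converse. The interchange of the weighted inner product with the mixing integral that you flag as the main obstacle is precisely what the paper supplies as Lemma~\ref{lem:inner_product_rep}, and the no-mass-at-$\{-1,1\}$ fact is Lemma~2 of \citet{berg2023efficient}, so no genuine gap remains.
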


\begin{prop}\label{prop:weighted_opt_eq}
Suppose $r \in {\ell}_2^{even}(\mathbb{Z},\R)$ and $\phi$ satisfies Assumption \ref{cond:phi}. Let $C$ be a closed subset of $[-1,1]$. Let $\mu^\phi_{ C}$ denote the representing measure for $\Pi^\phi(r ; C)$. 
Then for each $\alpha \in \operatorname{Supp}(\mu^\phi_{ C}) \cap(-1,1)$, we have
$$
\left\langle\Pi^\phi (r ; C), x_\alpha\right\rangle_\phi=\left\langle r, x_\alpha\right\rangle_\phi.
$$
\end{prop}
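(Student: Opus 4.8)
The plan is to follow the template of Proposition 5 in \citet{berg2023efficient}, reducing the claim to the statement that a continuous, nonnegative function vanishes on $\operatorname{Supp}(\mu^\phi_C)$. Write $f = \Pi^\phi(r;C)$, let $\mu = \mu^\phi_C$ be its representing measure so that $f = \int x_\alpha\,\mu(d\alpha)$, and define
\[
h(\alpha) = \langle f, x_\alpha\rangle_\phi - \langle r, x_\alpha\rangle_\phi, \qquad \alpha \in C \cap (-1,1).
\]
By the first condition in Proposition~\ref{prop:weighted_opt_ineq}, $h(\alpha) \ge 0$ for every $\alpha \in C\cap(-1,1)$, so the goal is to upgrade this inequality to an equality at each support point.

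First I would show $\int_{C} h\,d\mu = 0$. Using that $K(\alpha,\cdot)$ is the Fourier transform of $x_\alpha$, linearity gives $\hat f(\omega) = \int K(\alpha,\omega)\,\mu(d\alpha)$ (as in the computation preceding \eqref{def: momentLS_est2}), and together with $\langle f, x_\alpha\rangle_\phi = (2\pi)^{-1}\int_{[-\pi,\pi]} \hat f(\omega)K(\alpha,\omega)/\phi(\omega)^2\,d\omega$ I would interchange the $\mu$-integral with the $\omega$-integral to obtain $\int_C \langle f, x_\alpha\rangle_\phi\,\mu(d\alpha) = \langle f, f\rangle_\phi$ and, likewise, $\int_C\langle r, x_\alpha\rangle_\phi\,\mu(d\alpha) = \langle r, f\rangle_\phi$. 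For the $f$-term the interchange is immediate from Tonelli's theorem: since $K(\alpha,\omega) = (1-\alpha^2)/(1-2\alpha\cos\omega+\alpha^2) > 0$ for $\alpha\in(-1,1)$, one has $\hat f \ge 0$ and the full integrand $\hat f(\omega)K(\alpha,\omega)/\phi(\omega)^2$ is nonnegative. For the $r$-term I would invoke Fubini's theorem after checking absolute integrability, which follows from $\phi \ge c_0$ and the Cauchy–Schwarz bound $\int_{[-\pi,\pi]}|\hat r(\omega)|\hat f(\omega)\,d\omega \le 2\pi\,\|\hat r\|_{L^2(\T)}\|\hat f\|_{L^2(\T)} < \infty$, valid since $r,f\in\ell_2$. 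Subtracting and using the optimality equation $\langle f, f\rangle_\phi = \langle r, f\rangle_\phi$ from the second condition in Proposition~\ref{prop:weighted_opt_ineq} yields $\int_C h\,d\mu = 0$. As $h\ge 0$ on $C\cap(-1,1)$ and $\mu$ is nonnegative, this forces $h = 0$ $\mu$-almost everywhere.

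It remains to promote the $\mu$-a.e. statement to a pointwise one on the support. Setting $g = f - r \in \ell_2^{even}(\Z,\R)$, we have $h(\alpha) = \langle g, x_\alpha\rangle_\phi = (2\pi)^{-1}\int_{[-\pi,\pi]}\hat g(\omega)K(\alpha,\omega)/\phi(\omega)^2\,d\omega$. For $\alpha$ in a compact subinterval $[-1+\epsilon,1-\epsilon]$ the Poisson kernel is continuous in $\alpha$ and uniformly bounded, $K(\alpha,\omega) \le (1+|\alpha|)/(1-|\alpha|) \le (2-\epsilon)/\epsilon$; since $|\hat g|/\phi^2 \le |\hat g|/c_0^2 \in L^1(\T)$ dominates, the dominated convergence theorem shows $h$ is continuous on $(-1,1)$. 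Finally, fix $\alpha_0\in\operatorname{Supp}(\mu)\cap(-1,1)$: if $h(\alpha_0) > 0$, continuity furnishes a neighborhood $U$ on which $h > h(\alpha_0)/2$, and $\mu(U) > 0$ by definition of the support, giving $\int_U h\,d\mu > 0$ and contradicting $\int_C h\,d\mu = 0$. Hence $h(\alpha_0) = 0$, which is exactly $\langle \Pi^\phi(r;C), x_{\alpha_0}\rangle_\phi = \langle r, x_{\alpha_0}\rangle_\phi$.

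The main obstacle is the rigorous interchange of integration in the second step, together with the continuity argument in the third; both hinge on properties of the Poisson kernel, namely its strict positivity on $(-1,1)\times[-\pi,\pi]$ (which makes Tonelli available) and its uniform boundedness on compact subintervals of $(-1,1)$ (which supplies a dominating function). The weight $1/\phi^2$ introduces no genuine difficulty, since Assumption~\ref{cond:phi} confines it to $[c_1^{-2},c_0^{-2}]$, so the argument runs parallel to the unweighted case in \citet{berg2023efficient}.
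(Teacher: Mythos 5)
Your argument is correct and is essentially the proof the paper intends: it omits the details but points to Proposition 5 of \citet{berg2023efficient}, whose template is exactly your scheme of showing the nonnegative gradient function $h(\alpha)=\langle f-r,x_\alpha\rangle_\phi$ integrates to zero against $\mu^\phi_C$ (via the interchange of integrals, which is the paper's Lemma on $\langle f,g\rangle_\phi=\int\langle x_\alpha,g\rangle_\phi\,F(d\alpha)$ in the supplement) and then vanishes on the support by continuity. The only detail worth making explicit is that the $\mu$-integrals may a priori charge $\{\pm1\}$, where $x_{\pm1}\notin\ell_2$ and $K(\pm1,\cdot)$ degenerates; this is dispatched by $\mu^\phi_C(\{-1,1\})=0$ (Lemma 2 of \citet{berg2023efficient}, invoked for the same purpose in the paper's supplementary Lemma on the Fourier transform of a moment sequence), after which your Tonelli/Fubini and dominated-convergence steps go through as written.
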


In Proposition~\ref{prop:finiteSupport} below, we show that for an input sequence $r$ with $|r(k)|=0$ for $|k|>M-1$, the support of the representing measure $\mu_{C}^{\phi}$ for $\Pi^{\phi}(r;C)$ is discrete and finite, with the maximum possible number of support points depending on $M$. We use the variation diminishing property of totally positive kernels, which essentially states that if a kernel $K(x,\theta): \mathcal{X}\times \Theta \to \R$ is totally positive (i.e., the determinant of the matrix $(K(x_i, \theta_j))_{i,j=1}^n$ is positive for any choice of $n$, $x_1<\dots<x_n$, $x_i\in \mathcal{X},$ $i=1,...,n$ and $\theta_1<\dots<\theta_n$, $\theta_i\in\Theta$, $i=1,...,n$),  then the number of sign changes in the function $g(\theta) = \int K(x,\theta) f(\theta) d\theta$ is controlled by the number of sign changes in the function $f$ \citep{schoenberg1930variationsvermindernde, karlin1968total}.

We give a sketch of the proof approach here. Let $g:(-1,1)\to \mathbb{R}$ be defined by \begin{align}
g(\alpha)=\braket{\Pi^{\phi}(r;C)-r,x_{\alpha}}_{\phi}=(2\pi)^{-1}\int_{[-\pi,\pi]}K(\alpha,\omega)\frac{\widehat{\Pi^{\phi}(r;C)}(\omega)-\hat{r}(\omega)}{\phi(\omega)^2}\,d\omega,\label{eq:gradient}
\end{align} where $K(\alpha,\omega)=\frac{1-\alpha^2}{1-2\alpha\cos\omega+\alpha^2}$ denotes the Fourier transform of $x_{\alpha}$. From Proposition~\ref{prop:weighted_opt_eq}, the number of support points of $\mu_{C}^{\phi}$ in $(-1,1)$ is smaller than or equal to the number of zeroes of $g(\alpha)$ on $(-1,1)$. Direct analysis of~\eqref{eq:gradient} appears to be difficult, since we assume little about the weight function $\phi$ apart from positivity and continuity. Instead, we first show that $K(\alpha,\omega)$ is a strictly totally positive kernel, and then we use the resulting variation diminishing property of $K(\alpha,\omega)$ to control the number of zeroes of $g$ by the number of sign changes of the function $\frac{\widehat{\Pi^{\phi}(r;C)}(\omega)-\hat{r}(\omega)}{\phi(\omega)^2}$, which is shown to be a continuous function of $\omega$. 

The following Lemma \ref{lem:tp_K} and Corollary \ref{cor:tp_K} concern total positivity of the kernels $\tilde{K}: (-1,1) \times [-1,1] \to \R$ such that $\tilde{K}(\alpha,x) = \frac{1-\alpha^2}{1+\alpha^2 -2\alpha x}$ and $K:(-1,1)\times[-\pi,0] \to \R$ such that $K(\alpha,\omega) = \tilde{K}(\alpha,\cos(\omega))=\frac{1-\alpha^2}{1-2\alpha\cos\omega+\alpha^2}$. The proof of Lemma \ref{lem:tp_K} is based on the direct computation of the determinant of the matrix $\mathbf{M} = (\tilde{K}(\alpha_i,x_j))_{i,j=1}^n$, using the fact that $\mathbf{M}$ can be written as a transformation of a Cauchy matrix. Corollary \ref{cor:tp_K} is an easy consequence of Lemma \ref{lem:tp_K} as cosine is a strictly increasing function on $[-\pi,0]$. The proofs are provided in Section\ifnum \pageoption>1~\ref{supp_sec: 3_2_tp}\else~S3.2\fi~in Supplementary Material.
\begin{lem}\label{lem:tp_K}
Define $\tilde{K}(\alpha,x) = \frac{1-\alpha^2}{1+\alpha^2 - 2\alpha x}$. Fix $n\in \mathbb{N}$. For any  $-1<\alpha_1<\dots<\alpha_n<1$ and $-1\leq x_1<\dots<x_n \leq 1$, define $\mathbf{M}\in \R^{n\times n}$ such that $\mathbf{M}_{ij} = \tilde{K}(\alpha_i,x_j)$. We have $|\mathbf{M}| > 0$. That is, $\tilde{K}$ is a strictly totally positive kernel.    
\end{lem}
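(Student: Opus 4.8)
The plan is to reduce $|\mathbf{M}|$ to the determinant of a Cauchy matrix and then read off strict positivity from the explicit Cauchy determinant formula. The starting point is the algebraic identity $1 + \alpha^2 - 2\alpha x = -2\alpha\,(x - c(\alpha))$, where $c(\alpha) := \frac{1+\alpha^2}{2\alpha} = \tfrac12(\alpha + \alpha^{-1})$, valid for $\alpha \neq 0$. This lets me write $\tilde{K}(\alpha,x) = p(\alpha)\cdot \frac{1}{x - c(\alpha)}$ with $p(\alpha) := \frac{\alpha^2 - 1}{2\alpha}$. The key structural fact is that the poles $c(\alpha)$ lie strictly outside $[-1,1]$: indeed $c(\alpha) > 1$ when $\alpha \in (0,1)$ and $c(\alpha) < -1$ when $\alpha \in (-1,0)$, so every column entry $x_j - c(\alpha_i)$ is nonzero and the Cauchy structure is nondegenerate, while $c$ is injective on $(-1,1)\setminus\{0\}$ so the $c(\alpha_i)$ are distinct.

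Next I would factor $p(\alpha_i)$ out of row $i$, giving $|\mathbf{M}| = \bigl(\prod_{i} p(\alpha_i)\bigr)\,\det\bigl(\tfrac{1}{x_j - c(\alpha_i)}\bigr)_{i,j}$, and apply the classical Cauchy determinant formula to the remaining matrix. Using the two elementary simplifications $c(\alpha_i) - c(\alpha_j) = \frac{(\alpha_j - \alpha_i)(1 - \alpha_i\alpha_j)}{2\alpha_i\alpha_j}$ and $c(\alpha_i) - x_j = \frac{1 + \alpha_i^2 - 2\alpha_i x_j}{2\alpha_i}$, all the powers of $2$ and, crucially, all the factors of $\alpha_i$ cancel, leaving the closed form
\[
|\mathbf{M}| = 2^{\binom{n}{2}}\,\prod_{i=1}^n (1 - \alpha_i^2)\,\cdot\,\frac{\prod_{i<j}(\alpha_j - \alpha_i)(1 - \alpha_i\alpha_j)\,\prod_{i<j}(x_j - x_i)}{\prod_{i,j}(1 + \alpha_i^2 - 2\alpha_i x_j)}.
\]
Positivity is then immediate factor by factor under the hypotheses: $1 - \alpha_i^2 > 0$ and $1 - \alpha_i\alpha_j > 0$ since $|\alpha_i| < 1$; $\alpha_j - \alpha_i > 0$ and $x_j - x_i > 0$ by the strict orderings; and each denominator factor obeys $1 + \alpha_i^2 - 2\alpha_i x_j \ge (1 - |\alpha_i|)^2 > 0$ for $|x_j| \le 1$. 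Hence $|\mathbf{M}| > 0$.

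The one genuine gap is that the substitution $c(\alpha)$ is undefined at $\alpha = 0$, so the derivation above is literally valid only on the open dense set where every $\alpha_i \neq 0$. I would close this by continuity: the right-hand side of the displayed identity is a rational function of $(\alpha_1,\dots,\alpha_n)$ with \emph{no} singularity at $\alpha_i = 0$ (the offending denominators $\alpha_i$ have all cancelled, and $1 + \alpha_i^2 - 2\alpha_i x_j$ never vanishes on the region), while $|\mathbf{M}|$ is itself continuous in the $\alpha_i$; since the two agree whenever all $\alpha_i \neq 0$, they agree on all of $-1 < \alpha_1 < \dots < \alpha_n < 1$, and strict positivity persists.

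I expect the only real obstacle to be bookkeeping rather than anything conceptual. Because $c$ is decreasing on each of $(-1,0)$ and $(0,1)$ but has disjoint ranges $(-\infty,-1)$ and $(1,\infty)$, the induced ordering of the $c(\alpha_i)$ interleaves the original indices, so a naive sign-tracking through the raw Cauchy formula is error-prone. Passing immediately to the simplified closed form, where every factor is transparently signed, is precisely what sidesteps this; the cancellation of all $\alpha_i$-powers is the fortunate feature that simultaneously produces a clean formula and disposes of the $\alpha_i = 0$ degeneracy.
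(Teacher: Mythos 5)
Your proof is correct, and it rests on the same core reduction the paper uses: factoring $\tilde{K}(\alpha_i,x_j)=f_1(\alpha_i)\cdot\frac{1}{f_2(\alpha_i)-x_j}$ so that $\mathbf{M}$ becomes a diagonal matrix times a Cauchy matrix, then invoking the Cauchy determinant formula. Where you diverge is in the execution, and your version is cleaner on both of the points where the paper has to work hardest. First, the paper determines the sign of $|\mathbf{M}|$ by tracking parities: it lets $m$ be the number of negative $\alpha_i$, computes $\operatorname{sgn}|\mathbf{M}_1|=(-1)^m$ and $\operatorname{sgn}|\mathbf{M}_2|=(-1)^{2nm-m^2}$ (the latter requiring a case analysis on how $f_2$ reorders the $\alpha_i$ across $0$), and checks the product is $+1$. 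You instead push the algebra through to the closed form $|\mathbf{M}| = 2^{\binom{n}{2}}\prod_i(1-\alpha_i^2)\cdot\frac{\prod_{i<j}(\alpha_j-\alpha_i)(1-\alpha_i\alpha_j)\prod_{i<j}(x_j-x_i)}{\prod_{i,j}(1+\alpha_i^2-2\alpha_i x_j)}$, which I have verified (the powers of $2$ combine as $2^{n^2-n-\binom{n}{2}}=2^{\binom{n}{2}}$ and each $\alpha_i$ appears to total power $-1-(n-1)+n=0$), and positivity is then factor-by-factor with no sign bookkeeping at all. Second, the paper handles the degenerate case $\alpha_i=0$ by a separate explicit computation (cofactor expansion along the row of ones plus column operations reducing to a smaller Cauchy determinant), whereas you dispose of it by continuity, since all the $\alpha_i$ factors have cancelled from your identity and the denominator $1+\alpha_i^2-2\alpha_i x_j\ge(1-|\alpha_i|)^2>0$ never vanishes on the region; the set where all $\alpha_i\neq 0$ is dense in the ordered simplex (at most one $\alpha_i$ can vanish), so the identity and its strict positivity extend. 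Your route buys an explicit formula for the determinant and avoids the error-prone interleaving of the $f_2(\alpha_i)$; the paper's route avoids the full algebraic simplification at the cost of the parity argument and the extra case.
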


\begin{cor}\label{cor:tp_K}
Define $K(\alpha,\omega)=\frac{1-\alpha^2}{1-2\alpha\cos\omega+\alpha^2}$. Then $K$ is strictly totally positive on $(-1,1)\times[-\pi,0]$.
\end{cor}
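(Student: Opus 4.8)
The plan is to reduce the corollary directly to Lemma \ref{lem:tp_K} via the change of variables $x=\cos\omega$, using only the strict monotonicity of cosine on $[-\pi,0]$. First I would unwind the definition of strict total positivity for $K$ on $(-1,1)\times[-\pi,0]$: fix $n\in\N$ and take arbitrary points $-1<\alpha_1<\dots<\alpha_n<1$ and $-\pi\le\omega_1<\dots<\omega_n\le 0$; the goal is to show that the matrix $\mathbf{N}\in\R^{n\times n}$ with $\mathbf{N}_{ij}=K(\alpha_i,\omega_j)$ satisfies $|\mathbf{N}|>0$.

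Next I would record the elementary monotonicity fact: cosine is strictly increasing on $[-\pi,0]$, since its derivative $-\sin\omega$ is strictly positive on $(-\pi,0)$, and it maps $[-\pi,0]$ bijectively onto $[-1,1]$. Setting $x_j:=\cos(\omega_j)$, strict monotonicity converts the strict ordering $\omega_1<\dots<\omega_n$ into $-1\le x_1<x_2<\dots<x_n\le 1$, which is precisely an admissible tuple of abscissae for Lemma \ref{lem:tp_K} (note the endpoints $\pm 1$ are permitted there).

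Since $K(\alpha,\omega)=\tilde{K}(\alpha,\cos\omega)$ by definition, we have $\mathbf{N}_{ij}=\tilde{K}(\alpha_i,x_j)=\mathbf{M}_{ij}$, so $\mathbf{N}$ and the matrix $\mathbf{M}$ of Lemma \ref{lem:tp_K} coincide. Applying that lemma gives $|\mathbf{N}|=|\mathbf{M}|>0$, and since $n$ and the points were arbitrary, $K$ is strictly totally positive on $(-1,1)\times[-\pi,0]$, as claimed.

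There is essentially no substantive obstacle here: all the heavy lifting, namely the Cauchy-matrix determinant computation establishing $|\mathbf{M}|>0$, is contained in Lemma \ref{lem:tp_K}, which I may assume. The only points requiring care are confirming that cosine \emph{preserves} rather than reverses the ordering on this specific interval $[-\pi,0]$, and checking that the closed endpoints $x_j=\pm 1$ are admissible inputs to Lemma \ref{lem:tp_K}; both are immediate.
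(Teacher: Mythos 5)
Your proposal is correct and follows essentially the same route as the paper: both reduce the claim to Lemma~\ref{lem:tp_K} via the substitution $x_j=\cos(\omega_j)$, using the strict monotonicity of cosine on $[-\pi,0]$ to preserve the ordering and noting that the endpoints $\pm 1$ are admissible in the lemma. No gaps.
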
 
We finally provide the bounds in the number of support points in the representing measure $\mu_{C}^{\phi}$ for $\Pi^{\phi}(r;C)$ in Proposition \ref{prop:finiteSupport}. The proof can be found in Supplementary Material Section\ifnum \pageoption>1~\ref{supp_sec: 3_3_finiteSupport}\else~S3.3\fi.

\begin{prop}\label{prop:finiteSupport} Let $C$ be a closed subset of $[-1,1]$, and suppose $r\in{\ell}_2^{even}(\mathbb{Z},\R)$ satisfies $r(k)=0$ for all $k$ with $|k|>M-1$ for $1\leq M<\infty$. Suppose $\phi$ is a weight function satisfying \ref{cond:phi}. Let $\Pi^\phi (r;C)$ denote the projection of $r$ onto $\mathscr{M}_{\infty}(C)\cap \ell_2(\mathbb{Z})$, and let $\mu^\phi_{C}$ denote the representing measure for $\Pi^\phi(r;C)$. Then $\operatorname{Supp}(\mu^\phi_{C})$ contains at most $n$ points, where $n$ is the smallest even number such that $n>(M-1)$, and the support of $\mu^\phi_{C}$ is contained in $(-1,1)$, that is, $\operatorname{Supp}(\mu^\phi_{C})\cap \{-1,1\}=\emptyset$. Furthermore, in the case $C=[L,U]$ for some $-1\leq L\leq U\leq  1$, we have $|\operatorname{Supp}(\mu_{C}^{\phi})|\leq  \frac{n}{2}+1$. If in addition either one of $L=-1$ or $U=1$ holds, then we have $|\operatorname{Supp}(\mu_{C}^{\phi})|\leq \frac{n}{2}$.
\end{prop}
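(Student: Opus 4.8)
The plan is to convert the question about the size of $\operatorname{Supp}(\mu^\phi_C)$ into a zero-counting problem for the gradient function $g$ from \eqref{eq:gradient} and then exploit strict total positivity of the Poisson kernel. By Proposition~\ref{prop:weighted_opt_eq}, every $\alpha\in\operatorname{Supp}(\mu^\phi_C)\cap(-1,1)$ satisfies $g(\alpha)=0$, so it suffices to bound the number of zeros of $g$ on $(-1,1)$. Writing $d=\widehat{\Pi^\phi(r;C)}-\hat r$ and $h=d/\phi^2$, the identity $g(\alpha)=(2\pi)^{-1}\int_{[-\pi,\pi]}K(\alpha,\omega)h(\omega)\,d\omega$, together with the evenness of $K(\alpha,\cdot)$ and of $h$, lets me restrict the integral to $[-\pi,0]$, where $\cos$ is monotone and $K$ is strictly totally positive by Corollary~\ref{cor:tp_K}. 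Since $x_\alpha\in\ell_2(\Z,\R)$ and $\Pi^\phi(r;C)-r\in\ell_2(\Z,\R)$, the quantity $g(\alpha)=\langle \Pi^\phi(r;C)-r,x_\alpha\rangle_\phi$ is well defined for each $\alpha\in(-1,1)$, and differentiating under the integral (using that $\phi$ is bounded away from $0$ by Assumption~\ref{cond:phi}) shows $g$ is real-analytic on $(-1,1)$; hence its zeros are isolated and may be counted with multiplicity.

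I would then apply the variation-diminishing property of the strictly totally positive $K$: for $g\not\equiv0$, the number of zeros of $g$ on $(-1,1)$, counted with multiplicity, is at most the number of sign changes of $h$, which equals the number of sign changes of $d$ since $\phi^2>0$. For a general closed $C$, each support point in $(-1,1)$ is a distinct zero of $g$, so the count of distinct zeros is dominated by the count with multiplicity, giving at most $n$ support points once the sign changes of $d$ are bounded by $n$. For $C=[L,U]$ I gain a factor of two from Proposition~\ref{prop:weighted_opt_ineq}(1): on $C\cap(-1,1)$ one has $g\ge0$, so every support point in the interior $(L,U)$ is a local minimum of $g$ at height $0$, hence a zero of even order contributing at least $2$ to the multiplicity count, while only the endpoint(s) of $C$ can give simple zeros. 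Writing $p_{\mathrm{int}}$ and $p_{\mathrm{bdy}}$ for the interior and endpoint support counts, $2p_{\mathrm{int}}+p_{\mathrm{bdy}}$ is at most the sign-change count of $d$; since $p_{\mathrm{bdy}}\le 2$ in general, and $p_{\mathrm{bdy}}\le 1$ when an endpoint equals $\pm1$ (which the support excludes), this yields the bounds $n/2+1$ and $n/2$.

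The heart of the argument, and the step I expect to require the most care, is bounding the number of sign changes of $d=\widehat{\Pi^\phi(r;C)}-\hat r$ by $n$. Setting $x=\cos\omega\in[-1,1]$, I would write $\widehat{\Pi^\phi(r;C)}=\int\tilde K(\alpha,x)\,\mu^\phi_C(d\alpha)$ with $\mu^\phi_C\ge0$, and use the generating-function identity $\tilde K(\alpha,x)=1+2\sum_{k\ge1}\alpha^kT_k(x)$ for the Chebyshev polynomials $T_k$ to represent $\hat r$, which is a polynomial of degree $M-1$ in $x$, as a limit of $\tilde K$-transforms of finitely supported signed measures whose atoms cluster at $\alpha=0$ (a confluent node of order $M$). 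Then $d$ is (a limit of) $\int\tilde K(\alpha,x)\,d\nu(\alpha)$ for signed measures $\nu=\mu^\phi_C-\tau$, and applying the variation-diminishing property of the strictly totally positive $\tilde K$ (Lemma~\ref{lem:tp_K}) to the approximants and passing to the limit bounds the sign changes of $d$ by those of $\nu$. The order-$M$ node at $0$ contributes at most $M-1$ internal sign changes, since its split signs alternate, and the nonnegative mass of $\mu^\phi_C$ on each side of $0$ can add at most one further sign change per side; a careful parity check shows the extra boundary sign changes materialize exactly when $M$ is odd, so that the total never exceeds the smallest even number larger than $M-1$, namely $n$. The exact parity bookkeeping here is the main obstacle.

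Two technical points remain to be discharged. First, to show $\operatorname{Supp}(\mu^\phi_C)\subseteq(-1,1)$: the total mass $\mu^\phi_C(C)=\Pi^\phi(r;C)(0)$ is finite, and because $\Pi^\phi(r;C)\in\ell_2(\Z,\R)$, an atom at $+1$ would force $\Pi^\phi(r;C)(k)$ to stay bounded away from $0$ along even lags and an atom at $-1$ along alternating lags, contradicting square-summability; combined with the finiteness of $\operatorname{Supp}(\mu^\phi_C)\cap(-1,1)$ obtained above (the support in $(-1,1)$ sits inside a finite zero set of $g$, so $\mu^\phi_C$ restricted to $(-1,1)$ is purely atomic on finitely many points), this shows the representing measure is a finite atomic measure supported in $(-1,1)$ with at most $n$ atoms. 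Second, I must keep the logical order so that the total-positivity count and the $\ell_2$-exclusion of $\pm1$ are not circular: the sign-change bound on $d$ is established through the $\tilde K$-transform representation, which does not presuppose finiteness of $\operatorname{Supp}(\mu^\phi_C)$, and only afterwards are the endpoint atoms ruled out. Arranging these steps consistently, together with the confluent variation-diminishing count, completes the proof.
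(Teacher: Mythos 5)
Your overall architecture coincides with the paper's: reduce the support count to counting zeros of $g(\alpha)=\braket{\Pi^{\phi}(r;C)-r,x_{\alpha}}_{\phi}$, pass to $x=\cos\omega$ so that Lemma~\ref{lem:tp_K} and Corollary~\ref{cor:tp_K} plus the variation-diminishing property apply, gain the factor of two on intervals from the sign condition in Proposition~\ref{prop:weighted_opt_ineq}, and exclude $\pm 1$ from the support via square-summability (the paper invokes Lemma 2 of \citet{berg2023efficient} here, which is essentially your even/alternating-lag observation). The divergence --- and the gap --- is in the central step of bounding the number of sign changes of $\widehat{\Pi^{\phi}(r;C)}-\hat r$ by $n$. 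The paper does this directly: writing $\tilde h(x)=\sum_{|k|\leq M-1}r(k)T_{|k|}(x)-\int \tilde K(\alpha,x)\,\mu^{\phi}_C(d\alpha)$, it differentiates $n$ times; the polynomial part vanishes because $n>M-1$, and since $n$ is even the remaining term satisfies $\tilde h^{(n)}(x)=-\int n!\,(1-\alpha^2)(2\alpha)^{n}(1-2\alpha x+\alpha^2)^{-(n+1)}\,\mu^{\phi}_C(d\alpha)\leq 0$, so repeated application of Rolle's theorem gives at most $n$ roots, hence $S^{-}(\tilde h)\leq n$, for a completely arbitrary representing measure. Your substitute --- representing the degree-$(M-1)$ Chebyshev polynomial as a limit of $\tilde K$-transforms of signed measures with $M$ atoms clustering at $0$, and counting sign changes of $\nu_\epsilon=\mu^{\phi}_C-\tau_\epsilon$ --- is left incomplete by your own admission (``the exact parity bookkeeping here is the main obstacle''), and it has a concrete failure mode you do not address: at this stage $\mu^{\phi}_C$ is an arbitrary nonnegative Radon measure, and any mass it places inside the shrinking cluster around $\alpha=0$ can interleave with the atoms of $\tau_\epsilon$ and strictly increase the sign-change count (a block of cluster signs $-,-$ becomes $-,+,-$), destroying the uniform bound $S^{-}(\nu_\epsilon)\leq n$ on which your limiting argument rests.

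A secondary problem: in the interval case you count zeros of $g$ \emph{with multiplicity} and assert that this count is dominated by the sign changes of the input. The variation-diminishing theorem for a merely strictly totally positive kernel --- which is all Lemma~\ref{lem:tp_K} establishes --- controls $S^{+}$ of the output, not zero multiplicities; the multiplicity version requires extended total positivity (confluent determinants), which is proved nowhere here. The conclusion you want is still reachable without it: an interior support point where $g$ has constant sign on a neighborhood and vanishes forces two sign changes in $S^{+}(g)$ by inserting test points on either side and assigning the zero the opposite sign, which is exactly how the paper obtains $2(m-1)\leq S^{+}(g)\leq n$ and the endpoint refinements. As written, though, your multiplicity step is unjustified. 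Replacing the confluent-node construction with the $n$-fold differentiation argument closes the main gap immediately and renders the real-analyticity claim unnecessary.
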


We remark that since the unweighted estimate is the special case of the weighted estimate with $\phi\equiv 1$, the bound in Proposition~\ref{prop:finiteSupport} in the case $C$ is an interval is stronger than the bound in~\citet{berg2023efficient} Proposition 6, essentially by a factor of 2.

\subsection{Computation}\label{subsec: 3_4 computation}

We now provide some details on computing the solution to the optimization problem in~\eqref{eq: weightedMLSE}. For a general set $C$, we use a grid approximation $\tilde{C}$ of $C$ containing a length $s$ grid of closely spaced points in $C$, and we approximate the estimator $\Pi^\phi(r_M;C)$ with $\Pi^\phi(r_M;\tilde{C})$. 
For instance, when $C = [-1+\delta, 1-\delta]$, we create a length $s$ grid $\tilde{C} = \{\alpha_1,\dots,\alpha_s\}$ from $\alpha_1 = -1+\delta$ to $\alpha_s = 1-\delta$, and minimize $\|r_M-f\|_{\phi}^2$ over $f$ such that $f=\int x_\alpha \,\mu_f(d\alpha)$ with $\operatorname{Supp}(\mu_f) \subseteq \{\alpha_1,\dots,\alpha_s\}$. 
We note since the representing measure $\hat{\mu}_C$ of $\Pi^\phi(r_M;C)$ is a discrete measure with finitely many support points (Proposition \ref{prop:finiteSupport}), the approximation $\Pi^\phi(r_M;\tilde{C})$ is exact when $\tilde{C}$ contains the support of $\hat{\mu}_C$. In practice, we observed that the estimated autocovariance function, spectral density, and asymptotic variance were robust to the choice of grid points.

Now we discuss how to solve $\Pi^\phi(r_M;\tilde{C})$ where $\tilde{C} = \{\alpha_1,\dots,\alpha_s\}$. First, note that for $f \in \mathscr{M}_\infty(\tilde{C})$, we have $f = \int x_\alpha \,\mu_f(d\alpha)$ for $\mu_f = \sum_{i=1}^s w_{\alpha_i} \delta_{\alpha_i}$, where $\delta_{\alpha_i}$ represents a point mass at $\alpha_i$ and $w_{\alpha_i} \in \R_{+}$ is a weight for $\alpha_i$. Then $f = \sum_{i=1}^s w_{\alpha_i}x_{\alpha_i} $, and therefore for $r_M \in {\ell}_2^{even}(\Z,\C)$,
\begin{align}
    \|r_M-f\|_{\phi}^2
    &=\|r_M\|_{\phi}^2-\braket{r_M,f}_{\phi}-\braket{f,r_M}_{\phi}+\|f\|_\phi^2\nonumber\\
    &=\|r_M\|_{\phi}^2-2\sum_{i=1}^s w_{\alpha_i}\braket{r_M,x_{\alpha_i}}_{\phi}+\sum_{i=1}^s \sum_{j=1}^s w_{\alpha_i}w_{\alpha_j}\braket{x_{\alpha_i}, x_{\alpha_j}}_\phi,\label{eq: obj_comp1}
\end{align}
where for the second equality we use Lemma \ref{lem: Fourier_even}. Write $\mathbf{w}=[w_{\alpha_1},...,w_{\alpha_s}]^\top\in\mathbb{R}^{s}$, $\mathbf{a}=[a_1,...,a_s]^\top\in\mathbb{R}^{s}$ such that $a_i=\braket{x_{\alpha_i},r_M}_{\phi}$, and $\mathbf{B}\in\mathbb{R}^{s\times s}$ such that $B_{ij}=\braket{x_{\alpha_i},x_{\alpha_j}}_{\phi}$, for $i,j=1,...,s$. Then we can write \eqref{eq: obj_comp1} as $\|r_M-m\|_{\phi}^2=\|r_M\|_{\phi}^2-2\mathbf{a}^\top \mathbf{w}+\mathbf{w}^\top \mathbf{B}\mathbf{w}$, and the optimization problem for obtaining $\Pi^{\phi}(r_M,\tilde{C})$ becomes \begin{align}
    &\underset{\mathbf{w}}{\min}\quad \|r_M\|_{\phi}^2-2\mathbf{a}^\top\mathbf{w}+\mathbf{w}^\top \mathbf{B}\mathbf{w}\nonumber\\
    &\text{subject to }\mathbf{w}\geq 0.\label{eq:quadprog}
\end{align} 
Given $\mathbf{a}$ and $\mathbf{B}$, the optimization problem \eqref{eq:quadprog} is identical to the quadratic programming formulation of a nonnegative least squares problem, and can be solved using standard software. 

It remains to discuss how to compute $\mathbf{a}$ and $\mathbf{B}$. The elements of $\mathbf{a}$ and $\mathbf{B}$ are given as 
\begin{align*}
a_i&=\braket{x_{\alpha_i},r_M}_{\phi}=\frac{1}{2\pi}\int_{[-\pi,\pi]}\frac{K(\alpha_i,\omega)\hat{r}_{M}(\omega)}{\phi(\omega)^2}\,d\omega,\\
B_{ij} &= \braket{x_{\alpha_i},x_{\alpha_j}}_\phi = \int\frac{K(\alpha_i,\omega)K(\alpha_j,\omega)}{\phi(\omega)^2}d\omega
\end{align*}
for $i,j=1,\dots,s$.

For $i=1,...,s$, we use $a_i\approx M_0^{-1}\sum_{j=0}^{M_0-1}\frac{K(\alpha_i,\omega_j)\hat{r}_{M}(\omega_j)}{\phi(\omega_j)^2}$, where $M_0$ is a large integer, and $\omega_j=\frac{2\pi j}{M_0}$, $j=0,...,M_0-1$ which are the Fourier frequencies for $M_0$. For an input autocovariance sequence $r_M$ satisfying~\ref{cond:rM_finiteSupport}, the values $\hat{r}_{M}(\omega_j)$ can be computed using a fast Fourier transform, possibly after zero-padding of $r_M$. In our numerical experiments, we have chosen $M_0 =2M$, in order to sufficiently capture variability in $\hat{r}_{M}(\omega)=\sum_{k=-(n(M)-1)}^{n(M)-1}r_M(k)\cos(k\omega)$. 

Similarly, for the $B_{ij}$ integrals, we use 
 $   B_{ij}=\braket{x_{\alpha_i},x_{\alpha_j}}_\phi\approx M_1^{-1}\sum_{k=0}^{M_1-1}\frac{K(\alpha_i,\omega_k)K(\alpha_j,\omega_k)}{\phi(\omega_k)^2}$, where now $\omega_k=\frac{2\pi k}{M_1}$, $k=0,...,M_1-1$, and $M_1$ is a large integer. In our experiments, we used $M_1=M_0$. The accuracy of these $a_i$ and $B_{ij}$ approximations will depend on the smoothness of the particular $K(\alpha,\cdot)$ and $\phi(\cdot)$ involved in the integrals, as well as the fineness of the Fourier frequency grid. We expect large values for $M_0$ and $M_1$ to lead to higher accuracy, but such choices also lead to a higher computational cost for evaluating the sums in the approximation. In our examples, the choice $M_0=M_1=2M$ appears to perform satisfactorily. On the other hand, further development and formal analysis of methods for evaluating the $a_i$ and $B_{ij}$ integrals seems to be an important and interesting problem. We leave these extensions for future work.

\paragraph{Asymptotic variance and spectral density estimates} Given $\Pi^{\phi}(r_M;\tilde{C})$, i.e., the solution $\hat{\mathbf{w}} = [\hat{w}_{\alpha_1},\dots,\hat{w}_{\alpha_s}]$ of the quadratic programming program \eqref{eq:quadprog}, the spectral density estimate $\phi_M$ and asymptotic variance estimate $\sigma^2(\Pi^{\phi}(r_M;\tilde{C}))$ in \eqref{def: momentLS_est2} are given by $\phi_M(\omega) =  \sum_{i=1}^s \hat{w}_{\alpha_i}K(\alpha_i, \omega) $ and $\sigma^2(\Pi^{\phi}(r_M;C) )= \sum_{i=1}^s\hat{w}_{\alpha_i}\frac{1+\alpha_i}{1-\alpha_i}$.

\section{Statistical analysis}\label{sec: 4_statistical_analysis}
In this section, we study the consistency properties of weighted moment LS estimators. Recall that our weighted objective \eqref{eq: weightedMLSE} is based on a weight function $\phi$, which can be any function which satisfies Assumption \ref{cond:phi}. Practically, for a given sample $(X_0,\dots,X_{M-1})$, we would use a spectral density function estimate as the weight function. To emphasize that the weight function can be random and depend on $M$, we write $\phi=\phi_M$. We analyze the asymptotic properties of $\Pi^{\phi_M}(r_M;C)$, i.e., the minimizer of the following objective  \begin{align*}
  \Pi^{\phi_M} (r_{M}; \delta)=\underset{m\in\mathscr{M}_\infty(\delta)}{\arg\min}\,\|r_M-m\|_{\phi_M}^2.
\end{align*} 

Our starting point for the analysis is the following inequality in Lemma~\ref{lem:weightedl2_ineq}, which bounds the $\|\cdot\|_{\phi}$ distance between the projection $\Pi^{\phi}(r;\delta)$ and an arbitrary vector $f\in\mathscr{M}_\infty(\delta)\cap\ell_2(\mathbb{Z},\mathbb{R})$. Lemma~\ref{lem:weightedl2_ineq} will be used later with $f=\gamma$ and $r=r_{M}$.

\begin{lem}\label{lem:weightedl2_ineq}
Let $r \in {\ell}_2^{even}(\mathbb{Z},\R)$ and a weight function $\phi$ which satisfies \ref{cond:phi} be given. Suppose $\delta \in [0,1]$. We let $\Pi^\phi(r;\delta)=\arg\min_{m \in \mathscr{M}_{\infty}(\delta)\cap \ell_2(\Z,\R)} \|r-m\|_\phi^2$. For any $f \in \mathscr{M}_{\infty}(\delta) \cap \ell_2(\mathbb{Z},\R)$, we have
\begin{align*}
0 \leq \|\Pi^\phi(r; \delta)-f\|^2_\phi \leq-\int\left\langle x_\alpha, r-f\right\rangle_\phi \mu_f(d \alpha)+\int\left\langle x_\alpha, r-f\right\rangle_\phi \hat{\mu}^\phi_\delta(d \alpha)  
\end{align*}
where $\mu_f$ and $\hat{\mu}^\phi_\delta$ are the representing measures for $f$ and 
$\Pi^\phi(r; \delta)$, respectively.
\end{lem}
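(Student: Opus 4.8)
The plan is to write $p := \Pi^\phi(r;\delta)$ and to exploit the two optimality conditions of Propositions~\ref{prop:weighted_opt_ineq} and~\ref{prop:weighted_opt_eq} after a suitable decomposition of the error. Denote by $\hat\mu_\delta^\phi$ and $\mu_f$ the representing measures of $p$ and $f$, so that $p-f=\int x_\alpha\,(\hat\mu_\delta^\phi-\mu_f)(d\alpha)$; both are finite measures since $\hat\mu_\delta^\phi(C)=p(0)<\infty$ and $\mu_f(C)=f(0)<\infty$. The lower bound $0\le\|p-f\|_\phi^2$ is immediate. For the upper bound I would insert $r$ and split the second argument of the inner product, which is legitimate because $p,f,r$ are real and even, so by~\eqref{eq:l2even_innerProduct_sym} the weighted inner product is symmetric and $\mathbb{R}$-bilinear:
\[
\|p-f\|_\phi^2=\langle p-f,\,p-r\rangle_\phi+\langle p-f,\,r-f\rangle_\phi .
\]

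The crux is to show $\langle p-f,\,p-r\rangle_\phi\le 0$. Writing $p-f$ through its representing measure and interchanging the integral against $\hat\mu_\delta^\phi-\mu_f$ with the weighted inner product gives
\[
\langle p-f,\,p-r\rangle_\phi=\int\langle x_\alpha,\,p-r\rangle_\phi\,\hat\mu_\delta^\phi(d\alpha)-\int\langle x_\alpha,\,p-r\rangle_\phi\,\mu_f(d\alpha).
\]
Because $p\in\ell_2$, the measure $\hat\mu_\delta^\phi$ carries no mass at $\pm1$ (an atom there would force $p(k)\not\to0$ along even lags), so the first integral is effectively over $(-1,1)$, and there Proposition~\ref{prop:weighted_opt_eq} yields $\langle x_\alpha,p-r\rangle_\phi=0$ for $\alpha\in\operatorname{Supp}(\hat\mu_\delta^\phi)\cap(-1,1)$; hence the first integral vanishes. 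For the second integral, Proposition~\ref{prop:weighted_opt_ineq}(1) gives $\langle x_\alpha,p-r\rangle_\phi=\langle p,x_\alpha\rangle_\phi-\langle r,x_\alpha\rangle_\phi\ge0$ for $\alpha\in C\cap(-1,1)$; since $\mu_f\ge0$ and (again by $\ell_2$-membership of $f$) has no atoms at $\pm1$, the integral is nonnegative, and with the leading minus sign the whole expression is $\le0$. Therefore $\|p-f\|_\phi^2\le\langle p-f,\,r-f\rangle_\phi$, and expanding this last term through the representing measures,
\[
\langle p-f,\,r-f\rangle_\phi=\int\langle x_\alpha,\,r-f\rangle_\phi\,\hat\mu_\delta^\phi(d\alpha)-\int\langle x_\alpha,\,r-f\rangle_\phi\,\mu_f(d\alpha),
\]
which is exactly the claimed right-hand side.

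The only genuinely technical step is the interchange $\langle\int x_\alpha\,\nu(d\alpha),\,g\rangle_\phi=\int\langle x_\alpha,g\rangle_\phi\,\nu(d\alpha)$ for the finite signed measure $\nu=\hat\mu_\delta^\phi-\mu_f$ and $g\in\{p-r,\,r-f\}$, which I expect to be the main obstacle. Since $\langle\cdot,\cdot\rangle_\phi$ is itself an integral over $\omega\in[-\pi,\pi]$ whose kernel is the Poisson kernel $K(\alpha,\omega)$ (the Fourier transform of $x_\alpha$), this is a Fubini interchange of the $\omega$- and $\alpha$-integrals. For the application of interest $\delta>0$, so the supports of $\hat\mu_\delta^\phi$ and $\mu_f$ lie in the compact set $[-1+\delta,1-\delta]\subset(-1,1)$, on which $K(\alpha,\omega)$ is uniformly bounded; together with $|\phi|^2\ge c_0^2>0$ from Assumption~\ref{cond:phi}, the finiteness of $\nu$, and $g\in\ell_2$, this gives absolute integrability of the double integral, so Fubini applies and the remaining expansions are routine bilinear algebra.
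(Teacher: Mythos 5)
Your argument is correct and is essentially the proof the paper has in mind (the paper omits it, deferring to Lemma 1 of \citet{berg2023efficient}): the same decomposition $\|p-f\|_\phi^2=\langle p-f,\,p-r\rangle_\phi+\langle p-f,\,r-f\rangle_\phi$, with the first term shown to be nonpositive via the optimality conditions of Propositions~\ref{prop:weighted_opt_ineq} and~\ref{prop:weighted_opt_eq} and the second term expanded through the representing measures. The only caveat is that your Fubini justification of the interchange $\langle \int x_\alpha\,\nu(d\alpha),\,g\rangle_\phi=\int\langle x_\alpha,g\rangle_\phi\,\nu(d\alpha)$ is written only for $\delta>0$, whereas the lemma is stated for $\delta\in[0,1]$; the $\delta=0$ case (supports accumulating at $\pm1$) is exactly Lemma~\ref{lem:inner_product_rep} in the Supplementary Material, which handles it via Tonelli and the nonnegativity of the Poisson kernel rather than a uniform bound on $K(\alpha,\omega)$, so you may simply cite that lemma in place of your compact-support argument.
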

\begin{proof} The proof follows similar lines as in Lemma 1 of~\citet{berg2023efficient} for the unweighted case, and is therefore omitted here.
\end{proof}

Taking $f=\gamma$ and $r=r_M$ in Lemma~\ref{lem:weightedl2_ineq}, we obtain bounds $\|\Pi^{\phi}(r_M;\delta)-\gamma\|_{\phi_M}^{2}$ in terms of integrals involving inner products $\braket{x_{\alpha},r_M-\gamma}_{\phi_M}$. We note that the almost sure pointwise convergence $r_M(k)\to\gamma(k)$ as $M\to\infty$ in~\ref{cond:rM_pwConvergence} does not imply $\|r_M-\gamma\|_{\phi_M}\to 0$, even in the unweighted norm case where $\|\cdot\|_{\phi_M}=\|\cdot\|_{2}$ (i.e., $\phi_M\equiv 1)$. Thus, although we have the bound $\|\Pi^{\phi}(r_M;\delta)-\gamma\|_{\phi_M}\leq \|r_M-\gamma\|_{\phi_M}$ for each $M$, it is not directly useful, and we pursue an alternative approach based on Lemma~\ref{lem:weightedl2_ineq}.
In particular, we show in Proposition~\ref{prop:prop_c} that the $\braket{x_{\alpha},r_{M}-\gamma}_{\phi_{M}}$ inner products are asymptotically well-behaved under suitable assumptions on the weight function $\phi_{M}$. 

We first state the assumptions for $\phi_M$ in Assumption~\ref{cond:phi_M_all}. In addition to assuming that the weight function $\phi_M$ is valid for each $M$, we need asymptotic bounds on $\phi_M$ to ensure that it remains bounded above $0$ and finite as $M\to \infty$. Additionally, we impose smoothness conditions on $\phi_M$ to guarantee that the inverse Fourier transform of $\phi_M$ decays sufficiently fast.
\begin{assumption}\label{cond:phi_M_all}
\,
    \begin{enumerate}[label={5\alph*}, itemsep=0em]
\item (condition on $\phi_M$ for each $M$)\label{cond:phi_M}
For each $M$, $\phi_M$ is a real-valued, even function in $C^2(\T)$, and there exist $0<c_{0M}, c_{1M}<\infty$ such that $0<c_{0M} \le \phi_M(\omega) \le c_{1M} <\infty$. Here, $c_{0M}$ and $c_{1M}$ can be possibly random.
\item (asymptotic bounds on $\phi_M$)\label{cond:phi_M_asymp}
There exist constants $0 < c_\phi, c_\phi'<\infty$ such that \,\,\,
$\liminf_{M\to\infty} \inf_{\omega \in [-\pi,\pi]} \phi_M(\omega) = c_\phi >0$ and 
    $\limsup_{M\to\infty} \max\{ \|\phi_M\|_\infty, \|\phi'_M\|_\infty,\|\phi''_M\|_\infty\} \le c'_\phi$, 
    $P_x$-almost surely for any initial condition $x \in \mathsf{X}$.
\end{enumerate}
\end{assumption}

One plausible choice for the weight function $\phi_M$ is the estimated spectral density function $\phi_{\delta M}$ from the unweighted moment least-squares estimator $\Pi(r_M;\delta)$. To handle the case that $\Pi(r_M;\delta)$ is a zero sequence, we introduce a modified weight function $\tilde{\phi}_{\delta M}$ defined by taking $\tilde{\phi}_{\delta M}(\omega)=\frac{\phi_{\delta M}(\omega)}{\Pi(r_M;\delta)(0)}$, $\forall \omega$ in the case $\Pi(r_M;\delta)(0)>0$, and defining $\tilde{\phi}_{\delta M}(\omega)=1$, $\forall \omega$ in the case $\Pi(r_M;\delta)(0)=0$. Note $\Pi(r_M;\delta)(0)=0$ implies that $\Pi(r_M;\delta)(k)=0$ for all $k$. We show in Lemma~\ref{lem:deterministicPhi_M} that the modified weights $\tilde{\phi}_{\delta M}$ satisfy~\ref{cond:phi_M_all}. The weighted LS estimate using the modified weight function $\tilde{\phi}_{\delta M}$ is identical to the weighted LS using the estimated spectral density function $\phi_{\delta M}$ when $\Pi(r_M;\delta)(0) >0$. On the other hand,  $\Pi(r_M;\delta)(0)=0$ happens only when the input sequence $r_M $ is a zero sequence (Lemma\ifnum \pageoption>1~\ref{lem:unweighted_spectr_bound}\else~S-12\fi~in Section\ifnum \pageoption>1~\ref{supp_sec: 4_statistical_analysis}\else~S4\fi~in Supplementary Material). In this case, we have $\Pi^{\tilde{\phi}_{\delta M}}(r_M;\delta)(k)=\Pi(r_M;\delta)(k)=0$, $\forall k\in\mathbb{Z}$.

\begin{lem}
    Suppose $0<\delta\leq 1$ and $r_M$ satisfies~\ref{cond:rM_finiteSupport} and~\ref{cond:rM:even}. Then the weight sequence $\tilde{\phi}_{\delta M}$ satisfies~\ref{cond:phi_M} and~\ref{cond:phi_M_asymp}. \label{lem:deterministicPhi_M}
\end{lem}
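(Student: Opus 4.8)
The plan is to treat the two branches of the definition of $\tilde{\phi}_{\delta M}$ separately, and to exploit the fact that, after normalization by $\Pi(r_M;\delta)(0)$, the weight function becomes a \emph{convex combination} of Poisson kernels $K(\alpha,\cdot)$ indexed by $\alpha$ confined to the compact interval $[-1+\delta,1-\delta]$, on which $K$ and its first two $\omega$-derivatives are controlled by deterministic constants depending only on $\delta$. When $\Pi(r_M;\delta)(0)=0$ we have $\tilde{\phi}_{\delta M}\equiv 1$, which is even, real-valued, lies in $C^2(\T)$, and is bounded; this branch is trivial. So I would focus on the case $\Pi(r_M;\delta)(0)>0$. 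By \ref{cond:rM_finiteSupport} the sequence $r_M$ is finitely supported, so by Proposition~\ref{prop:finiteSupport} the representing measure $\hat{\mu}$ of $\Pi(r_M;\delta)$ is finitely supported with $\operatorname{Supp}(\hat{\mu})\subseteq[-1+\delta,1-\delta]\subseteq(-1,1)$; write $\hat{\mu}=\sum_i w_i\delta_{\alpha_i}$ with $w_i\geq 0$ and $\alpha_i\in[-1+\delta,1-\delta]$.

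Using \eqref{def: momentLS_est2} together with $\Pi(r_M;\delta)(0)=\int 1\,\hat{\mu}(d\alpha)=\sum_i w_i$, I would write
\[ \tilde{\phi}_{\delta M}(\omega)=\frac{\phi_{\delta M}(\omega)}{\Pi(r_M;\delta)(0)}=\sum_i \lambda_i\,K(\alpha_i,\omega),\qquad \lambda_i:=\frac{w_i}{\sum_j w_j}\geq 0,\ \ \sum_i\lambda_i=1. \]
Because each $K(\alpha_i,\cdot)$ depends on $\omega$ only through $\cos\omega$ and has denominator $1-2\alpha_i\cos\omega+\alpha_i^2\geq(1-|\alpha_i|)^2\geq\delta^2>0$ that never vanishes, the finite sum is even, real-valued, $2\pi$-periodic, and $C^\infty$, so $\tilde{\phi}_{\delta M}\in C^2(\T)$.

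The heart of the argument is a set of uniform bounds on the Poisson kernel over $|\alpha|\leq 1-\delta$. Writing $D(\omega)=1-2\alpha\cos\omega+\alpha^2$, I would record $(1-|\alpha|)^2\leq D(\omega)\leq(1+|\alpha|)^2$, which yields the two-sided bound $\frac{1-|\alpha|}{1+|\alpha|}\leq K(\alpha,\omega)\leq\frac{1+|\alpha|}{1-|\alpha|}$, hence $\frac{\delta}{2-\delta}\leq K(\alpha,\omega)\leq\frac{2-\delta}{\delta}$. A convex combination inherits pointwise bounds, so $\frac{\delta}{2-\delta}\leq\tilde{\phi}_{\delta M}(\omega)\leq\frac{2-\delta}{\delta}$ for every $\omega$ and every $M$, establishing \ref{cond:phi_M} with the deterministic constants $c_{0M}=\delta/(2-\delta)$ and $c_{1M}=(2-\delta)/\delta$. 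For the derivatives I would differentiate $K=(1-\alpha^2)/D$ under the finite sum: $\partial_\omega K$ and $\partial_\omega^2 K$ are rational in $\cos\omega,\sin\omega,\alpha$ with denominators $D^2$ and $D^3$, and using $D\geq\delta^2$ together with $|\alpha|,|\cos\omega|,|\sin\omega|\leq 1$ bounds both $\|\partial_\omega K(\alpha,\cdot)\|_\infty$ and $\|\partial_\omega^2 K(\alpha,\cdot)\|_\infty$ by a constant $C(\delta)<\infty$ depending only on $\delta$, uniformly in $|\alpha|\leq 1-\delta$. Since $\sum_i\lambda_i=1$, the same $C(\delta)$ controls $\|\tilde{\phi}'_{\delta M}\|_\infty$ and $\|\tilde{\phi}''_{\delta M}\|_\infty$.

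Finally, for \ref{cond:phi_M_asymp} I would observe that the constant function $1$ from the case $\Pi(r_M;\delta)(0)=0$ also satisfies $\frac{\delta}{2-\delta}\leq 1\leq\frac{2-\delta}{\delta}$ (using $\delta\leq 1$) with vanishing derivatives, so after enlarging $C(\delta)$ to satisfy $C(\delta)\geq 1$ the \emph{same} deterministic bounds hold for every $M$ regardless of case. Being independent of $M$ and of the random sample, they give $\liminf_{M\to\infty}\inf_{\omega}\tilde{\phi}_{\delta M}(\omega)\geq\frac{\delta}{2-\delta}>0$ and $\limsup_{M\to\infty}\max\{\|\tilde{\phi}_{\delta M}\|_\infty,\|\tilde{\phi}'_{\delta M}\|_\infty,\|\tilde{\phi}''_{\delta M}\|_\infty\}\leq C(\delta)<\infty$ surely, hence $P_x$-a.s.\ for every initial condition. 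The only genuinely computational step is bounding $\partial_\omega K$ and $\partial_\omega^2 K$, which is routine but must confirm that the denominators are the sole source of $\delta$-dependence; the conceptual crux—and what makes these bounds deterministic rather than requiring control of the random weights $w_i$—is the normalization identity turning $\tilde{\phi}_{\delta M}$ into a probability-weighted average of kernels indexed by $\alpha$ bounded away from $\pm 1$.
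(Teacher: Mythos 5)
Your proposal is correct and follows essentially the same route as the paper's proof: split on whether $\Pi(r_M;\delta)(0)$ vanishes, invoke the finite-support property of the representing measure to write $\phi_{\delta M}$ as a finite sum of Poisson kernels with $\alpha$ confined to $[-1+\delta,1-\delta]$, and then use the uniform two-sided bounds on $K(\alpha,\cdot)$ and its first two derivatives (the paper's Lemma on kernel bounds gives $\max\{\|K\|_\infty,\|K'\|_\infty,\|K''\|_\infty\}\le 10\delta^{-6}$) so that, after dividing by $\hat{\mu}([-1,1])=\Pi(r_M;\delta)(0)$, all constants become deterministic and independent of $M$. Your framing of the normalization as a convex combination is just a restatement of the paper's division by the total mass, so there is no substantive difference.
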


Under Assumption~\ref{cond:phi_M_all}, we show in Proposition~\ref{prop:prop_c} that $|\braket{x_{\alpha},r_M-\gamma}_{\phi_M}|\to 0$ almost surely, uniformly over compact sets contained in $(-1,1)$. This convergence result is useful in combination with Lemma~\ref{lem:weightedl2_ineq} for showing the convergence of $\|\Pi(r_M;\delta)-\gamma\|_{\phi_M}$ to 0 as $M\to\infty$. The proof requires non-trivial modification of Proposition 8 in \citet{berg2023efficient}, which showed Proposition~\ref{prop:prop_c} for the unweighted case where $\phi_{M}\equiv 1$. In the unweighted case we have $\braket{x_{\alpha},r_M-\gamma}=\sum_{k\in\mathbb{Z}}\alpha^{|k|}\{r_M(k)-\gamma(k)\}$, and the proof uses the rapid decay of $|x_{\alpha}(k)|=|\alpha|^{|k|}$ as $k\to\infty$. In the weighted case, we have \begin{align*}
    \braket{x_{\alpha},r_M-\gamma}_{\phi_{M}}&=\frac{1}{2\pi}\int_{[-\pi,\pi]}\frac{K(\alpha,\omega)}{\phi_{M}(\omega)^2}\{\hat{r}_{M}(\omega)-\hat{\gamma}(\omega)\}\,d\omega=\sum_{k\in\mathbb{Z}}g_{M\alpha}(k)\{r_M(k)-\gamma(k)\}
\end{align*} where we let $g_{M\alpha}$ denote the inverse Fourier transform of $\frac{K(\alpha,\omega)}{\phi_{M}(\omega)^2}$. The proof of Proposition~\ref{prop:prop_c} for general $\phi_{M}$ satisfying Assumption~\ref{cond:phi_M_all} uses the smoothness assumptions on $\phi_M$ from Assumption~\ref{cond:phi_M_all} to ensure $|g_{M\alpha}(k)|$ decay suitably quickly. Please refer to Supplementary Material Section\ifnum \pageoption>1~\ref{supp_sec: 4_1_prop_c}\else~S4.1\fi~for the complete proof.

\begin{prop}\label{prop:prop_c}
Let $r_M$ be an input autocovariance estimator satisfying \ref{cond:rM_pwConvergence}--\ref{cond:rM:even}.  Assume Assumption \ref{cond:harris_ergodicity}--\ref{cond:geometric_ergodicity} and \ref{cond:integrability}. Suppose the weight function $\phi_M$ satisfies \ref{cond:phi_M} and \ref{cond:phi_M_asymp}. Let $\mathcal{K}$ be a nonempty compact set with $\mathcal{K} \subseteq (-1,1)$. We have
    \begin{align}\label{prop_c:result}
       \sup_{\alpha \in \mathcal{K}}|\langle x_\alpha, r_M - \gamma\rangle_{\phi_M} | \to 0 
\end{align}
$P_x$-almost surely as $M\to\infty$, for any $x \in \mathsf{X}$.
\end{prop}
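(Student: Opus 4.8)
The plan is to bound the weighted inner product as $\braket{x_\alpha, r_M - \gamma}_{\phi_M} = \sum_{k\in\mathbb{Z}} g_{M\alpha}(k)\{r_M(k) - \gamma(k)\}$, where $g_{M\alpha}$ is the inverse Fourier transform of $K(\alpha,\omega)/\phi_M(\omega)^2$, and then split the sum at a truncation level $|k| \le N$ versus $|k| > N$. For the finite head $|k|\le N$, the almost sure pointwise convergence $r_M(k)\to\gamma(k)$ from~\ref{cond:rM_pwConvergence} handles things, provided the coefficients $g_{M\alpha}(k)$ stay uniformly bounded; so the work is all in controlling the tail $|k|>N$ uniformly in both $M$ and $\alpha\in\mathcal{K}$. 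The central quantitative input will be a decay estimate of the form $|g_{M\alpha}(k)| \le C(\mathcal{K})/k^2$, uniform over $M$ large and $\alpha\in\mathcal{K}$, which I expect to obtain from the $C^2(\T)$ smoothness of $\phi_M$ together with the smoothness of $K(\alpha,\cdot)$ on compact $\alpha$-sets away from $\pm 1$.

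Concretely, I would first record that $K(\alpha,\cdot)=\widehat{x_\alpha}$ is real-analytic and $2\pi$-periodic, with $\|K(\alpha,\cdot)\|_\infty$, $\|\partial_\omega K(\alpha,\cdot)\|_\infty$, and $\|\partial_\omega^2 K(\alpha,\cdot)\|_\infty$ bounded uniformly over $\alpha\in\mathcal{K}$ since $\mathcal{K}$ is a compact subset of $(-1,1)$ and the Poisson kernel and its derivatives are continuous in $(\alpha,\omega)$ on $(-1,1)\times\T$. Combined with Assumption~\ref{cond:phi_M_asymp}, which for $M$ large gives uniform upper bounds on $\phi_M,\phi_M',\phi_M''$ and a uniform positive lower bound on $\phi_M$, I can bound $\|\psi_{M\alpha}\|_\infty$, $\|\psi_{M\alpha}'\|_\infty$, and $\|\psi_{M\alpha}''\|_\infty$ uniformly over $\alpha\in\mathcal{K}$ and $M$ large, where $\psi_{M\alpha}(\omega) = K(\alpha,\omega)/\phi_M(\omega)^2$. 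The quotient rule produces denominators $\phi_M^2$, $\phi_M^3$, $\phi_M^4$, all bounded below by a fixed positive constant for $M$ large, so each such norm is finite. Since $\psi_{M\alpha}\in C^2(\T)$, integration by parts twice in the Fourier coefficient integral yields $|g_{M\alpha}(k)| = |\widehat{\psi_{M\alpha}}(k)| \le \|\psi_{M\alpha}''\|_{L^1(\T)}/k^2 \le C(\mathcal{K})/k^2$ for $k\neq 0$, uniformly as required.

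With this decay bound in hand, I would estimate the tail by $\sum_{|k|>N} |g_{M\alpha}(k)|\,|r_M(k)-\gamma(k)|$. Here I can invoke that $\{r_M\}$ and $\gamma$ are uniformly bounded: $\gamma$ is bounded by $\gamma(0)$, and $r_M(0)\ge |r_M(k)|$ with $r_M(0)\to\gamma(0)$ gives a uniform almost sure bound on $\sup_k|r_M(k)|$ via~\ref{cond:rM:even} and~\ref{cond:rM_pwConvergence}. Thus $|r_M(k)-\gamma(k)|\le B$ for a finite random constant $B$ for all $M,k$ almost surely, and $\sum_{|k|>N}|g_{M\alpha}(k)| \le C(\mathcal{K})\sum_{|k|>N} k^{-2} = o(1)$ as $N\to\infty$, uniformly in $\alpha\in\mathcal{K}$ and $M$ large. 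Choosing $N$ to make the tail small and then letting $M\to\infty$ to kill the finite head gives~\eqref{prop_c:result}.

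I expect the main obstacle to be the uniform twice-differentiability bound on $\psi_{M\alpha}=K(\alpha,\cdot)/\phi_M^2$ and the attendant integration-by-parts step: one must verify that $\psi_{M\alpha}$ is genuinely in $C^2(\T)$ (so that boundary terms vanish by periodicity) and that the $C^2$ norms are controlled \emph{uniformly} over both $\alpha\in\mathcal{K}$ and $M$ large, simultaneously and almost surely, using only the qualitative smoothness in~\ref{cond:phi_M} together with the asymptotic uniform bounds in~\ref{cond:phi_M_asymp}. The subtlety is that~\ref{cond:phi_M} guarantees $C^2$ regularity for each fixed $M$ but with possibly $M$-dependent (random) constants, whereas~\ref{cond:phi_M_asymp} supplies the uniform-in-$M$ control only in the limit; care is needed to combine these so that the decay constant $C(\mathcal{K})$ is a single finite almost sure quantity valid for all sufficiently large $M$.
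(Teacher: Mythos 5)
Your proof is correct, and the core machinery coincides with the paper's: the same representation $\braket{x_\alpha,r_M-\gamma}_{\phi_M}=\sum_{k}g_{M\alpha}(k)\{r_M(k)-\gamma(k)\}$ with $g_{M\alpha}$ the inverse Fourier transform of $K(\alpha,\cdot)/\phi_M^2$, the same head/tail split, the same $|g_{M\alpha}(k)|\le C/k^2$ decay obtained by differentiating $K(\alpha,\omega)/\phi_M(\omega)^2$ twice and integrating by parts, and the same use of $|r_M(k)|\le r_M(0)\to\gamma(0)$ to bound the tail. Where you genuinely depart from the paper is the uniformity step. The paper first proves pointwise convergence at each fixed $\alpha$, then upgrades to uniformity over $\mathcal{K}$ via a finite $r_{\mathcal{K}}$-covering of $\mathcal{K}$, writing $g_{M\alpha}=x_\alpha*h_M$ with $h_M$ the inverse Fourier transform of $\phi_M^{-2}$ and controlling $|\braket{x_\alpha-x_{\alpha_j},r_M-\gamma}_{\phi_M}|$ through Young's inequality, a bound on $\|h_M\|_1$, and the estimate $\|x_\alpha-x_{\alpha_j}\|_1\le 2|\alpha-\alpha_j|/\{(1-|\alpha|)(1-|\alpha_j|)\}$. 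You instead observe that the decay constant itself can be taken uniform over $\mathcal{K}$, because the relevant bound factors as (an $\alpha$-dependent piece, namely $\sup_\omega\max\{|K|,|\partial_\omega K|,|\partial^2_\omega K|\}\le 10/(1-|\alpha|)^6\le 10/d_{\mathcal{K}}^6$ by the paper's Lemma on the Poisson kernel) times (an $M$-dependent piece controlled a.s.\ for large $M$ by Assumption~\ref{cond:phi_M_asymp}); with that, both the head and the tail estimates are already uniform in $\alpha$ and no covering is needed. This is a real simplification — it eliminates the second half of the paper's proof — and it is sound, including the subtlety you flag: since~\ref{cond:phi_M_asymp} gives the bounds only asymptotically, the constant $C(\mathcal{K})$ is valid only for $M$ beyond a finite random threshold, but that is exactly what a $\limsup_M$ argument requires. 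The only items you leave implicit that a full write-up should supply are the Parseval identity justifying the coefficient representation (both sequences are in $\ell_2$, so this is routine) and the separate treatment of the $k=0$ coefficient, which sits in the head and is bounded directly by $\|K(\alpha,\cdot)\|_\infty/\inf_\omega\phi_M^2(\omega)$.
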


We now present our two main theorems, Theorems~\ref{thm:l2conv} and Theorem~\ref{thm:weightedl1}. Theorem~\ref{thm:l2conv} shows the strong $\ell_2$ consistency of the weighted momentLS estimator for autocovariance sequence estimation. We note that the unweighted momentLS estimator is a special case of the weighted momentLS estimator with $\phi_M\equiv 1$. Thus the convergence result for the unweighted momentLS estimator also follows from Theorem~\ref{thm:l2conv}. 
\begin{thm}\label{thm:l2conv}
Suppose a Markov chain $X=X_0, X_1, \dots$ satisfies \ref{cond:harris_ergodicity}--\ref{cond:geometric_ergodicity} and suppose $g:\mathsf{X}\to \R$ satisfies \ref{cond:integrability}.
Let $r_M$ be an input autocovariance estimator satisfying \ref{cond:rM_pwConvergence}--\ref{cond:rM:even}.
Let $\Pi^{\phi_M}(r_M;\delta)$ be the weighted momentLS estimator, where the weight function $\phi_M(\omega)$ satisfies \ref{cond:phi_M} and \ref{cond:phi_M_asymp} and $\delta >0$ is chosen so that $\operatorname{Supp}(\mu_\gamma) \subseteq [-1+\delta, 1-\delta]$ where $\mu_\gamma$ is the representing measure of $\gamma$. Then 
\begin{enumerate}
\setlength{\itemsep}{0em}
    \item (convergence in weighted norm) $\|\gamma- \Pi^{\phi_M}(r_{M};\delta) \|_{\phi_M}^2 \to 0$
    \item (convergence in un-weighted $\ell_2$ norm) $\|\gamma- \Pi^{\phi_M}(r_{M};\delta) \|_2^2 \to 0$
\end{enumerate}
$P_x$ almost surely, as $M\to\infty$ for any initial condition $x\in\mathsf{X}$.
\end{thm}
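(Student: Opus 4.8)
The plan is to combine the key inequality from Lemma~\ref{lem:weightedl2_ineq} with the uniform convergence of inner products established in Proposition~\ref{prop:prop_c}. First I would apply Lemma~\ref{lem:weightedl2_ineq} with $r=r_M$, $f=\gamma$, and weight $\phi=\phi_M$. Since $\delta$ is chosen so that $\operatorname{Supp}(\mu_\gamma)\subseteq[-1+\delta,1-\delta]$, $\gamma$ is a feasible point in $\mathscr{M}_\infty(\delta)\cap\ell_2(\Z,\R)$, so the lemma yields
\begin{align*}
0\le\|\Pi^{\phi_M}(r_M;\delta)-\gamma\|_{\phi_M}^2\le -\int\langle x_\alpha,r_M-\gamma\rangle_{\phi_M}\,\mu_\gamma(d\alpha)+\int\langle x_\alpha,r_M-\gamma\rangle_{\phi_M}\,\hat\mu^{\phi_M}_\delta(d\alpha),
\end{align*}
where $\mu_\gamma$ and $\hat\mu^{\phi_M}_\delta$ are the representing measures for $\gamma$ and $\Pi^{\phi_M}(r_M;\delta)$. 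The goal is to show that both integrals on the right-hand side vanish almost surely as $M\to\infty$.

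Next I would bound each integral by the total mass of its representing measure times the uniform sup-norm of the integrand. Crucially, both $\mu_\gamma$ and $\hat\mu^{\phi_M}_\delta$ are supported inside $\mathcal{K}:=[-1+\delta,1-\delta]$, which is a compact subset of $(-1,1)$; for $\hat\mu^{\phi_M}_\delta$ this is exactly the constraint $\operatorname{Supp}(\hat\mu^{\phi_M}_\delta)\subseteq C=[-1+\delta,1-\delta]$, and Proposition~\ref{prop:finiteSupport} further guarantees the support avoids $\{-1,1\}$. Therefore
\begin{align*}
\left|\int\langle x_\alpha,r_M-\gamma\rangle_{\phi_M}\,\mu_\gamma(d\alpha)\right|\le\mu_\gamma(\mathcal{K})\sup_{\alpha\in\mathcal{K}}\left|\langle x_\alpha,r_M-\gamma\rangle_{\phi_M}\right|,
\end{align*}
and similarly with $\hat\mu^{\phi_M}_\delta(\mathcal{K})$ for the second term. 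By Proposition~\ref{prop:prop_c}, the supremum factor tends to $0$ $P_x$-almost surely. It remains to control the total masses. The mass $\mu_\gamma(\mathcal{K})=\mu_\gamma([-1,1])=\gamma(0)$ is a fixed finite constant, so that term is immediate. The second total mass $\hat\mu^{\phi_M}_\delta(\mathcal{K})$ is random and $M$-dependent, so I would show it stays almost surely bounded; a natural route is to bound it via $\Pi^{\phi_M}(r_M;\delta)(0)=\int\alpha^0\,\hat\mu^{\phi_M}_\delta(d\alpha)=\hat\mu^{\phi_M}_\delta(\mathcal{K})$, then control $\Pi^{\phi_M}(r_M;\delta)(0)$ by comparing against the feasible point $\gamma$ through the optimality/minimization property and the norm equivalence from Remark~\ref{rmk: weighted_inner_product}, together with the asymptotic bounds on $\phi_M$ from~\ref{cond:phi_M_asymp}.

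Once both right-hand integrals vanish, the sandwich gives $\|\Pi^{\phi_M}(r_M;\delta)-\gamma\|_{\phi_M}^2\to0$ $P_x$-a.s., proving part~1. For part~2, I would invoke the norm equivalence: under~\ref{cond:phi_M}, $\|x\|_2\le c_{1M}^2\|x\|_{\phi_M}$ for each $M$, but since $c_{1M}$ is random I would instead use the uniform asymptotic upper bound from~\ref{cond:phi_M_asymp}, namely $\limsup_M\|\phi_M\|_\infty\le c'_\phi$, to get $\|\gamma-\Pi^{\phi_M}(r_M;\delta)\|_2^2\le(c'_\phi)^4\|\gamma-\Pi^{\phi_M}(r_M;\delta)\|_{\phi_M}^2(1+o(1))$ eventually, whence part~1 forces the $\ell_2$ distance to $0$ as well. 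The main obstacle I anticipate is the uniform-in-$M$ control of the random total mass $\hat\mu^{\phi_M}_\delta(\mathcal{K})$: unlike the unweighted case, the weight $\phi_M$ changes the projection geometry at each $M$, so establishing an almost-sure bound on $\hat\mu^{\phi_M}_\delta(\mathcal{K})=\Pi^{\phi_M}(r_M;\delta)(0)$ requires carefully leveraging both the minimization property (comparing to the $\gamma$ feasible point) and the uniform two-sided bounds on $\phi_M$ to translate weighted-norm control back into control of the value at lag $0$.
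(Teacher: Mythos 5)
Your skeleton is the paper's: apply Lemma~\ref{lem:weightedl2_ineq} with $f=\gamma$, $r=r_M$, bound each integral by the total mass of its measure times $\sup_{\alpha\in[-1+\delta,1-\delta]}\left|\langle x_\alpha,r_M-\gamma\rangle_{\phi_M}\right|$, kill the supremum with Proposition~\ref{prop:prop_c}, and deduce part~2 from part~1 via Parseval and the asymptotic bound $\limsup_M\|\phi_M\|_\infty\le c_\phi'$ from~\ref{cond:phi_M_asymp}. The one step where your plan is genuinely at risk is the almost-sure bound on the random total mass $\hat\mu^{\phi_M}_\delta([-1,1])$. You propose to control $\Pi^{\phi_M}(r_M;\delta)(0)$ by ``comparing against the feasible point $\gamma$ through the optimality/minimization property and the norm equivalence.'' The only comparison the minimization property hands you directly is $\|\Pi^{\phi_M}(r_M;\delta)-r_M\|_{\phi_M}\le\|\gamma-r_M\|_{\phi_M}$, and the right-hand side is \emph{not} controlled under the standing assumptions: \ref{cond:rM_pwConvergence}--\ref{cond:rM:even} give only pointwise convergence of $r_M$, not $\ell_2$ convergence or even boundedness of $\|r_M-\gamma\|_2$, and the paper explicitly flags that this naive bound ``is not directly useful.'' Any route that converts weighted-norm control of $\Pi-r_M$ into control of $\Pi(0)$ by passing through $\|\gamma-r_M\|_{\phi_M}$ and norm equivalence therefore does not close.

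The paper's Proposition~\ref{prop:measbdd} avoids $\gamma$ entirely and uses only the optimality identities. From Propositions~\ref{prop:weighted_opt_ineq} and~\ref{prop:weighted_opt_eq}, $\|\Pi^{\phi_M}(r_M;\delta)\|_{\phi_M}^2=\int\langle x_\alpha,r_M\rangle_{\phi_M}\,\hat\mu^{\phi_M}_\delta(d\alpha)\le\sup_{\alpha\in[-1+\delta,1-\delta]}|\langle x_\alpha,r_M\rangle_{\phi_M}|\cdot\hat\mu^{\phi_M}_\delta([-1,1])$, while also $\|\Pi^{\phi_M}(r_M;\delta)\|_{\phi_M}^2=\int\!\!\int\langle x_\alpha,x_{\alpha'}\rangle_{\phi_M}\,\hat\mu^{\phi_M}_\delta(d\alpha')\hat\mu^{\phi_M}_\delta(d\alpha)\ge\inf_{\alpha,\alpha'}\langle x_\alpha,x_{\alpha'}\rangle_{\phi_M}\cdot\hat\mu^{\phi_M}_\delta([-1,1])^2$. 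Dividing gives $\hat\mu^{\phi_M}_\delta([-1,1])\le\sup_\alpha|\langle x_\alpha,r_M\rangle_{\phi_M}|\,/\,\inf_{\alpha,\alpha'}\langle x_\alpha,x_{\alpha'}\rangle_{\phi_M}$; the numerator is bounded using only $|r_M(k)|\le r_M(0)\to\gamma(0)$ together with the $O(1/k^2)$ decay of the Fourier coefficients of $K(\alpha,\cdot)/\phi_M^2$ (the same smoothness input already needed for Proposition~\ref{prop:prop_c}), and the denominator is bounded below by a positive constant depending only on $\delta$ and $c_\phi$, since $K(\alpha,\omega)\ge(1-|\alpha|)/(1+|\alpha|)\ge\delta/(2-\delta)$ on $[-1+\delta,1-\delta]$. (Alternatively, one can make your comparison-to-$\gamma$ idea rigorous by expanding $\|\Pi-r_M\|_{\phi_M}^2\le\|\gamma-r_M\|_{\phi_M}^2$ so that the uncontrolled $\|r_M\|_{\phi_M}^2$ cancels, writing the cross terms as $\int\langle x_\alpha,r_M\rangle_{\phi_M}$ against the two representing measures rather than using Cauchy--Schwarz, and combining with the same lower bound $\|\Pi\|_{\phi_M}^2\gtrsim\hat\mu^{\phi_M}_\delta([-1,1])^2$ to get a quadratic inequality in the mass.) With the mass-control step repaired in one of these ways, the rest of your argument goes through as written.
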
 

\begin{cor}[vague convergence of $\hat{\mu}_{\delta}^{\phi_M}$] \label{cor:meas_conv}Assume the same conditions as in Theorem
\ref{thm:l2conv}. For each initial condition $x \in \mathrm{X}$, we have $P_x(\hat{\mu}_{\delta}^{\phi_M} \to \mu_\gamma$ vaguely, as $M \to\infty)=1$, where $\hat{\mu}_{\delta}^{\phi_M}$ and $\mu_\gamma$ are the representing measures for $\Pi^{\phi_M}(r_M;\delta)$ and $\gamma$, respectively.
\end{cor}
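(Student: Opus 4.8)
The plan is to deduce vague convergence by the method of moments, reading off elementwise (hence moment) convergence directly from Theorem~\ref{thm:l2conv}. First I would use part~2 of Theorem~\ref{thm:l2conv}, the unweighted $\ell_2$ consistency $\|\gamma - \Pi^{\phi_M}(r_M;\delta)\|_2 \to 0$, $P_x$-a.s. Since $|x(k)| \le \|x\|_2$ for every coordinate, this forces coordinatewise convergence, so that for each fixed $k \ge 0$,
\[
\int_{[-1+\delta,1-\delta]} \alpha^{|k|}\, \hat{\mu}_\delta^{\phi_M}(d\alpha) = \Pi^{\phi_M}(r_M;\delta)(k) \longrightarrow \gamma(k) = \int_{[-1+\delta,1-\delta]} \alpha^{|k|}\, \mu_\gamma(d\alpha),
\]
$P_x$-almost surely. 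In other words, every moment of $\hat{\mu}_\delta^{\phi_M}$ converges to the corresponding moment of $\mu_\gamma$.

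Next I would exploit that, by the definition of the feasible set $\mathscr{M}_\infty(\delta)$ together with the hypothesis $\operatorname{Supp}(\mu_\gamma) \subseteq [-1+\delta,1-\delta]$, all the measures $\hat{\mu}_\delta^{\phi_M}$ and the limit $\mu_\gamma$ live on the single fixed compact interval $C_\delta := [-1+\delta,1-\delta] \subset (-1,1)$. Taking $k=0$ above shows the total masses satisfy $\hat{\mu}_\delta^{\phi_M}(C_\delta) = \Pi^{\phi_M}(r_M;\delta)(0) \to \gamma(0) = \mu_\gamma(C_\delta)$, so they are almost surely eventually bounded. Because the monomials $\{\alpha^k\}_{k\ge 0}$ have dense linear span in $C(C_\delta)$ by the Stone--Weierstrass theorem, a routine three-term approximation upgrades moment convergence to $\int_{C_\delta} h\, d\hat{\mu}_\delta^{\phi_M} \to \int_{C_\delta} h\, d\mu_\gamma$ for every $h \in C(C_\delta)$: given $\varepsilon>0$, pick a polynomial $p$ with $\|h-p\|_\infty < \varepsilon$ and bound the error by $|\int (h-p)\,d\hat{\mu}_\delta^{\phi_M}| + |\int p\, d\hat{\mu}_\delta^{\phi_M} - \int p\, d\mu_\gamma| + |\int (p-h)\, d\mu_\gamma|$, where the outer terms are controlled by $\varepsilon$ times the bounded masses and the middle term vanishes by linearity from the convergence of each moment.

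Finally I would convert this weak convergence on $C_\delta$ into the required vague convergence on $(-1,1)$: any test function $f \in C_c((-1,1))$ restricts to an element of $C(C_\delta)$, and since every measure involved vanishes off $C_\delta$ we have $\int f\, d\hat{\mu}_\delta^{\phi_M} = \int_{C_\delta} f\,d\hat{\mu}_\delta^{\phi_M} \to \int_{C_\delta} f\, d\mu_\gamma = \int f\, d\mu_\gamma$, which is precisely vague convergence. The only genuinely delicate point is bookkeeping of null sets: the density argument must hold simultaneously for all monomials on one probability-one event, which is automatic because the exceptional set for each fixed $k$ is null and a countable union of null sets is null. This mirrors part~2 of Theorem~\ref{thm:unweighted}, the sole change being that the coordinatewise convergence now descends from the weighted $\ell_2$ result rather than the unweighted one.
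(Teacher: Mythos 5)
Your proof is correct and follows essentially the same route as the paper: the paper's proof is a one-line citation of part~2 of Theorem~\ref{thm:l2conv} together with Lemma~7 of \citet{berg2023efficient}, and your argument simply makes explicit the content of that cited lemma (coordinatewise moment convergence from the $\ell_2$ bound, a Stone--Weierstrass three-term approximation on the common compact support $[-1+\delta,1-\delta]$, and the passage to vague convergence). The details you supply --- the uniform bound on the total masses via the $k=0$ moment and the countable union of null sets --- are exactly the right ones, so there is no gap.
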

\begin{proof}
    This is the direct consequence of strong $\ell_2$ convergence of the weighted momentLS estimator $\Pi^{\phi_M}(r_M;\delta)$ in Theorem \ref{thm:l2conv} as well as Lemma 7 in \citet{berg2023efficient}.
\end{proof}

Finally, we present strong consistency results for the spectral density function $\phi_{\delta M}^W$ and asymptotic variance $\sigma^2(\Pi^{\phi_M}(r_M;\delta))$ estimators based on the weighted moment LS estimator $\Pi^{\phi_M}(r_M;\delta)$, where we let $\phi_{\delta M}^W (\omega) =\sum_{k\in\Z}\Pi^{\phi_M}(r_M;\delta)(k)e^{-i\omega k}$ and $\sigma^2(\Pi^{\phi_M}(r_M;\delta)) = \sum_{k\in\Z} \Pi^{\phi_M}(r_M;\delta)(k)$, as defined in \eqref{def: specden_mLS} and \eqref{def: asymp_var_mLS}, respectively. Similarly, the strong convergence of the estimated spectral density function and asymptotic variance based on the unweighted moment LS estimator follow as a special case of Theorem \ref{thm:weightedl1} with $\phi_M\equiv 1$.

\begin{thm}[strong convergence of weighted spectral density and asymptotic variance momentLS estimators]\label{thm:weightedl1}
Assume the same conditions as in Theorem
\ref{thm:l2conv}. Let $\sigma^2(\gamma) = \sum_{k\in\Z} \gamma(k)$ be the true asymptotic variance.  Then for each $x\in\mathrm{X}$, we have
\begin{enumerate}
	\item ($\ell_1$-convergence of autocovariance sequence estimate) $\|\Pi^{\phi_{M}}(r_M; \delta) - \gamma\|_1 \to 0$,
	\item (uniform strong convergence of spectral density estimate) $\underset{\omega\in[-\pi,\pi]}{\sup}	|\phi_{\delta M}^W(\omega) - \phi_\gamma(\omega)| \to 0$,
    \item (strong convergence of asymptotic variance estimate) $\sigma^2(\Pi^{\phi_{M}}(r_M;\delta))\to \sigma^2(\gamma)$,
\end{enumerate} $P_x$-almost surely, as $M\to\infty$.
\end{thm}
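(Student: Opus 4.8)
The plan is to prove the three claims in order, noting that claims~2 and~3 follow almost immediately once claim~1 is established. Write $d_M = \Pi^{\phi_M}(r_M;\delta) - \gamma$. The structural fact I would exploit is that both $\Pi^{\phi_M}(r_M;\delta)$ and $\gamma$ are moment sequences whose representing measures $\hat{\mu}_\delta^{\phi_M}$ and $\mu_\gamma$ are supported in the fixed compact interval $[-1+\delta,1-\delta]\subset(-1,1)$; the inclusion for $\mu_\gamma$ holds by the choice of $\delta$, and for $\hat{\mu}_\delta^{\phi_M}$ it holds because $\mathscr{M}_\infty(\delta)$ is built from measures on that interval. Hence for every $k$ we have $d_M(k) = \int \alpha^{|k|}(\hat{\mu}_\delta^{\phi_M}-\mu_\gamma)(d\alpha)$, and since $|\alpha|\le 1-\delta$ on the support,
\[
|d_M(k)| \le (1-\delta)^{|k|}\big(\hat{\mu}_\delta^{\phi_M}([-1+\delta,1-\delta]) + \mu_\gamma([-1+\delta,1-\delta])\big).
\]
The two total masses are exactly $\Pi^{\phi_M}(r_M;\delta)(0)$ and $\gamma(0)$, so this geometric bound becomes useful once the masses are controlled.

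For claim~1, I would split the $\ell_1$ norm as $\|d_M\|_1 = \sum_{|k|\le N}|d_M(k)| + \sum_{|k|>N}|d_M(k)|$. Theorem~\ref{thm:l2conv} gives $\|d_M\|_2\to 0$, hence coordinatewise convergence $d_M(k)\to 0$; in particular $d_M(0)=\Pi^{\phi_M}(r_M;\delta)(0)-\gamma(0)\to 0$, so both total masses are bounded by $2\gamma(0)+1$ for all large $M$, $P_x$-almost surely. The geometric bound above then yields $\sum_{|k|>N}|d_M(k)| \le (2\gamma(0)+1)\sum_{|k|>N}(1-\delta)^{|k|}$, a tail that can be made smaller than any $\varepsilon>0$ by choosing $N$ large, \emph{uniformly in $M$}. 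For the fixed finite head, coordinatewise convergence gives $\sum_{|k|\le N}|d_M(k)|\to 0$. Letting $M\to\infty$ and then $N\to\infty$ yields $\|d_M\|_1\to 0$.

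Claims~2 and~3 then follow directly. Since $\phi_{\delta M}^W-\phi_\gamma$ is the Fourier transform of $d_M\in\ell_1(\Z,\R)$, we have $\sup_{\omega}|\phi_{\delta M}^W(\omega)-\phi_\gamma(\omega)| \le \|d_M\|_1 \to 0$, giving uniform convergence of the spectral density. For the asymptotic variance, $\sigma^2(m)$ is precisely the Fourier transform evaluated at $\omega=0$, so $|\sigma^2(\Pi^{\phi_M}(r_M;\delta))-\sigma^2(\gamma)| = |\phi_{\delta M}^W(0)-\phi_\gamma(0)| \le \|d_M\|_1 \to 0$.

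I expect the main obstacle to be the uniform-in-$M$ control of the tail in claim~1, i.e. upgrading the $\ell_2$ convergence of Theorem~\ref{thm:l2conv} to $\ell_1$ convergence. The delicate point is that the geometric decay rate $(1-\delta)^{|k|}$ must be valid simultaneously for all $M$, which rests on the almost-sure eventual boundedness of the random masses $\Pi^{\phi_M}(r_M;\delta)(0)$ together with the uniform gap $\delta>0$ separating the supports from the endpoints $\pm 1$. This is exactly where the hypothesis $\operatorname{Supp}(\mu_\gamma)\subseteq[-1+\delta,1-\delta]$ with $\delta>0$ is indispensable: without a uniform gap the sequences need not even be summable, and neither the passage to $\ell_1$ nor the resulting uniform spectral and asymptotic-variance convergence would go through.
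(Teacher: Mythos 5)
Your proposal is correct and follows essentially the same route as the paper's proof: split the $\ell_1$ norm into a finite head controlled by the $\ell_2$ convergence from Theorem~\ref{thm:l2conv} and a geometric tail controlled by the support containment $\operatorname{Supp}(\hat{\mu}_\delta^{\phi_M}),\operatorname{Supp}(\mu_\gamma)\subseteq[-1+\delta,1-\delta]$ together with the eventual boundedness of the total masses, then deduce claims~2 and~3 from $\|\cdot\|_1$ domination of the Fourier transform. The only cosmetic difference is that the paper bounds the head via Cauchy--Schwarz as $\|d_M\|_2\sqrt{2L+1}$ rather than by coordinatewise convergence of finitely many terms; both are valid.
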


\section{Empirical studies}\label{sec: 5_empirical}
In this section, we conduct numerical experiments to empirically evaluate the performance of the proposed weighted moment LS estimator in estimating the spectral density function and asymptotic variance. We consider two simulated settings: 1. an AR(1) chain where the true spectral density and asymptotic variance can be obtained analytically, and 2. Markov chains generated from a Bayesian LASSO regression model based on real-world data. In Section~\ref{sec:settings}, we describe our AR(1) and Bayesian LASSO simulation settings. In Section~\ref{subsec:ests}, we describe the estimators being compared. In Section~\ref{subsec:results}, we describe our comparison metrics and the simulation results.

\subsection{Settings\label{sec:settings}}

\paragraph*{Autoregressive chain} We consider the AR(1) autoregressive chain with $X_{t+1}=\rho X_t+\epsilon_{t+1}, \,\, t=0,1,2,\dots$
, where $\epsilon_t\overset{iid}{\sim}N(0,\tau^2)$ and $\rho\in (-1,1)$. The stationary measure $\pi$ for the $X_t$ chain is the measure corresponding to a $N(0,\tau^2/(1-\rho^2))$ random variable, and the $X_t$ chain can be shown to be reversible with respect to $\pi$. We consider the identity function $g(x)=x$.  It can be shown that for each $k\in\Z$, $\gamma(k) = \operatorname{Cov}_\pi(X_0, X_k) = \frac{\tau^2}{(1-\rho^2)}\rho^{|k|} $, and therefore $\gamma(k)$ can be represented as $\gamma(k) =\int x^{|k|}F(dx)$ for all $k \in \Z$ by letting $F=\frac{\tau^2}{1-\rho^2}\delta_{\rho}$, where $\delta_\rho$ denotes a unit point mass measure at $\rho$. The true spectral density is $\phi_{\gamma}(\omega)=\frac{\tau^2}{1-\rho^2}K(\rho,\omega)=\frac{\tau^2}{1-2\rho \cos(\omega)+\rho^2}$, and the true asymptotic variance is $\sigma^2(\gamma)=\frac{\tau^2}{(1-\rho)^2}$.

\paragraph*{Bayesian LASSO} 

We consider a linear model of the form 
\begin{align}
    Y=\mu\mathbf{1}_n+X\beta+\epsilon\label{eq:bayesianLM}
\end{align} where $Y$ is an $n\times 1$ vector of responses, $\mu\in\mathbb{R}$ is an unknown intercept, $\beta\in\mathbb{R}^{p}$ is a vector of unknown coefficients, $\mathbf{1}_{n}$ is the length $n$ vector of 1's, $X\in\mathbb{R}^{n\times p}$ is a matrix of standardized covariates, and $\epsilon\sim N(0,\sigma^2\mathbf{I}_{n})$ is a vector of iid Gaussian errors. The LASSO estimator of~\citet{tibshirani1996regression} solves the problem \begin{align}\label{eq: LASSO}
    \underset{\beta\in\mathbb{R}^p}{\min}\, (\tilde{Y}-X\beta)^\top(\tilde{Y}-X\beta)+\lambda\sum_{j=1}^{p}|\beta_j|
\end{align} where $\lambda>0$ is a regularization parameter, and $\tilde{Y}=Y-\bar{y}\mathbf{1}_n$, where $\bar{y}=\sum_{i=1}^{n}Y_i/n$. The LASSO estimator can be viewed as the posterior mode of a Bayesian model with independent Laplace priors on $\beta_i$. 
\citet{park2008bayesian} propose a Bayesian analysis for \eqref{eq: LASSO}, called Bayesian LASSO, using a conditional Laplace prior for $\beta|\sigma^2$ and an improper prior for $\sigma^2$. By representing the Laplace distribution as a scale mixture of Normals with an exponential mixing density, they propose an augmented prior specification with additional random variables $\tau=(\tau_1,...,\tau_p)$.  
Specifically, they assume an improper prior for $\sigma^2$, i.e., $p(\sigma^2) = 1/\sigma^2$, conditional iid exponential random variable $\tau_i$ with rate parameter $\lambda^2/2$, i.e., $\tau_i|\sigma^2\overset{iid}{\sim}\text{Exponential}(\lambda^2/2)$  for $i=1,\dots,p$, and conditionally independent normal random variables $\beta=(\beta_1,\dots,\beta_p)$ with mean zero and variance $\sigma^2 D_\tau$, i.e., $\beta|\tau,\sigma^2\sim N(\mathbf{0}_p,\sigma^2D_{\tau})$,  where $D_{\tau}=\text{diag}(\tau_1^2,...,\tau_p^2)$.

\citet{park2008bayesian} proposed a three-block Gibbs sampler for sampling from the posterior $(\beta,\sigma^2,\tau|Y)$, which cycles through updates of $\beta$, $\sigma^2$, and $\tau$. Later, \citet{rajaratnam2019uncertainty} developed a two-block Gibbs sampler for $(\beta,\sigma^2,\tau|Y)$ which alternates updates between $(\beta,\sigma^2)$ and $\tau$. They showed that the two-block Gibbs sampler has theoretical advantages over the original three-block sampler. They also showed the transition kernel for $(\beta,\sigma^2)$ marginal chain from the two-block sampler is positive and trace-class, and hence geometrically ergodic. We note that the $(\beta,\sigma^2)$ chain from the two-block sampler is reversible~\citep{rajaratnam2019uncertainty}. In our experiment, we will use the two-block Gibbs sampler by \citet{rajaratnam2019uncertainty}, with the updating scheme summarized in Algorithm~\ref{alg:BLsampler}.

\begin{algorithm}
\caption{\cite{rajaratnam2019uncertainty} two-block Bayesian LASSO sampler}\label{alg:BLsampler}
\begin{algorithmic}
\State $(\beta,\sigma^2)$ update: draw from $\beta,\sigma^2|\tau,Y$
\State \quad\quad $\sigma^2|\tau,Y\sim \text{InverseGamma}\left\{(n-1)/2,\;\tilde{Y}^\top(\mathbf{I}_{n}-XA_{\tau}^{-1}X^\top)\tilde{Y}/2\right\}$
\State \quad\quad $\beta|\sigma^2,\tau,Y\sim N(A_{\tau}^{-1}X^\top \tilde{Y},\;\sigma^2A_{\tau}^{-1})$ where $A_{\tau}=X^{\top}X+D_{\tau}^{-1}$\\
\State $\tau$ update: draw from $\tau|\beta,\sigma^2,Y$
\State \quad\quad $(1/\tau_i^2)|\beta,\sigma^2,Y\overset{ind.}{\sim} \text{InverseGaussian}(\mu'=\sqrt{\frac{\lambda^2\sigma^2}{\beta_j^2}},\;\lambda'=\lambda^2)$, \\
where the InverseGaussian density is given by $f_{\mu',\lambda'}(x)=\sqrt{\frac{\lambda'}{2\pi}}x^{-3/2}\exp\{-\frac{\lambda'(x-\mu')^2}{2(\mu')^2x}\}$.
\end{algorithmic}
\end{algorithm}

We consider a linear model of the form~\eqref{eq:bayesianLM} for predicting concrete compressive strength, using the compressive strength data in~\citet{yeh1998modeling} from~\citet{misc_concrete_compressive_strength_165}. The dataset contains $n=1030$ compressive strength measurements together with measurements of $p=8$ covariates, so that the design matrix $X\in\mathbb{R}^{n\times p}$ in~\eqref{eq:bayesianLM}. Thus, $p+1=9$ variables are sampled in the $(\beta,\sigma^2)$ block of the Gibbs sampler. We use the sampler implementation from the supplement S4 of~\citet{rajaratnam2019uncertainty}. We choose $\lambda=50$ for the amount of penalty. The posterior mean estimate for $\sigma^2$ with $\lambda=50$, based on the average of $100000$ chains, each with length $M=200000$, was $127.9$. For comparison, the residual variance estimate from fitting the full, frequentist unpenalized model~\eqref{eq:bayesianLM} is $108.16$, compared to a residual variance estimate  of $279.08$ from the intercept-only frequentist model. Thus, heuristically speaking, the posterior draws of $\beta$ seem to fit the data reasonably well. In particular, there does not seem to be an overwhelming amount of shrinkage due to the choice of $\lambda$.

\subsection{Descriptions of estimators\label{subsec:ests}}
We investigated the following estimators: 
\begin{enumerate}
\setlength\itemsep{0em}
    \item {\bfseries Weighted and unweighted MomentLS (mLS-w and mLS-uw)} our moment least squares estimators $\Pi^{\phi_{M}}(r_M;\delta)$ and $\Pi(r_M;\delta)$, with the empirical autocovariance sequence $\tilde{r}_{M}$ as the input sequence. For the weighted estimator, we take $\phi_M$ to be the spectral density estimate $\phi_{\delta M}$ resulting from the unweighted moment LS estimate. For the choice of $\delta$, we utilized the tuning method proposed in \citet{berg2023efficient} to select $\delta$, defined as
    $\tilde{\delta}_{M} = 0.8 \frac{1}{L}\sum_{l=1}^L \hat{\delta}_{M/L}^{(l)}.$ Here, $\hat{\delta}_{M/L}^{(l)}$ is an estimator for $\delta_{\gamma}$ computed from the $l$th batch of empirical autocovariances. We used $L=5$. 
\item {\bfseries Bartlett kernel (Bart)} the windowed empirical autocovariance sequence $\check{r}_M(k) = w_M(k) \tilde{r}_M(k)$ with a triangle lag-window $w_M(k) = (1-|k|/b_{M}) I(|k|<b_{M})$ with threshold $b_{M}$ chosen using the truncation point (batch size) tuning method implemented in the R package {\bfseries mcmcse}~\citep{liu2021batch}. 
    \item \textbf{Infinite order kernel (IO)} the windowed autocovariance sequence $\bar{r}_{M}(k)=w_M(k)\tilde{r}_{M}(k)$ with trapezoidal lag-window proposed by ~\citet{mcmurry2004nonparametric}, where $w_M(k) = I( \frac{|k|}{b_{M}}\leq 0.5) + \frac{2(b_{M}-|k|)}{b_{M}} I(0.5<\frac{|k|}{b_M}\leq 1)$, and $b_M$ is selected selected using the empirical rule of \citet{politis2003adaptive}. We use the R package {\bfseries iosmooth}~\citep{mcmurry2017package}.
    \item {\bfseries Initial convex sequence (Init-Conv)} The initial convex sequence estimator of~\citet{geyer1992practical}, using the implementation in the R package {\bfseries mcmc}~\citep{geyerMCMCpackage}
    
    \item {\bfseries Overlapping batch means (OBM)} the overlapping batch means~\citep{meketon1984overlapping} method, with batch size tuning method implemented in the R package~\citep{liu2021batch}.

\end{enumerate}
We note that the momentLS, Bartlett kernel, and infinite order kernel estimators estimate the autocovariance sequence $\gamma$. Therefore, both spectral density and asymptotic variance estimates are obtained based on the Fourier transform and sum of the estimated autocovariance sequence, respectively (in the case of MomentLS, we perform infinite sum, see Section \ref{subsec: 3_2_weighted_mLS} for details). In contrast, initial convex sequence and overlapping batch mean estimators directly estimate the asymptotic variance, and thus are used only for the comparison for asymptotic variance estimation.

\subsection{Results\label{subsec:results}}
\paragraph{Ground truth} For the Bayesian Lasso example, the true spectral density and asymptotic variance are unknown. To estimate these quantities, we independently sampled $B=100000$ chains, each of length $M=200000$ with $10000$ warm-up iterations, using the two-block Bayesian Lasso from \cite{rajaratnam2019uncertainty}. For each chain, we used the infinite order kernel with the empirical rule to estimate the spectral density at $\omega_j = 2\pi j/M_0$ for $j=0,\dots,M_0-1$, with $M_0=80000$. These estimates were then averaged across $B=100000$ chains to obtain the final ``ground truth" estimate of the spectral density function. The estimated spectral density at $0$ was used for the ``ground truth" for the asymptotic variance.

\paragraph{Comparison metrics}
For the Bayesian LASSO and AR(1) examples, we compare estimators based on mean squared error for asymptotic variance estimation, and integrated squared error for spectral density estimation. We describe these metrics in detail below. We consider chain lengths $M\in\{10000,40000\}$, and we simulate $B=500$ chains at each sample size.
For each asymptotic variance estimator $est$, sample size $M$, and parameter $par$ under consideration, we compute the average, over $B=500$ chains, of the squared error for asymptotic variance estimation $\{\widehat{\sigma^2_{est,par,M}}-\sigma^2_{ref,par}\}^2$, where $\sigma^2_{ref,par}$ denotes a reference asymptotic variance estimate for the parameter $par$. Similarly, for spectral density estimation, we compute an average integrated squared error (ISE) over $B=500$ chains, where $\text{ISE}=\frac{1}{2\pi}\int_{[-\pi,\pi]}\{\widehat{\phi_{est,par,M}}(\omega)-\phi_{par,ref}(\omega)\}^2\,d\omega$, for each considered spectral density estimator, parameter, and sample size. For computing the required integral, we use the approximation $\text{ISE}\approx M_0^{-1}\sum_{j=0}^{M_0-1}\{\widehat{\phi_{est,par,M}}(\omega_j)-\phi_{par,ref}(\omega_j)\}^2$, where $\omega_j=\frac{2\pi j}{M_0}$, $j=0,...,M_0-1$, and we use $M_0=80000$. 

For the Bayesian LASSO example, we compute (univariate) asymptotic variance and spectral density estimates for each of the model parameters $\sigma^2$ and $\beta_i,$ $i=1,...,p$.

\begin{table}[H]
\centering
\setlength\tabcolsep{3pt}
\caption{\textit{Average squared error (s.e.) for the asymptotic variance estimators and average integrated squared error (s.e.) for the spectral density estimators from AR1 examples with $\rho=-0.9$ and $0.9$ when $M=10000$. AR1(MLE) corresponds to the results from the parametric AR1 model. The method with the smallest average error (except AR1(MLE)) is highlighted in bold for each chain.}}
\label{tab:arTab}

\begin{subtable}[t]{\textwidth}
\small
\centering
\begin{threeparttable}
\caption{Asymptotic variance mean squared error\tnote{1}\label{tab:avarErrorAR}}
\begin{tabular}{l|c|ccccc}
  \hline
& AR1(MLE) & OBM & Init-Conv & IO & mLS-uw & mLS-w \\ 
  \hline
AR1($-0.9$) & 0.19 (0.01) & 13.61 (1.01) & 1346.68 (57.61) & 13.29 (0.89) & 16.11 (0.90) & \textbf{2.95 (0.31) }\\ 
  AR1($0.9$) & 66.43 (4.52) & 236.81 (11.12) & 178.93 (18.85) & 217.07 (24.48) & 165.16 (13.45) & \textbf{134.75 (10.42) }\\ 
   \hline
\end{tabular}
 \begin{tablenotes}
\footnotesize
\item[1] values for AR1($\rho=-.9$) are scaled by $10^4$
    \end{tablenotes}
\end{threeparttable}
\vspace{1em}

\begin{threeparttable}
\caption{spectral density mean integrated squared error\label{tab:iseAR}}
\begin{tabular}{l|c|cccc}
  \hline
 & AR1 (MLE) & Bart & IO & mLS-uw & mLS-w \\ 
  \hline
AR1($-0.9$) & 1.21 (0.07) & 3.53 (0.13) & 2.90 (0.13) & 2.10 (0.11) & \textbf{1.84 (0.13) }\\ 
AR1($0.9$)  & 1.09 (0.07) & 3.39 (0.11) & 3.05 (0.15) & 2.14 (0.12) & \textbf{1.89 (0.15) }\\ 
   \hline
\end{tabular}
\end{threeparttable}
\end{subtable}

\end{table}

\begin{table}[H]
\centering
\setlength\tabcolsep{3pt}
\caption{\textit{Estimated average mean squared error (s.e.) for the asymptotic variance estimators and mean integrated squared error (s.e.) for the spectral density estimators from Bayesian Lasso example when $M=10000$. The method with the smallest average error is highlighted in bold for each coefficient.}}

\label{tab:BLtab}

\begin{subtable}[t]{\textwidth}
\small
\centering

\begin{threeparttable}
\caption{Asymptotic variance mean squared error\tnote{1}\label{tab:avarErrorBL}}
\begin{tabular}{l|cccccc}
  \hline
 & Bartlett & OBM & Init-con & IO-kernel & mLS\_uw & mLS\_w \\ 
  \hline
$\beta_0$ & 333.45 (14.90) & 343.99 (15.19) & 162.57 (10.41) & 164.04 (9.82) & 163.10 (10.78) & \textbf{153.18 (10.37)} \\ 
  $\beta_1$ & 291.61 (13.49) & 300.97 (13.75) & 166.04 (11.06) & 160.04 (9.99) & 164.35 (11.04) & \textbf{148.67 (10.02)} \\ 
  $\beta_2$ & 368.42 (18.31) & 383.17 (18.83) & 227.44 (15.42) & 230.95 (14.68) & 229.72 (16.54) & \textbf{214.63 (15.28)} \\ 
  $\beta_3$ & 98.26 (4.89) & 101.43 (5.01) & 55.91 (3.46) & 58.10 (3.32) & 52.96 (3.27) & \textbf{49.24 (3.14)} \\ 
  $\beta_4$ & 63.69 (3.57) & 65.62 (3.64) & 41.21 (2.66) & 37.00 (2.32) & 38.75 (2.59) & \textbf{35.53 (2.39)} \\ 
  $\beta_5$ & 19.65 (1.00) & 20.10 (1.01) & 11.89 (0.84) & 12.27 (0.81) & 12.06 (0.89) & \textbf{11.62 (0.84)} \\ 
  $\beta_6$ & 141.89 (6.94) & 147.49 (7.17) & 87.85 (5.50) & 82.63 (4.68) & 86.93 (5.54) & \textbf{81.21 (5.27) }\\ 
  $\beta_7$ & 0.54 (0.03) & 0.55 (0.03) & 0.71 (0.05) & 0.53 (0.03) & 0.49 (0.03) & \textbf{0.45 (0.03)} \\ 
$\sigma^2$ & 351.56 (18.28) & 357.09 (18.47) & 474.50 (32.61) & 490.00 (17.44) & 337.27 (23.94) & \textbf{311.73 (21.89)} \\ 
   \hline
\end{tabular}
 \begin{tablenotes}
\footnotesize
\item[1] values for $\beta_0$-$\beta_7$ are scaled by $10^4$, and $\sigma^2$ is scaled by $10^2$
    \end{tablenotes}
\end{threeparttable}
\vspace{1em}

\begin{threeparttable}
\caption{spectral density mean integrated squared error\tnote{2}\label{tab:iseBL}}
\begin{tabular}{c|cccc}
  \hline
 & Bartlett & IO-kernel & mLS\_uw & mLS\_w \\ 
  \hline
$\beta_0$ & 177.58 (6.11) & 101.31 (3.95) & 80.85 (3.64) &\textbf{ 74.76 (3.45)} \\ 
  $\beta_1$ & 153.17 (5.42) & 92.70 (3.70) & 75.28 (3.49) & \textbf{68.34 (3.29)} \\ 
  $\beta_2$ & 198.89 (6.98) & 128.41 (5.34) & 103.72 (5.01) & \textbf{94.98 (4.75)} \\ 
  $\beta_3$ & 76.46 (2.53) & 50.57 (1.85) & 37.32 (1.66) & \textbf{34.89 (1.63)} \\ 
  $\beta_4$ & 61.09 (1.97) & 37.88 (1.46) & 29.41 (1.34) &\textbf{ 27.04 (1.27)} \\ 
  $\beta_5$ & 12.52 (0.50) & 9.07 (0.40) & 7.37 (0.38) & \textbf{7.10 (0.37)} \\ 
  $\beta_6$ & 73.69 (2.69) & 48.30 (1.85) & 40.49 (1.81) & \textbf{37.58 (1.77)} \\ 
  $\beta_7$ & 2.17 (0.06) & 2.17 (0.04) & 1.31 (0.05) & \textbf{1.27 (0.05)} \\ 
$\sigma^2$ & 113.55 (3.49) & 168.96 (2.39) & 84.01 (2.99) & \textbf{82.44 (2.91)} \\ 
   \hline
\end{tabular}
 \begin{tablenotes}
\footnotesize
\item[2] values for $\beta_0$-$\beta_7$ are scaled by $10^5$, and $\sigma^2$ is scaled by $10^2$
    \end{tablenotes}
\end{threeparttable}
\end{subtable}

\end{table}
\paragraph{Results} Tables~\ref{tab:arTab} and~\ref{tab:BLtab} show the results for the AR(1) and Bayesian LASSO example, respectively, for the chain length $M=10000$. We include tables for $M=40000$ in Supplementary Material Section\ifnum \pageoption>1~\ref{supp_sec: 5_tables}\else~S5\fi. For the AR(1) example, the weighted momentLS estimator performs the best for asymptotic variance estimation out of all of the considered estimators except for the AR(1) the maximum likelihood estimator, at both $\rho=0.9$ and $\rho=-0.9$ (Table~\ref{tab:avarErrorAR}). For spectral density estimation for the AR(1) example, the weighted momentLS again performs the best out of the considered estimators (excepting the AR(1) maximum likelihood estimator). 
For the Bayesian LASSO example, a similar pattern holds. For each of the coefficients $\sigma^2$ and $\beta_i$, $i=1,...,p$, the weighted momentLS estimator performs the best out of the considered estimators, for both asymptotic variance estimation as well as spectral density estimation.  Notably, we observe uniform improvement from the unweighted moment LS estimators.

\section{Conclusion}\label{sec: 6_conclusion}

In this work, we developed a novel weighted $\ell_2$ projection method for estimating autocovariance sequences and spectral density functions from reversible Markov chains. Building upon the least-squares estimation method for autocovariance sequences proposed in \citet{berg2023efficient}, we proposed an alternative approach based on the Fourier transform of the input autocovariance sequence, in order to take heteroscedasticity and dependence in the input autocovariance sequence into account. Our approach can be understood as estimating a spectral density function, while leveraging a mixture representation for the Fourier series of the spectral density function.  Under mild assumptions on the input sequence estimator and the weight function, we showed the consistency of our estimator for both spectral density function and asymptotic variance estimation. Finally, we empirically demonstrated improvement over other estimators, including our previous unweighted $\ell_2$ projection method of~\citet{berg2023efficient}, in both simulated and real data examples. 

We note some potential directions for future work. Multivariate extensions of our weighted least squares method are of interest for estimating asymptotic covariance matrices, cross-covariance sequences, and cross-spectral densities from reversible Markov chains, as in our multivariate extension~\citet{song2023multivariate} of the unweighted momentLS estimator. In terms of theoretical development, our analysis here shows consistency properties, but the statistical rates of convergence of momentLS estimators remain uncharacterized. We leave these questions as future research topics.

\section{Acknowledgements}
HS and SB gratefully acknowledge support from NSF DMS-2311141.

\ifnum\pageoption=1 
\bibliographystyle{plainnat}
\bibliography{bib}
\fi

\ifnum\pageoption=2 
\putbib 
\end{bibunit}
\fi 

\ifnum\pageoption>1
\newpage
\ifnum\pageoption=2\begin{bibunit}\fi
\setcounter{page}{1}
\renewcommand{\thepage}{S\arabic{page}} 
\renewcommand{\thesection}{S\arabic{section}}  
\renewcommand{\thetable}{S\arabic{table}}  
\renewcommand{\thefigure}{S\arabic{figure}}
\renewcommand{\theequation}{S-\arabic{equation}}
\renewcommand{\theremark}{S-\arabic{remark}}
\renewcommand{\thelem}{S-\arabic{lem}}
\renewcommand{\theprop}{S-\arabic{prop}}
\renewcommand{\thecor}{S-\arabic{cor}}
\setcounter{equation}{0}
\setcounter{section}{0}
\setcounter{remark}{0}
\setcounter{lem}{0}
\setcounter{prop}{0}
\setcounter{cor}{0}
\setcounter{table}{0}
\spacingset{1.5}
\begin{center}
{\Large\bf Supplement to ``\Title"}\\
\vspace{1em}
{\large Hyebin Song and Stephen Berg}\\
{\large Department of Statistics, Pennsylvania State University}
\end{center}

\section{Fourier Transform}\label{supp_sec: 1_Fourier}
In this section, we state definitions and present relevant facts regarding Fourier analysis, along with proofs for completeness.

\begin{definition}
For $f\in \ell_2(\Z,\C)$, we define the discrete-time Fourier transform $\hat{f}$ of $f$ as the $L^2$ limit of the partial sum sequence $S_N(\hat{f}) (\omega) = \sum_{|k|\le N} e^{-i\omega k} f(k)$ as $N\to\infty$, i.e., the function $\hat{f} \in L^2(\T)$ which satisfies
\begin{align*}
    \frac{1}{2\pi} \int_{[-\pi,\pi]} |\hat{f}(\omega) - S_N(\hat{f})(\omega)|^2 d\omega \to 0
\end{align*}
as $N\to\infty$.
\end{definition}
\begin{remark}
The existence of such an $\hat{f} \in L^2(\T)$ is due to the completeness of $L^2(\T)$ as $S_N(\hat{f})(\omega)= \sum_{|k|\le N} f(k)e^{-i\omega k}$ is a Cauchy sequence in $L^2(\T)$ (e.g., see Section 3 in \citealp{grafakos2008classical}).
\end{remark}
\begin{remark}\label{remark:FT_l1seq}
When $f \in \ell_1(\Z,\C)$, the Fourier transform $\hat{f}$ of $f$ is also the uniform limit of $S_N(\hat{f})$ on $[-\pi,\pi]$. For each $\omega \in [-\pi,\pi]$, define $\hat{f}(\omega)$ as the point-wise limit of $S_N(\hat{f})(\omega)$, so that $\hat{f}(\omega) = \lim_{N\to\infty} S_N(\hat{f})(\omega) = \lim_{N\to\infty} \sum_{|k|\le N} e^{-i\omega k} f(k)$. By the Weierstrass M-test,  \[\sup_{\omega \in [-\pi,\pi]} | S_N(\hat{f})(\omega)-\hat{f}(\omega)| \to 0.\] We note that this $\hat{f}$ is also the $L^2$ limit of $S_N(\hat{f})$ since \begin{align*}
  \|  S_N(\hat{f})- \hat{f}\|_{L^2(\T)}^2= \frac{1}{2\pi} \int_{[-\pi,\pi]}|S_N(\hat{f})(\omega)-\hat{f}(\omega) |^2  d\omega  \le \sup_{\omega \in [-\pi,\pi]} | S_N(\hat{f})(\omega)-\hat{f}(\omega)| ^2 \to 0
\end{align*}
\end{remark}

\begin{lem}\label{lem:L2_FourierCoefficients}
For any square summable $a=\{a(k)\} \in \ell_2(\Z,\C)$, let $\hat{a}\in L^2(\T)$ be the Fourier transform of $a$, i.e., the $L^2$ limit of $S_N(\hat{a}) (\omega)= \sum_{|l|\le N} a(l) e^{-i\omega l}$.  Define $u_k(\omega) = e^{-iwk}$ for $k\in \Z$. We have $ ( \hat{a}, u_k)=a(k)$ for each $k\in\Z$
\end{lem}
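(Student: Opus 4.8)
The plan is to exploit the fact that the exponentials $u_k$ form an orthonormal system in $L^2(\T)$ and that the inner product $(\cdot,\cdot)$ is continuous, so that taking the inner product against $u_k$ can be interchanged with the $L^2(\T)$ limit defining $\hat{a}$.

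First I would record the orthonormality relations. By direct integration, for $l,k \in \Z$,
\begin{align*}
(u_l, u_k) = \frac{1}{2\pi}\int_{[-\pi,\pi]} e^{-i\omega l}\,\overline{e^{-i\omega k}}\,d\omega = \frac{1}{2\pi}\int_{[-\pi,\pi]} e^{i\omega(k-l)}\,d\omega = \begin{cases} 1 & l = k,\\ 0 & l\ne k,\end{cases}
\end{align*}
since $\int_{[-\pi,\pi]} e^{i\omega m}\,d\omega = 0$ for any nonzero integer $m$. In particular each $u_k$ has unit $L^2(\T)$ norm. Because $S_N(\hat{a}) = \sum_{|l|\le N} a(l)\,u_l$ is a finite linear combination, linearity of the inner product in its first argument gives, for every $N \ge |k|$,
\begin{align*}
(S_N(\hat{a}), u_k) = \sum_{|l|\le N} a(l)\,(u_l,u_k) = a(k).
\end{align*}

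Next I would pass to the limit $N\to\infty$. By definition $\hat{a}$ is the $L^2(\T)$ limit of $S_N(\hat{a})$, so $\|\hat{a} - S_N(\hat{a})\|_{L^2(\T)} \to 0$. Applying the Cauchy--Schwarz inequality in $L^2(\T)$,
\begin{align*}
\bigl|(\hat{a}, u_k) - (S_N(\hat{a}), u_k)\bigr| = \bigl|(\hat{a} - S_N(\hat{a}),\, u_k)\bigr| \le \|\hat{a} - S_N(\hat{a})\|_{L^2(\T)}\,\|u_k\|_{L^2(\T)} \xrightarrow[N\to\infty]{} 0,
\end{align*}
using $\|u_k\|_{L^2(\T)} = 1$. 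Hence $(\hat{a}, u_k) = \lim_{N\to\infty}(S_N(\hat{a}), u_k) = a(k)$, as claimed.

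There is no serious obstacle here; the argument is the standard ``Fourier coefficients recover the sequence'' computation. The only point requiring care is that $\hat{a}$ is defined only as an $L^2(\T)$ limit rather than pointwise, so one cannot simply plug $\hat{a}$ into the integral and swap sum and integral when $a \notin \ell_1(\Z,\C)$. The continuity of the inner product, established via Cauchy--Schwarz, is exactly what legitimizes interchanging the inner product with the limit and is therefore the crux of the proof.
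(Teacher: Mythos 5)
Your proof is correct and follows essentially the same route as the paper's: orthonormality of the $u_k$ gives $(S_N(\hat{a}),u_k)=a(k)$ for $N\ge |k|$, and the $L^2(\T)$ convergence of $S_N(\hat{a})$ to $\hat{a}$ combined with Cauchy--Schwarz (with $\|u_k\|_{L^2(\T)}=1$) lets you pass to the limit. The paper phrases the limiting step as an $\epsilon$-argument, but the underlying estimate is identical.
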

\begin{proof}
Let $k\in \Z$ be fixed and also let $\epsilon>0$ be given. By the $L^2$ convergence, we can find $N_0<\infty$ such that $\|S_N(\hat{a}) - \hat{a}\|_{L^2(\T)} \le \epsilon$ for $N \ge N_0$. 
Also note that for any $N \ge |k|$, $a(k) = ( S_N(\hat{a}), u_k) $ as $(u_l)_{|l|\le N}$ are orthogonal. Then for $N \ge \max\{N_0 ,|k|\}$
\begin{align*}
    |a(k) -  ( \hat{a},u_k ) | = | ( S_N(\hat{a}), u_k) - ( \hat{a},u_k )| \le \|S_N(\hat{a}) - \hat{a}\|_{L^2(\T)} \le \epsilon.
\end{align*}
Since $\epsilon$ was arbitrary, we have $(\hat{a}, u_k)=a(k)$.
\end{proof}
\begin{lem}\label{lem:partialSumL2limit}
For $\hat{a} \in L^2(\T)$, define $a(k) = ( \hat{a}, u_k) = \frac{1}{2\pi} \int_{[-\pi,\pi]}\hat{a}(\omega) e^{iwk} d\omega$ for each $k\in\Z$. Then $a=\{a(k)\}_{k\in\mathbb{Z}}\in\ell_2(\mathbb{Z},\mathbb{C})$, and $\hat{a}$ is the $L^2$ limit of $S_N(\hat{a})= \sum_{|k|\le N} a(k)u_k$. 
\end{lem}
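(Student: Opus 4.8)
The plan is to recognize this as the standard statement that the exponentials $\{u_k\}_{k\in\Z}$ form a complete orthonormal system (an orthonormal basis) for the Hilbert space $L^2(\T)$, so that the conclusion is essentially Parseval's theorem together with $L^2$ Fourier inversion. I would organize the argument so that the square-summability of $a$ rests only on Bessel's inequality, while the $L^2(\T)$ convergence of the symmetric partial sums to $\hat{a}$ itself requires the completeness (totality) of the trigonometric system.

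First I would record orthonormality by the direct computation $(u_k,u_l)=(2\pi)^{-1}\int_{[-\pi,\pi]} e^{-i\omega k}\,\overline{e^{-i\omega l}}\,d\omega = (2\pi)^{-1}\int_{[-\pi,\pi]} e^{-i\omega(k-l)}\,d\omega = \mathbbm{1}\{k=l\}$. Since by definition $a(k)=(\hat{a},u_k)$ are the coefficients of $\hat{a}$ relative to this orthonormal system, Bessel's inequality yields $\sum_{k\in\Z}|a(k)|^2\le \|\hat{a}\|_{L^2(\T)}^2<\infty$, which is exactly $a\in\ell_2(\Z,\C)$ and settles the first claim without any appeal to completeness.

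For the second claim I would first observe that the partial sums $S_N(\hat{a})=\sum_{|k|\le N} a(k)u_k$ are Cauchy in $L^2(\T)$: by orthonormality, $\|S_M(\hat{a})-S_N(\hat{a})\|_{L^2(\T)}^2=\sum_{N<|k|\le M}|a(k)|^2\to 0$ as $M,N\to\infty$, since $a\in\ell_2$. By completeness of $L^2(\T)$, $S_N(\hat{a})$ converges in $L^2(\T)$ to some limit $g$. Using continuity of the inner product and orthonormality, $(g,u_l)=\lim_{N\to\infty}(S_N(\hat{a}),u_l)=a(l)=(\hat{a},u_l)$ for every $l\in\Z$, so $h:=g-\hat{a}$ satisfies $(h,u_l)=0$ for all $l$.

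It then remains to conclude $h=0$, and this is the crux of the matter: it is precisely the completeness of the trigonometric system, namely that the only function in $L^2(\T)$ orthogonal to every $u_l$ is the zero function. I would invoke this as the standard fact that trigonometric polynomials are dense in $L^2(\T)$, which follows, e.g., from Fej\'er's theorem (uniform convergence of the Ces\`aro means of the Fourier series of a continuous periodic function) or from Stone--Weierstrass, combined with the density of $C(\T)$ in $L^2(\T)$; the references already cited in this section (such as \citealp{grafakos2008classical}) supply it. Granting completeness, $h=0$, so $g=\hat{a}$ and the symmetric partial sums converge to $\hat{a}$ in $L^2(\T)$, as claimed. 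The main obstacle is thus not any of the Hilbert-space bookkeeping, which is routine, but the completeness of the exponential system --- the one genuinely analytic input --- which I would cite rather than reprove.
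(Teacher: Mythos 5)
Your proof is correct and follows essentially the same route as the paper's: Bessel's inequality gives $a\in\ell_2(\Z,\C)$, the partial sums converge in $L^2(\T)$ by orthonormality and completeness of the space, and the identification of the limit with $\hat{a}$ rests on the completeness of the trigonometric system (which the paper likewise cites, to Folland, rather than reproving). Your added detail on why the system is complete (Fej\'er / Stone--Weierstrass) is a harmless elaboration of the same argument.
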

\begin{proof}
We have $a\in\ell_2(\mathbb{Z},\mathbb{C})$ from Bessel's inequality and the fact $\hat{a}\in L^2(\T)$. Let $\tilde{a}$ be the Fourier transform of $\{a(k)\}_{k\in\Z}$, i.e., the $L^2$ limit of $\sum_{|k|\le N} a(k)e^{-i\omega k}$. We show that $\tilde{a} -\hat{a}=0$. We have that $\{u_k\}_{k\in\Z}$ is an orthogonal basis of $L^2(\T)$ (e.g., Theorem 8.20 in \citet{folland1999real}).  Therefore, it is sufficient to check $ ( \tilde{a}-\hat{a}, u_k) =0$ for all $k\in \Z$ to show $\tilde{a} -\hat{a}=0$. From Lemma \ref{lem:L2_FourierCoefficients}, we have $( \tilde{a}, u_k) = a(k)$, and $ ( \hat{a}, u_k)=a(k)$ by definition of $a(k)$, which completes the proof.
\end{proof}
\begin{prop}[Parseval's Identity]\label{prop:parseval-1}
For any square summable $a=\{a(k)\} \in \ell_2(\Z,\C)$, let $\hat{a}\in L^2(\T)$ be the Fourier transform of $a$, i.e., the $L^2$ limit of $S_N(\hat{a})= \sum_{|k|\le N} a(k)u_k$ where $u_k (\omega)= e^{-i\omega k}$. We have
\begin{equation}\label{eq:parseval-1}
    \frac{1}{2\pi} \int_{[-\pi,\pi]}|\hat{a}(\omega)|^2 d\omega  = \sum_{k\in\Z} |a(k)|^2\end{equation}
Conversely, for a given square integrable function $\hat{a}\in L^2(\T)$ , define $a(k) = (\hat{a}, u_k) $ for each $k\in \Z$. Then the equality \eqref{eq:parseval-1} holds. 
\end{prop}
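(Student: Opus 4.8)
The plan is to prove the forward direction directly from the orthonormality of the exponentials $\{u_k\}$ together with the continuity of the $L^2(\T)$ norm under $L^2$ convergence, and then to obtain the converse as an immediate consequence of Lemma \ref{lem:partialSumL2limit}. First I would record that, under the normalized inner product $(\hat{x},\hat{y})=(2\pi)^{-1}\int_{[-\pi,\pi]}\hat{x}(\omega)\overline{\hat{y}(\omega)}\,d\omega$, the family $\{u_k\}_{k\in\Z}$ is in fact orthonormal, since $(u_k,u_l)=(2\pi)^{-1}\int_{[-\pi,\pi]}e^{-i\omega(k-l)}\,d\omega=\delta_{kl}$.

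For the forward direction, let $a\in\ell_2(\Z,\C)$ with Fourier transform $\hat a$, the $L^2(\T)$ limit of the partial sums $S_N(\hat a)=\sum_{|k|\le N}a(k)u_k$. Expanding the squared norm of the partial sum and using orthonormality gives
\begin{align*}
\|S_N(\hat a)\|_{L^2(\T)}^2=\Big(\sum_{|k|\le N}a(k)u_k,\sum_{|l|\le N}a(l)u_l\Big)=\sum_{|k|\le N}\sum_{|l|\le N}a(k)\overline{a(l)}\,(u_k,u_l)=\sum_{|k|\le N}|a(k)|^2.
\end{align*}
Next I would pass to the limit $N\to\infty$ on both sides. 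Since $S_N(\hat a)\to\hat a$ in $L^2(\T)$, the reverse triangle inequality $\big|\,\|S_N(\hat a)\|_{L^2(\T)}-\|\hat a\|_{L^2(\T)}\,\big|\le\|S_N(\hat a)-\hat a\|_{L^2(\T)}\to 0$ shows $\|S_N(\hat a)\|_{L^2(\T)}^2\to\|\hat a\|_{L^2(\T)}^2$. On the right-hand side, the partial sums $\sum_{|k|\le N}|a(k)|^2$ increase to $\sum_{k\in\Z}|a(k)|^2<\infty$ because $a\in\ell_2(\Z,\C)$. Equating the two limits yields $(2\pi)^{-1}\int_{[-\pi,\pi]}|\hat a(\omega)|^2\,d\omega=\sum_{k\in\Z}|a(k)|^2$, which is precisely \eqref{eq:parseval-1}.

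Finally, for the converse, given $\hat a\in L^2(\T)$ I would set $a(k)=(\hat a,u_k)$ and invoke Lemma \ref{lem:partialSumL2limit}, which guarantees both that $a\in\ell_2(\Z,\C)$ and that $\hat a$ is exactly the $L^2(\T)$ limit of $S_N(\hat a)=\sum_{|k|\le N}a(k)u_k$. Thus the hypotheses of the forward direction are satisfied with this $a$, and \eqref{eq:parseval-1} follows from the argument above. I do not anticipate any genuine obstacle here; the statement is a standard Hilbert-space fact. The only step warranting care is the passage $\|S_N(\hat a)\|_{L^2(\T)}^2\to\|\hat a\|_{L^2(\T)}^2$, which must be justified through the continuity of the norm under $L^2$ convergence rather than through any pointwise convergence of the partial sums, since the latter need not hold when $a$ fails to be absolutely summable.
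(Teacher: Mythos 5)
Your proof is correct and follows essentially the same route as the paper's: both arguments reduce to the orthonormality of $\{u_k\}$ and the $L^2(\T)$ convergence $S_N(\hat a)\to\hat a$, differing only in whether one expands $\|\hat a - S_N(\hat a)\|_{L^2(\T)}^2$ directly (as the paper does, using $(\hat a,u_k)=a(k)$) or computes $\|S_N(\hat a)\|_{L^2(\T)}^2$ and invokes continuity of the norm (as you do). The converse via Lemma \ref{lem:partialSumL2limit} is handled the same way in both.
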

\begin{proof}
We have
\begin{align*}
    \|\hat{a} - S_N(\hat{a})\|_{L^2(\T)}^2 &= \|\hat{a}\|_{L^2(\T)}^2 - 2 \Re( ( \hat{a}, \sum_{|k|\le N} a(k) u_k)) + \|\sum_{|k|\le N} a(k) u_k\|_{L^2(\T)}^2\\
    &=\|\hat{a}\|_{L^2(\T)}^2 -  \sum_{|k|\le N} |a(k) |^2 \to 0
\end{align*}
as $N\to \infty$, where we use the fact that $( \hat{a}, u_k ) = a(k)$ for each $k$.
\end{proof}
\begin{cor}\label{cor:parseval-2}
For square summable $a=\{a(k)\}, b=\{b(k)\} \in \ell_2(\Z,\C)$, let $\hat{a},\hat{b}$ be the Fourier transform of $a$ and $b$, respectively. Then
\begin{equation}\label{eq:parseval-2}
     \sum_{k\in \Z} a(k) \overline{b(k)}=\frac{1}{2\pi} \int_{[-\pi,\pi]} \hat{a}(\omega)\overline{\hat{b}(\omega)}d\omega 
\end{equation}    
Conversely, for given square integrable functions $\hat{a},\hat{b}\in L^2(\T)$, define sequences $a,b$  such that $a(k) = ( \hat{a}, u_k )$ and $b(k) = (\hat{b}, u_k )$ for each $k\in \Z$. Then the equality \eqref{eq:parseval-2} holds.
\end{cor}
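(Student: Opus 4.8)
The plan is to deduce the sesquilinear identity \eqref{eq:parseval-2} from its diagonal case \eqref{eq:parseval-1} (Proposition \ref{prop:parseval-1}) by polarization. The two sides of \eqref{eq:parseval-2} are the $\ell_2(\Z,\C)$ inner product $\langle a,b\rangle=\sum_k a(k)\overline{b(k)}$ on the left and the $L^2(\T)$ inner product $(\hat a,\hat b)=\frac{1}{2\pi}\int_{[-\pi,\pi]}\hat a(\omega)\,\overline{\hat b(\omega)}\,d\omega$ on the right, and both are sesquilinear forms (linear in the first slot, conjugate-linear in the second). Proposition \ref{prop:parseval-1} already asserts that the associated quadratic forms agree, i.e.\ $\|a\|_2^2=\|\hat a\|_{L^2(\T)}^2$ for every $a\in\ell_2(\Z,\C)$. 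Since a sesquilinear form over $\C$ is determined by its quadratic form through the complex polarization identity, this should suffice.

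First I would record the linearity of the Fourier transform: for $a,b\in\ell_2(\Z,\C)$ and a scalar $c\in\C$ one has $\widehat{a+b}=\hat a+\hat b$ and $\widehat{ca}=c\hat a$, since the partial sums satisfy $S_N(\widehat{a+b})=S_N(\hat a)+S_N(\hat b)$ and $S_N(\widehat{ca})=cS_N(\hat a)$, and the $L^2(\T)$ limit respects these linear operations. Next I would invoke the polarization identity for a complex inner product space,
\[
4\langle a,b\rangle=\|a+b\|_2^2-\|a-b\|_2^2+i\,\|a+ib\|_2^2-i\,\|a-ib\|_2^2,
\]
and apply Proposition \ref{prop:parseval-1} to each of the four norms. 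Using linearity of the Fourier transform, $\|a+b\|_2^2=\|\widehat{a+b}\|_{L^2(\T)}^2=\|\hat a+\hat b\|_{L^2(\T)}^2$, and similarly for the remaining three terms built from $a-b$, $a+ib$, and $a-ib$. Substituting these into the displayed identity and collecting terms, the right-hand side becomes exactly $4\,(\hat a,\hat b)$ by the same polarization identity applied in $L^2(\T)$, which yields \eqref{eq:parseval-2}.

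For the converse direction, given $\hat a,\hat b\in L^2(\T)$ I would set $a(k)=(\hat a,u_k)$ and $b(k)=(\hat b,u_k)$ and apply Lemma \ref{lem:partialSumL2limit}, which guarantees that $a,b\in\ell_2(\Z,\C)$ and that $\hat a,\hat b$ are indeed the Fourier transforms of $a$ and $b$; the forward direction then applies verbatim. I do not anticipate a substantive obstacle here. The only points demanding care are the bookkeeping of complex conjugates in the polarization identity, so that the imaginary parts are tracked correctly under the conjugate-linear second argument, and confirming that the Fourier transform commutes with scalar multiplication by $i$; both are routine.
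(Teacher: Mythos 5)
Your proposal is correct and follows essentially the same route as the paper's proof: both establish linearity of the Fourier transform at the level of partial sums, apply the diagonal Parseval identity (Proposition \ref{prop:parseval-1}) to the four linear combinations, and conclude via the complex polarization identity. The handling of the converse via Lemma \ref{lem:partialSumL2limit} is also the intended argument.
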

\begin{proof}
Note for any $c_1,c_2\in \C$, the $L^2$ limit of $c_1 S_N(\hat{a})+c_2 S_N(\hat{b})$ is $c_1\hat{a} + c_2\hat{b}$. Moreover,  $c_1 S_N(\hat{a})+c_2 S_N(\hat{b}) = \sum_{|k|\le N} (c_1a(k) + c_2 b(k)) u_k$. Thus from \eqref{eq:parseval-1}, $\|c_1 \hat{a}+c_2\hat{b}\|_{L^2(\T)}^2 =\|c_1a + c_2 b\|_2^2$. Then the result follows from the polarization identity, as
\begin{align*}
   \langle a, b\rangle&=\frac{1}{4}(\|a+b\|_2^2-\|a-b\|_2^2+i\|a+i b\|_2^2-i\|a-i b\|_2^2)\\
   &=\frac{1}{4}(\|\hat{a}+\hat{b}\|_{L^2(\T)}^2-\|\hat{a}-\hat{b}\|_{L^2(\T)}^2+i\|\hat{a}+i \hat{b}\|_{L^2([-\pi,\pi])}^2-i\|\hat{a}-i\hat{ b}\|_{L^2(\T)}^2)\\
   &=( \hat{a},\hat{b}).
\end{align*}
\end{proof}

\begin{lem}\label{lem:a_ahat}
For $a,b \in \ell_2(\Z,\C)$, let $\hat{a}$ and $\hat{b}$ be the Fourier transform of $a$ and $b$ respectively. Then $a(k)=b(k)$ for all $k\in \Z$  iff $\hat{a}(\omega)=\hat{b}(\omega)$ for a.e. $\omega$
\end{lem}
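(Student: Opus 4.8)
The plan is to reduce the biconditional to a statement about the single sequence $c = a - b$ and then invoke Parseval's identity. First I would set $c = a - b \in \ell_2(\Z,\C)$ and observe that its Fourier transform is $\hat{c} = \hat{a} - \hat{b}$. This follows from the $L^2(\T)$-limit structure of the Fourier transform: the partial sums satisfy $S_N(\hat{c}) = S_N(\hat{a}) - S_N(\hat{b})$, and since $L^2$ limits respect finite linear combinations (as already used in the proof of Corollary \ref{cor:parseval-2}), the $L^2$ limit of $S_N(\hat{c})$ is $\hat{a} - \hat{b}$. With this reduction, the claim that $a(k) = b(k)$ for all $k$ becomes $c(k) = 0$ for all $k$, i.e. $\|c\|_2 = 0$, while the claim that $\hat{a}(\omega) = \hat{b}(\omega)$ for a.e.\ $\omega$ becomes $\hat{c}(\omega) = 0$ for a.e.\ $\omega$, i.e. $\|\hat{c}\|_{L^2(\T)} = 0$.

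The key step is then a direct application of Parseval's identity from Proposition \ref{prop:parseval-1}, which gives
\[
\sum_{k\in\Z} |c(k)|^2 = \frac{1}{2\pi}\int_{[-\pi,\pi]} |\hat{c}(\omega)|^2\,d\omega.
\]
Since a sum of nonnegative terms is zero if and only if every term vanishes, the left-hand side equals zero exactly when $c(k) = 0$ for all $k$; likewise, the integral of the nonnegative function $|\hat{c}|^2$ is zero if and only if $\hat{c}(\omega) = 0$ for a.e.\ $\omega$. As the two sides are equal, one vanishes precisely when the other does, which yields both directions of the equivalence simultaneously.

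Alternatively, I could avoid forming $c$ and argue directly through the coefficient formula $(\hat{a}, u_k) = a(k)$ from Lemma \ref{lem:L2_FourierCoefficients}: the backward direction is then immediate, since $\hat{a} = \hat{b}$ a.e.\ forces the inner products $(\hat{a}, u_k)$ and $(\hat{b}, u_k)$ to agree, while the forward direction uses that $\{u_k\}_{k\in\Z}$ is an orthogonal basis of $L^2(\T)$, so that a function orthogonal to every $u_k$ must be zero a.e. I do not expect a genuine obstacle here; the only point requiring care is the justification that the Fourier transform is linear (equivalently, that equality a.e.\ is the correct notion of equality in $L^2(\T)$), and this is already handled by the $L^2$-limit definition of $\hat{f}$.
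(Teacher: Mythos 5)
Your proposal is correct and follows essentially the same route as the paper: the paper's proof is exactly the observation that $\|\hat{a}-\hat{b}\|_{L^2(\T)}^2=\|a-b\|_2^2$ by Parseval's identity applied to $a-b$, so one side vanishes iff the other does. Your additional care in justifying $\widehat{a-b}=\hat{a}-\hat{b}$ via linearity of $L^2$ limits is a reasonable elaboration of a step the paper leaves implicit.
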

\begin{proof}
We have $\hat{a},\hat{b} \in L^2([-\pi,\pi])$, and the result follows by Parseval's identity as $\|\hat{a} - \hat{b}\|_{L^2(\T)}^2=\|a-b\|_2^2$ .
\end{proof}

\begin{lem}\label{lem:FourierCoef}
Let $g \in C^2(\T)$ be a twice continuously differentiable function on the Torus. We have
\begin{equation}
(i k)^2\frac{1}{2\pi} \int_{[-\pi,\pi]} g(\omega) e^{ik \omega} d\omega=\frac{1}{2\pi} \int_{[-\pi,\pi]} g^{'' }(\omega) e^{i k \omega} d \omega 
\end{equation}
In particular, $| ( g, u_k ) u_k| \le \frac{1}{k^2} \frac{1}{2\pi} \int_{[-\pi,\pi]} |g^{'' }(\omega)| e^{i k \omega} d \omega  \le \|g''\|_\infty /k^2$ and the power series $\sum_{k\in\Z} ( g, u_k ) u_k$ is uniformly convergent on $[-\pi,\pi]$. 
\end{lem}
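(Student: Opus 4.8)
The plan is to establish the displayed identity by integrating by parts twice and then to read off the decay bound and uniform convergence as immediate consequences. First I would unpack the left-hand side using the definition of the $L^2(\T)$ inner product: since $u_k(\omega)=e^{-i\omega k}$, we have $(g,u_k)=(2\pi)^{-1}\int_{[-\pi,\pi]}g(\omega)e^{i\omega k}\,d\omega$, so the claimed identity is precisely the statement that the $k$-th Fourier coefficient of $g''$ equals $(ik)^2$ times that of $g$.

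For the main identity, fix $k\neq 0$ and integrate $\int_{-\pi}^{\pi}g(\omega)e^{ik\omega}\,d\omega$ by parts, taking $(ik)^{-1}e^{ik\omega}$ as an antiderivative of $e^{ik\omega}$. The boundary term equals $(ik)^{-1}(-1)^{k}\{g(\pi)-g(-\pi)\}$, which vanishes because $g\in C^2(\T)$ is $2\pi$-periodic. This gives $\int g\,e^{ik\omega}=-(ik)^{-1}\int g'\,e^{ik\omega}$. Repeating the same step with $g'$ in place of $g$ --- legitimate since $g'\in C^1(\T)$ is also periodic, so its boundary term vanishes as well --- yields $\int g'\,e^{ik\omega}=-(ik)^{-1}\int g''\,e^{ik\omega}$. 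Combining the two and multiplying by $(2\pi)^{-1}$ produces $(ik)^2(g,u_k)=(2\pi)^{-1}\int g''\,e^{ik\omega}$, which is the desired equation. The case $k=0$ is trivial: the left-hand side is $0$, while the right-hand side is $(2\pi)^{-1}\int g''=(2\pi)^{-1}\{g'(\pi)-g'(-\pi)\}=0$, again by periodicity of $g'$.

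For the ``in particular'' bound, I would rearrange the identity (for $k\neq 0$) into $(g,u_k)=-k^{-2}(2\pi)^{-1}\int g''\,e^{ik\omega}$ and estimate the integral by bringing the modulus inside and using $|e^{ik\omega}|=1$, which gives $|(g,u_k)|\le k^{-2}\|g''\|_{\infty}$. Since $|u_k(\omega)|=1$ pointwise, this immediately yields $|(g,u_k)u_k(\omega)|\le \|g''\|_{\infty}/k^2$ for all $\omega\in[-\pi,\pi]$ and all $k\neq 0$. Finally, uniform convergence of $\sum_{k\in\Z}(g,u_k)u_k$ follows from the Weierstrass $M$-test with $M_k=\|g''\|_{\infty}/k^2$ for $k\neq 0$ (together with the single finite constant term at $k=0$), because $\sum_{k\neq 0}k^{-2}<\infty$.

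I do not anticipate a serious obstacle, as the argument is essentially the standard integration-by-parts bound for Fourier coefficients of smooth functions. The only points requiring care are the vanishing of \emph{both} boundary terms, which is exactly where the periodicity encoded in $C^2(\T)$ is used, and the separate (trivial) treatment of $k=0$, where the antiderivative $(ik)^{-1}$ in the integration by parts is not available.
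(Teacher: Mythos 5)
Your proof is correct and follows essentially the same route as the paper's: two integrations by parts with the boundary terms vanishing by periodicity, followed by the trivial modulus bound and the Weierstrass $M$-test. Your explicit treatment of the $k=0$ case is a small but welcome addition that the paper leaves implicit.
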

\begin{proof}
Using integration by parts, for any $k\ne 0$, for any continuously differentiable function $f$ on the circle, we have,
\begin{equation}\label{lem_FourierCoef:eq1}
-(i k)\frac{1}{2\pi} \int_{[-\pi,\pi]} f(\omega) e^{ik \omega} d\omega=\frac{1}{2\pi} \int_{[-\pi,\pi]} f^{\prime}(\omega) e^{i k \omega} d \omega 
\end{equation}
since
\begin{align*} 
\int_{[-\pi,\pi]} f(\omega) e^{i k \omega} d \omega
& =\left[f(\omega) \cdot \frac{e^{i k \omega}}{i k}\right]_{-\pi}^{ \pi}-\frac{1}{i k} \int_{[-\pi,\pi]} f'(\omega) e^{i k \omega} d \omega \\ 
& =-\frac{1}{i k} \int_{[-\pi,\pi]} f'(\omega) e^{i k \omega} d \omega 
\end{align*}
where we use the fact that $f(-\pi) = f(\pi)$ for the second equality. 
Applying  \ref{lem_FourierCoef:eq1} for $f=g$ and $f=g'$ gives the result.
 
\end{proof}

\section{A few technical Lemmas}\label{supp_sec: 2_lemmas}

The following Lemma \ref{lem: PoissonKernel} gives the Fourier transform $\hat{x}_\alpha$ of $x_\alpha$ for $\alpha\in(-1,1)$, and conversely, the inverse Fourier transform of $\hat{x}_\alpha$ is $x_\alpha$.

\begin{lem}\label{lem: PoissonKernel}
Let $x_\alpha:\Z\to \R$ be defined such that $x_\alpha(k) = \alpha^{|k|}$ for $k\in\Z$. For any $\alpha\in(-1,1)$, $x_\alpha \in \ell_1^{even}(\Z,\R)$, 
and its Fourier transformation $\hat{x}_\alpha$ is given as $\hat{x}_\alpha(\omega) = K(\alpha,\omega)$, where $K(\alpha,\omega) = \frac{1-\alpha^2}{1-2\alpha\cos(\omega) + \alpha^2}$. Moreover, $x_\alpha(k) = ( \hat{x}_\alpha, u_k)$ for any $k\in\Z$ and $u_k(\omega) = e^{-i\omega k}$.
\end{lem}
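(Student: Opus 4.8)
The plan is to verify the three claims in sequence, each by a direct computation. First I would establish summability and evenness: since $|\alpha|<1$, the sum $\sum_{k\in\Z}|\alpha|^{|k|} = 1 + 2\sum_{k=1}^\infty |\alpha|^k = \frac{1+|\alpha|}{1-|\alpha|}$ is finite, so $x_\alpha \in \ell_1(\Z,\R)$, and $x_\alpha(-k) = \alpha^{|-k|} = \alpha^{|k|} = x_\alpha(k)$ gives evenness. This places $x_\alpha$ in $\ell_1^{even}(\Z,\R)$, and in particular in $\ell_2(\Z,\C)$ since $\ell_1(\Z,\C)\subseteq \ell_2(\Z,\C)$.

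Next, to compute $\hat{x}_\alpha$, I would invoke Remark~\ref{remark:FT_l1seq}: because $x_\alpha \in \ell_1$, the Fourier transform is the pointwise (indeed uniform) limit of the partial sums $S_N(\hat{x}_\alpha)(\omega) = \sum_{|k|\le N}\alpha^{|k|}e^{-i\omega k}$, so it suffices to sum the series directly. Splitting the sum at $k=0$ and treating $k\ge 1$ and $k\le -1$ separately yields two geometric series with ratios $\alpha e^{-i\omega}$ and $\alpha e^{i\omega}$, both of modulus $|\alpha|<1$ and hence convergent:
\[
\hat{x}_\alpha(\omega) = 1 + \frac{\alpha e^{-i\omega}}{1-\alpha e^{-i\omega}} + \frac{\alpha e^{i\omega}}{1-\alpha e^{i\omega}}.
\]
Placing the two fractions over the common denominator $(1-\alpha e^{-i\omega})(1-\alpha e^{i\omega}) = 1 - 2\alpha\cos\omega + \alpha^2$ and simplifying the resulting numerator to $2\alpha\cos\omega - 2\alpha^2$, then folding in the leading $1$, collapses the expression to $\frac{1-\alpha^2}{1-2\alpha\cos\omega+\alpha^2} = K(\alpha,\omega)$. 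Along the way one records the bound $1 - 2\alpha\cos\omega + \alpha^2 \ge 1 - 2|\alpha| + \alpha^2 = (1-|\alpha|)^2 > 0$, which guarantees that $K(\alpha,\cdot)$ is well-defined and continuous.

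Finally, for the converse identity $x_\alpha(k) = (\hat{x}_\alpha, u_k)$, I would appeal directly to Lemma~\ref{lem:L2_FourierCoefficients}: since $x_\alpha \in \ell_1(\Z,\R) \subseteq \ell_2(\Z,\C)$, applying that lemma with $a = x_\alpha$ gives $(\hat{x}_\alpha, u_k) = x_\alpha(k)$ for every $k\in\Z$.

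I do not anticipate any genuine obstacle: the argument is entirely a matter of summing two convergent geometric series and carrying out the algebraic simplification of the common-denominator expression. The only points meriting a moment's care are justifying the termwise evaluation of the series (supplied by $\ell_1$ membership together with Remark~\ref{remark:FT_l1seq}) and confirming positivity of the denominator so that the closed form is genuinely a continuous function of $\omega$.
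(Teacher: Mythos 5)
Your proposal is correct and follows essentially the same route as the paper: $\ell_1$ membership plus Remark~\ref{remark:FT_l1seq} justifies summing the series pointwise, and the two geometric series collapse to $K(\alpha,\omega)$ by exactly the algebra you describe. The only difference is at the final claim, where you invoke Lemma~\ref{lem:L2_FourierCoefficients} directly (valid, since $x_\alpha\in\ell_2$), whereas the paper instead verifies $(\hat{x}_\alpha,u_k)=\alpha^{|k|}$ by an explicit interchange of sum and integral; your shortcut is legitimate and slightly cleaner.
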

\begin{proof}
For any $\alpha\in(-1,1)$, $x_\alpha$ is absolutely summable, as $\sum_{k\in\Z} |\alpha|^{|k|} <\infty$. Therefore, by Remark \ref{remark:FT_l1seq}, $\hat{x}_\alpha$ is the pointwise limit of $S_N(\hat{x}_\alpha)(\omega)= \sum_{|k|\le N} \alpha^{|k|} e^{-i\omega k}$. For each $\omega \in [-\pi,\pi]$, we have
\begin{align*}
   \hat{x}_\alpha(\omega)= \sum_{k\in\Z}\alpha^{|k|}  e^{-i\omega k}   =  \sum_{k\in\Z}\alpha^{|k|} \cos(\omega k) =  2\sum_{k=0}^\infty \alpha^{k} \cos(\omega k)-1
\end{align*}
and
\begin{align*}
  2  \sum_{k=0}^\infty \alpha^k \cos(\omega k) -1 
  &= \sum_{k=0}^\infty \alpha^k (e^{i\omega k} + e^{-i\omega k}) -1 \\
  &= \frac{1}{1-\alpha e^{i\omega}} +  \frac{1}{1-\alpha e^{-i\omega}} -1 \\
  &=\frac{1-\alpha^2}{1-2\alpha\cos(\omega) + \alpha^2}.
\end{align*}
For the second statement, we first note that \begin{align*}
    (2\pi)^{-1} \int_{[-\pi,\pi]}  \sum_{l\in\Z} |\alpha^{|l|}  e^{-i\omega l}  e^{i\omega k}| d\omega \le (2\pi)^{-1} \int_{[-\pi,\pi]}  \sum_{l\in\Z}|\alpha|^{|l|}  d\omega = \sum_{l\in\Z}|\alpha|^{|l|}<\infty
\end{align*}
since $\alpha\in(-1,1)$. Then
\begin{align*}
    ( \hat{x}_\alpha, u_k) &= (2\pi)^{-1} \int_{[-\pi,\pi]} \hat{x}_\alpha(\omega) e^{i\omega k} d\omega\\
    &=(2\pi)^{-1} \int_{[-\pi,\pi]}  \sum_{l\in\Z}\alpha^{|l|}  e^{-i\omega l} e^{i\omega k} d\omega\\
    &=\sum_{l\in\Z} \frac{\alpha^{|l|}}{2\pi} \int_{[-\pi,\pi]}   e^{-i\omega (l-k)}d\omega\\
    &= \alpha^{|k|}
\end{align*}
\end{proof}

The following Lemma \ref{lem: fourier_xa} presents the Fourier transform of a moment sequence.

\begin{lem}\label{lem: fourier_xa}
Suppose $f \in  \mathscr{M}_{\infty}(0) \cap \ell_2(\mathbb{Z},\R)$ with the representing measure $F$, i.e., $f = \int x_\alpha \,F(d\alpha)$. 
The Fourier transformation $\hat{f}$ of $f$, which is the $L^2$ limit of the partial sum $S_N(\hat{f})(\omega) = \sum_{|k|\le N} f(k)e^{-i\omega k}$, is given by $\int K(\alpha,\omega) F(d\alpha).$
\end{lem}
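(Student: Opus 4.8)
The plan is to show that $g(\omega):=\int K(\alpha,\omega)\,F(d\alpha)$ is precisely the Fourier transform $\hat f$, by identifying the two functions through their Fourier coefficients. First I would record that the representing measure has finite total mass: since $f(0)=\int x_\alpha(0)\,F(d\alpha)=\int 1\,F(d\alpha)$ and $f(0)$ is a (finite) entry of an $\ell_2$ sequence, we have $\int F(d\alpha)=f(0)<\infty$. By Lemma~\ref{lem: PoissonKernel}, the kernel $K(\alpha,\cdot)=\hat x_\alpha$ is nonnegative on $(-1,1)\times[-\pi,\pi]$ and satisfies $\tfrac{1}{2\pi}\int_{[-\pi,\pi]}K(\alpha,\omega)\,d\omega=(\hat x_\alpha,u_0)=x_\alpha(0)=1$. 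Hence, by Tonelli, $\tfrac{1}{2\pi}\int_{[-\pi,\pi]}|g(\omega)|\,d\omega\le\int\big(\tfrac{1}{2\pi}\int_{[-\pi,\pi]}K(\alpha,\omega)\,d\omega\big)F(d\alpha)=\int F(d\alpha)<\infty$, so $g\in L^1(\T)$ and its Fourier coefficients are well defined.

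Next I would compute those coefficients. The bound $|K(\alpha,\omega)e^{i\omega k}|=K(\alpha,\omega)$ together with the finiteness just established justifies Fubini's theorem, which gives, for every $k\in\Z$,
\[
(g,u_k)=\int\Big(\tfrac{1}{2\pi}\int_{[-\pi,\pi]}K(\alpha,\omega)e^{i\omega k}\,d\omega\Big)F(d\alpha)=\int(\hat x_\alpha,u_k)\,F(d\alpha)=\int\alpha^{|k|}\,F(d\alpha)=f(k),
\]
where the third equality is the identity $(\hat x_\alpha,u_k)=\alpha^{|k|}$ from Lemma~\ref{lem: PoissonKernel}. On the other hand, since $f\in\ell_2$, the Fourier transform $\hat f\in L^2(\T)\subseteq L^1(\T)$ exists and satisfies $(\hat f,u_k)=f(k)$ for all $k$ by Lemma~\ref{lem:L2_FourierCoefficients}. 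Thus $g$ and $\hat f$ are two $L^1(\T)$ functions with identical Fourier coefficients.

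The main obstacle is the final identification $g=\hat f$. The uniqueness statement available in the excerpt, Lemma~\ref{lem:a_ahat}, concludes equality from equal Fourier coefficients only for functions already known to lie in $L^2(\T)$, whereas a priori $g$ is merely integrable: because $K(\alpha,\cdot)$ is unbounded as $|\alpha|\to1$, the function $g$ need not be bounded (it can in fact blow up near $\omega=0$ when $F$ places enough mass close to $\pm1$), and its square integrability is not evident beforehand. I would resolve this by invoking the standard uniqueness theorem for Fourier series on $\T$: an $L^1(\T)$ function all of whose Fourier coefficients vanish is zero almost everywhere (provable, e.g., from Fej\'er/Ces\`aro summability, since the Ces\`aro means of $g-\hat f$ are identically zero yet converge to $g-\hat f$ in $L^1(\T)$). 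Applying this to $g-\hat f$ yields $g=\hat f$ almost everywhere; in particular $g\in L^2(\T)$, which is the asserted formula. An alternative route that stays within the $L^2$ machinery of the excerpt is to truncate, setting $F_\delta=F|_{[-1+\delta,1-\delta]}$ so that $K$ is bounded and $g_\delta:=\int K(\alpha,\cdot)\,F_\delta(d\alpha)=\widehat{f_\delta}$ follows directly from Lemma~\ref{lem:a_ahat}, and then to pass to the limit via monotone convergence ($g_\delta\uparrow g$ pointwise, using that $F$ has no mass at $\pm1$) together with the isometry ($\widehat{f_\delta}\to\hat f$ in $L^2$). In that approach the delicate step becomes the $\ell_2$ convergence $f_\delta\to f$, which is not absolutely convergent in general and is where the genuine work would lie; for that reason the $L^1$ uniqueness argument above is the cleaner line.
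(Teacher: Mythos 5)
Your proposal is correct, and the first half (Tonelli to get integrability, Fubini to compute the Fourier coefficients $(g,u_k)=\int(\hat{x}_\alpha,u_k)\,F(d\alpha)=f(k)$) coincides with the paper's computation. Where you genuinely diverge is in the identification step. The paper does not pass through $L^1(\T)$ uniqueness at all: it first proves directly that $\tilde{f}(\omega)=\int K(\alpha,\omega)F(d\alpha)$ lies in $L^2(\T)$, by expanding $\|\tilde{f}\|_{L^2(\T)}^2$ as a double integral and using the identity $\frac{1}{2\pi}\int_{[-\pi,\pi]}K(\alpha,\omega)K(\alpha',\omega)\,d\omega=\frac{1+\alpha\alpha'}{1-\alpha\alpha'}=\braket{x_\alpha,x_{\alpha'}}$ (together with $F(\{-1,1\})=0$), so that $\|\tilde{f}\|_{L^2(\T)}^2=\int\int\frac{1+\alpha\alpha'}{1-\alpha\alpha'}F(d\alpha')F(d\alpha)=\|f\|_2^2<\infty$; once $\tilde{f}\in L^2(\T)$ is in hand, Lemma~\ref{lem:partialSumL2limit} (an $L^2(\T)$ function is the $L^2$ limit of the partial sums built from its own Fourier coefficients) finishes the proof with no appeal to $L^1$ theory. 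You correctly diagnosed that Lemma~\ref{lem:a_ahat} alone cannot be applied before square-integrability is known, but your fix is to import the classical Fej\'er/$L^1(\T)$ uniqueness theorem, whereas the paper's fix is to establish square-integrability up front. Your route is self-contained modulo one standard external fact and needs only $F([-1,1])<\infty$ to define $g$ and its coefficients, with $g\in L^2$ obtained a posteriori; the paper's route stays entirely inside the $L^2$ machinery of the supplement and yields, as a by-product, the explicit norm identity $\|\tilde f\|_{L^2(\T)}=\|f\|_2$. Your sketched truncation alternative is indeed the more delicate path, for the reason you give, and is not the one the paper takes.
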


\begin{proof}
Let $\tilde{f}(\omega) = \int K(\alpha,\omega) F(d\alpha).$
First, we show $\|\tilde{f}\|_{L^2(\T)}<\infty$. We have \begin{align*}
    \|\tilde{f}\|_{L^2(\T)}^2&=\frac{1}{2\pi}\int_{[-\pi,\pi]}|\tilde{f}(\omega)|^2d\omega\\
    &=\frac{1}{2\pi}\int_{[-\pi,\pi]}\int_{[-1,1]}K(\alpha,\omega) F(d\alpha)\int_{[-1,1]}K(\alpha',\omega)F(d\alpha)d\omega\\
    &=\frac{1}{2\pi}\int_{[-\pi,\pi]}\int_{(-1,1)}K(\alpha,\omega) F(d\alpha)\int_{(-1,1)}K(\alpha',\omega)F(d\alpha)d\omega\\
    &=\frac{1}{2\pi}\int_{[-\pi,\pi]}\int_{(-1,1)}\int_{(-1,1)}K(\alpha,\omega) K(\alpha',\omega)F(d\alpha')F(d\alpha)d\omega\\
    &=\int_{(-1,1)}\int_{(-1,1)}\frac{1}{2\pi}\int_{[-\pi,\pi]}K(\alpha,\omega) K(\alpha',\omega)d\omega F(d\alpha')F(d\alpha)\\
    &=\int_{[-1,1]}\int_{[-1,1]}\frac{1+\alpha\alpha'}{1-\alpha\alpha'} F(d\alpha')F(d\alpha) <\infty 
\end{align*} The third and second-to-last equalities follow from Lemma 2 of \citet{berg2023efficient}, which shows $F(\{-1,1\})=0$. The last equality follows from Corollary 2 of Lemma 4 of \citet{berg2023efficient}. The change of the order of integration is justified since $K(\alpha,\omega)>0$ for each $\alpha\in (-1,1)$. 

Now, we show that $\tilde{f}$ is the $L^2$ limit of the Cauchy sequence $S_N(\hat{f})$ in $L^2([-\pi,\pi])$.
By Lemma \ref{lem:partialSumL2limit}, it is sufficient to show that \begin{align} ( \tilde{f}, u_k ) = \frac{1}{2\pi}\int_{[-\pi,\pi]}\tilde{f}(\omega)e^{i\omega k}d\omega = f(k).\label{eq:fourierCoeff}\end{align}

We have
\begin{align*}
    \frac{1}{2\pi}\int_{[-\pi,\pi]}\hat{f}(\omega)e^{i\omega k} d\omega= \frac{1}{2\pi}\int_{[-\pi,\pi]} \int K(\alpha,\omega) e^{i\omega k} F(d\alpha) d\omega
\end{align*}
First we show
\begin{align*}
    \frac{1}{2\pi}\int_{[-\pi,\pi]} \int |K(\alpha,\omega) e^{i\omega k}| F(d\alpha') d\omega <\infty.
\end{align*}
We have
\begin{align*}
    \frac{1}{2\pi}\int_{[-\pi,\pi]} \int |K(\alpha,\omega) e^{i\omega k}| F(d\alpha) d\omega
    &=\frac{1}{2\pi}\int_{[-\pi,\pi]} \int K(\alpha,\omega)  F(d\alpha) d\omega\\
    &= \int \{\frac{1}{2\pi}\int_{[-\pi,\pi]} K(\alpha,\omega)  d\omega\} F(d\alpha) \\
    &= \int 1 F(d\alpha)  = f(0) < \infty
\end{align*}
where for the second equality we use Tornelli's Theorem and the third equality is due to $\int 1 F(d\alpha) = \int \alpha^0 F(d\alpha) = f(0)$.

Then by Fubini,
\begin{align*}
    \frac{1}{2\pi}\int_{[-\pi,\pi]} \int K(\alpha,\omega) e^{i\omega k} F(d\alpha) d\omega =  \int \{ \frac{1}{2\pi}\int_{[-\pi,\pi]} K(\alpha,\omega) e^{i\omega k} d\omega\}F(d\alpha)  = \int \alpha^{|k|} F(d\alpha)
\end{align*}
where we use $(2\pi)^{-1}\int_{[-\pi,\pi]} K(\alpha,\omega) e^{i\omega k} d\omega   =  ( \hat{x}_\alpha, u_k ) = \alpha^{|k|}$ from Lemma \ref{lem: PoissonKernel}.
\end{proof}

We now show when for a moment sequence $f = \int x_\alpha F(d\alpha)$ represented as a mixture of $x_\alpha$, a weighted inner product with $g$ can also be represented as a mixture of the weighted inner product of $x_\alpha$ with $g$:
\begin{lem}\label{lem:inner_product_rep}
Suppose $f \in \mathscr{M}_{\infty}(0) \cap \ell_2(\mathbb{Z},\R)$ with the representing measure $F$ and $g \in \ell_2(\mathbb{Z},\C)$. Suppose the weight function $\phi$ satisfies \ref{cond:phi}. Then
$$
\langle f, g\rangle_\phi =\int\left\langle x_\alpha, g\right\rangle_\phi  F(d \alpha).
$$
\end{lem}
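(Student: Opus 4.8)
The plan is to push everything into the Fourier domain, where the weighted inner product becomes an ordinary weighted $L^2(\T)$ pairing, and then to reduce the claim to a single interchange of the order of integration. By the definition of $\langle\cdot,\cdot\rangle_\phi$ we have $\langle f,g\rangle_\phi = \frac{1}{2\pi}\int_{[-\pi,\pi]}\hat f(\omega)\overline{\hat g(\omega)}/|\phi(\omega)|^2\,d\omega$, and likewise $\langle x_\alpha,g\rangle_\phi = \frac{1}{2\pi}\int_{[-\pi,\pi]}K(\alpha,\omega)\overline{\hat g(\omega)}/|\phi(\omega)|^2\,d\omega$, using $\hat{x}_\alpha = K(\alpha,\cdot)$ from Lemma~\ref{lem: PoissonKernel}. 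Since $f\in\mathscr{M}_\infty(0)\cap\ell_2(\Z,\R)$ has representing measure $F$, Lemma~\ref{lem: fourier_xa} gives $\hat f(\omega)=\int K(\alpha,\omega)\,F(d\alpha)$. Substituting this into the expression for $\langle f,g\rangle_\phi$ and exchanging the $\alpha$- and $\omega$-integrals would produce exactly $\int\langle x_\alpha,g\rangle_\phi\,F(d\alpha)$, so the whole lemma amounts to licensing this one Fubini step.

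To justify the interchange, I would verify absolute integrability of $K(\alpha,\omega)\overline{\hat g(\omega)}/|\phi(\omega)|^2$ against the product of $F$ and normalized Lebesgue measure $d\omega/(2\pi)$ on $[-\pi,\pi]$. Because $K(\alpha,\omega)\ge 0$ for $\alpha\in(-1,1)$ and $|\phi(\omega)|^2\ge c_0^2>0$ by Assumption~\ref{cond:phi}, the modulus of the integrand is nonnegative, so Tonelli's theorem lets me integrate in whichever order is convenient. The key move is to integrate out $\alpha$ first: by Tonelli and Lemma~\ref{lem: fourier_xa}, $\int_{(-1,1)}K(\alpha,\omega)\,F(d\alpha)=\hat f(\omega)$, which is nonnegative (as $K\ge 0$ and $F\ge 0$) and lies in $L^2(\T)$ because $f\in\ell_2(\Z,\R)$. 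This reduces the double integral to $\frac{1}{2\pi}\int_{[-\pi,\pi]}\hat f(\omega)|\hat g(\omega)|/|\phi(\omega)|^2\,d\omega$, which I can bound using $|\phi|^{-2}\le c_0^{-2}$ together with the Cauchy--Schwarz inequality in $L^2(\T)$ by $c_0^{-2}\|\hat f\|_{L^2(\T)}\|\hat g\|_{L^2(\T)}=c_0^{-2}\|f\|_2\|g\|_2<\infty$. With absolute integrability established, Fubini's theorem validates the swap, and recognizing the resulting inner $\omega$-integral as $\langle x_\alpha,g\rangle_\phi$ via Lemma~\ref{lem: PoissonKernel} completes the argument.

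I expect the main obstacle to be precisely the order in which the double integral is processed. The naive route of integrating $\omega$ first bounds the inner integral by $\|x_\alpha\|_2\|g\|_2$ through Cauchy--Schwarz, but then demands control of $\int\|x_\alpha\|_2\,F(d\alpha)$, a quantity that is considerably less convenient to handle than the $L^2(\T)$ norm of $\hat f$, which Parseval furnishes for free from $f\in\ell_2(\Z,\R)$. Integrating in $\alpha$ first sidesteps this entirely, since the required finiteness then follows immediately from $\hat f,\hat g\in L^2(\T)$ with no additional hypotheses on $F$ beyond those already assumed. Everything else in the proof is routine once this interchange is licensed.
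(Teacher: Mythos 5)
Your proposal is correct and follows essentially the same route as the paper's proof: both pass to the Fourier domain, substitute $\hat f(\omega)=\int K(\alpha,\omega)\,F(d\alpha)$ from Lemma~\ref{lem: fourier_xa}, integrate out $\alpha$ first using $K\ge 0$ (Tonelli), and then bound the resulting integral by $c_0^{-2}\|\hat f\|_{L^2(\T)}\|\hat g\|_{L^2(\T)}<\infty$ via H\"older/Cauchy--Schwarz to license Fubini. Your remark about why integrating $\alpha$ first is the right order matches the structure of the paper's argument exactly.
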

\begin{proof}
Recall that by the definition of $\langle\cdot, \cdot \rangle_\phi$, we have,
$\langle f,g \rangle_\phi  = \frac{1}{2\pi}\int_{[-\pi,\pi]} \frac{\hat{f}(\omega)\overline{\hat{g}(\omega)}}{\phi(\omega)^2}d\omega$ and $\langle x_\alpha,g\rangle_\phi = \frac{1}{2\pi} \int \frac{K(\alpha,\omega) \overline{\hat{g}(\omega)}}{\phi(\omega)^2} d\omega $ where $\hat{f}$ and $\hat{g}$ are the $L^2$ limits of $S_N(\hat{f})=\sum_{|k|\le N} f(k)u_k$ and $S_N(\hat{g})=\sum_{|k|\le N} g(k)u_k$. We need to show
\begin{align}\label{lem_inner_product_rep:eq1}
    \frac{1}{2\pi}\int_{[-\pi,\pi]} \frac{\hat{f}(\omega)\overline{\hat{g}(\omega)}}{\phi(\omega)^2}d\omega = \int \left\lbrace  \frac{1}{2\pi} \int_{[-\pi,\pi]} \frac{K(\alpha,\omega) \overline{\hat{g}(\omega)}}{\phi(\omega)^2} d\omega \right\rbrace F(d\alpha).
\end{align}
Recall that $\hat{f}(\omega) = \int K(\alpha,\omega)F(d\alpha)$ is the $L^2$ limit of $S_N(\hat{f})$ by Lemma \ref{lem: fourier_xa}.
If we can exchange the order of integration in \eqref{lem_inner_product_rep:eq1}, 
\begin{align*}
    \frac{1}{2\pi}  \int_{[-\pi,\pi]}  \int K(\alpha,\omega) F(d\alpha) \frac{ \overline{\hat{g}(\omega)}}{\phi(\omega)^2}   d\omega  =\frac{1}{2\pi}  \int_{[-\pi,\pi]}  \hat{f}(\omega) \frac{ \overline{\hat{g}(\omega)}}{\phi(\omega)^2}   d\omega  
\end{align*}
which proves the result. Now we justify exchanging the order of integrations in the RHS of equation \eqref{lem_inner_product_rep:eq1}. We have for any $N\in \N$,
\begin{align*}
 &\frac{1}{2\pi}  \int_{[-\pi,\pi]} \int \frac{|K(\alpha, \omega) \overline{\hat{g}(\omega)}|}{\phi(\omega)^2}F(d\alpha) d\omega \\
 &\le  \frac{1}{2c_1^2\pi}  \int_{[-\pi,\pi]} \int |K(\alpha, \omega) \overline{\hat{g}(\omega)}|F(d\alpha) d\omega \\
  &= \frac{1}{2c_1^2\pi}  \int_{[-\pi,\pi]}  |\overline{\hat{g}(\omega)}| \int K(\alpha, \omega)F(d\alpha)  d\omega \\
  &= \frac{1}{2c_1^2\pi}  \int_{[-\pi,\pi]}  |\overline{\hat{g}(\omega)}| \hat{f}(\omega)  d\omega  \\
  &\le  \frac{1}{c_1^2}\|  \hat{g} \|_{L^2(\T)} \| \hat{f}\|_{L^2(\T)}
\end{align*}
where we use the fact that $K(\alpha,\omega) \ge 0$ for all $\alpha, \omega$ for the second equality and use Holder's inequality for the last inequality. By Parseval's identity \eqref{eq:parseval-1}, $\|\hat{g}\|_{L^2(\T)} = \|g\|_2 <\infty$ by assumption of $g$. Also $\| \hat{f}\|_{L^2(\T)}<\infty$ since $\hat{f} \in L^2([-\pi,\pi])$.

\end{proof}

We now show in Lemma \ref{lem: l1projection} that the projection of an $\ell_1$ sequence remains absolutely summable. We remark that since the unweighted projection $\Pi(r;C)$ satisfies $\Pi(r;C)=\Pi^{\phi}(r;C)$ with the choice of weight $\phi\equiv 1$, Lemma~\ref{lem: l1projection} implies that $\Pi(r,C)\in\ell_1(\mathbb{Z},\R)$ in the case $r\in\ell_1(\mathbb{Z},\R)$.

\begin{lem}\label{lem: l1projection}
Let $r\in \ell_1(\mathbb{Z},\R)$ be an even sequence. Suppose a weight function $\phi$ satisfies \ref{cond:phi}. For any closed $C \subseteq [-1,1]$, $\Pi^{\phi}(r;C)\in \ell_1(\mathbb{Z},\R)$.
\end{lem}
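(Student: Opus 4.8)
The plan is to work directly with the representing measure $\mu$ of $f:=\Pi^\phi(r;C)$, which exists and is unique by Proposition \ref{prop: existence_and_uniqueness}, and to show that $\int_{(-1,1)}\frac{1+|\alpha|}{1-|\alpha|}\,\mu(d\alpha)<\infty$. This suffices: since $f(k)=\int\alpha^{|k|}\mu(d\alpha)$ and $\mu(\{-1,1\})=0$ (Lemma 2 of \citet{berg2023efficient}, which applies because $f\in\mathscr{M}_\infty(0)\cap\ell_2$), the triangle inequality together with Tonelli's theorem gives $\|f\|_1=\sum_k|f(k)|\le\int_{(-1,1)}\sum_k|\alpha|^{|k|}\,\mu(d\alpha)=\int_{(-1,1)}\frac{1+|\alpha|}{1-|\alpha|}\,\mu(d\alpha)$. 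I will then split the control of the singular weight near the two endpoints via the elementary bound $\frac{1+|\alpha|}{1-|\alpha|}\le\frac{1+\alpha}{1-\alpha}+\frac{1-\alpha}{1+\alpha}$, valid for all $\alpha\in(-1,1)$, so that it is enough to bound $\int\frac{1+\alpha}{1-\alpha}\mu$ and $\int\frac{1-\alpha}{1+\alpha}\mu$ separately.

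The main step is a uniform moment bound at the support points of $\mu$. Fixing $\alpha_0\in\operatorname{Supp}(\mu)\cap(-1,1)$, the equality condition in Proposition \ref{prop:weighted_opt_eq} gives $\langle f,x_{\alpha_0}\rangle_\phi=\langle r,x_{\alpha_0}\rangle_\phi$. For the left-hand side I would expand using Lemma \ref{lem:inner_product_rep} as $\langle f,x_{\alpha_0}\rangle_\phi=\int\langle x_\beta,x_{\alpha_0}\rangle_\phi\,\mu(d\beta)$ and lower-bound each term using nonnegativity of the Poisson kernel and $\phi\le c_1$: since $\langle x_\beta,x_{\alpha_0}\rangle_\phi=\frac{1}{2\pi}\int\frac{K(\beta,\omega)K(\alpha_0,\omega)}{\phi(\omega)^2}\,d\omega\ge c_1^{-2}\langle x_\beta,x_{\alpha_0}\rangle=c_1^{-2}\frac{1+\beta\alpha_0}{1-\beta\alpha_0}$, where the last identity is exactly the computation already used in the proof of Lemma \ref{lem: fourier_xa}. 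For the right-hand side I would use $\phi\ge c_0$, the normalization $\frac{1}{2\pi}\int K(\alpha_0,\omega)\,d\omega=x_{\alpha_0}(0)=1$ from Lemma \ref{lem: PoissonKernel}, and $\|\hat r\|_\infty\le\|r\|_1$ (valid since $r\in\ell_1$, by the Weierstrass $M$-test argument of Remark \ref{remark:FT_l1seq}) to obtain $\langle r,x_{\alpha_0}\rangle_\phi\le c_0^{-2}\|r\|_1$. Combining the two estimates yields $\int_{(-1,1)}\frac{1+\beta\alpha_0}{1-\beta\alpha_0}\,\mu(d\beta)\le B_0:=c_1^2c_0^{-2}\|r\|_1$ for \emph{every} $\alpha_0\in\operatorname{Supp}(\mu)\cap(-1,1)$.

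Finally I would let $\alpha_0$ approach the endpoints along the support. If $\operatorname{Supp}(\mu)$ stays bounded away from $1$, then $\int\frac{1+\beta}{1-\beta}\mu$ is trivially finite; otherwise, because $\mu(\{1\})=0$, any isolated mass at $1$ is excluded and there is a sequence of support points $\alpha_0^{(j)}\uparrow 1$, so Fatou's lemma applied to the nonnegative integrands $\frac{1+\beta\alpha_0^{(j)}}{1-\beta\alpha_0^{(j)}}\to\frac{1+\beta}{1-\beta}$ gives $\int_{(-1,1)}\frac{1+\beta}{1-\beta}\,\mu(d\beta)\le B_0$. The symmetric argument with $\alpha_0^{(j)}\downarrow -1$ gives $\int_{(-1,1)}\frac{1-\beta}{1+\beta}\,\mu(d\beta)\le B_0$, and adding the two bounds controls $\int\frac{1+|\beta|}{1-|\beta|}\mu\le 2B_0<\infty$, as required. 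I expect the boundary control to be the main obstacle: neither $r\in\ell_1$ nor any pointwise decay of $r$ by itself prevents the mixing measure from accumulating mass near $\pm1$, and it is precisely the optimality \emph{equality} on $\operatorname{Supp}(\mu)$ (Proposition \ref{prop:weighted_opt_eq}), combined with the uniform-in-$\alpha_0$ moment bound and Fatou's lemma, that rules this out.
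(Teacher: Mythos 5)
Your proposal is correct and follows essentially the same route as the paper's proof: both reduce the claim to $\int\frac{1+|\alpha|}{1-|\alpha|}\,\mu(d\alpha)<\infty$, invoke the optimality equality of Proposition \ref{prop:weighted_opt_eq} at support points approaching $\pm1$ (which exist because $\mu(\{-1,1\})=0$), sandwich $\langle\cdot,\cdot\rangle_\phi$ between $c_1^{-2}\langle\cdot,\cdot\rangle$ and $c_0^{-2}\langle\cdot,\cdot\rangle$, bound $\langle r,x_{\alpha_0}\rangle$ by $\|r\|_1$, and pass to the limit. The only differences are cosmetic: you split the weight as $\frac{1+\alpha}{1-\alpha}+\frac{1-\alpha}{1+\alpha}$ and use Fatou, where the paper splits the domain at $0$ and uses monotone convergence.
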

\begin{proof}
Let $\mu_C^{\phi}$ be the representing measure of $\Pi^{\phi}(r;C)$. By definition, we have $\Pi^{\phi}(r;C) = \int x_\alpha \,\mu_C^{\phi} (d\alpha)$ and we observe\begin{align*}
    \| \Pi^{\phi}(r;C) \|_1&=\|\int x_{\alpha}\,\mu_C^{\phi}(d\alpha)\|_1\\
    &=\|\int_{[-1,0)}x_{\alpha}\,\mu_C^{\phi}(d\alpha)+\int_{[0,1]}x_{\alpha}\,\mu_C^{\phi}(d\alpha)\|_1\\
		&\leq \|\int_{[-1,0)}x_{\alpha}\,\mu_C^{\phi}(d\alpha)\|_1+\|\int_{[0,1]}x_{\alpha}\,\mu_C^{\phi}(d\alpha)\|_1\\
		&\leq \int_{[-1,0)}\|x_{\alpha}\|_1\,\mu_C^{\phi}(d\alpha)+\int_{[0,1]}\|x_{\alpha}\|_1\,\mu_C^{\phi}(d\alpha).
	\end{align*}
Therefore, it is sufficient to show $\int_{[-1,0)}\|x_{\alpha}\|_1\,\mu_C^{\phi}(d\alpha)<\infty$ and $\int_{[0,1]}\|x_{\alpha}\|_1\,\mu_C^{\phi}(d\alpha)<\infty$

We have $\|x_\alpha\|_1 = \sum_{k\in\Z} |\alpha|^{|k|} = \frac{1+|\alpha|}{1-|\alpha|}$. We first consider the case where the endpoints $-1$ or $1$ are not in $\operatorname{Supp}(\mu_C^{\phi})$. If $-1$ is not in $\operatorname{Supp}(\mu_C^{\phi})$, we can find a $\delta>0$ open neighborhood of $-1$ with $0$ measure, i.e., there exists
$\delta >0$ such that $\mu_C^{\phi}(N_\delta(-1)) = 0$ where $N_\delta(-1) = \{\alpha; |\alpha+1| < \delta\}$. In particular, $\inf \operatorname{Supp}(\mu_C^{\phi}) \ge -1+\delta$. Then
\begin{align*}
    \int_{[-1,0)}\|x_{\alpha}\|_1\,\mu_C^{\phi}(d\alpha)\leq \int_{[-1+\min\{\delta,1\},0]}\frac{1+|\alpha|}{1-|\alpha|}\,\mu_C^{\phi}(d\alpha)\le   \frac{2+\min\{\delta,1\}}{\min\{\delta,1\} } \mu_C^{\phi}([-1,1])<\infty.
\end{align*} 
Similarly, if $1 \notin \operatorname{Supp}(\mu_C^{\phi})$, there exists $\delta'>0$ such that $\sup \operatorname{Supp}(\mu_C^{\phi}) \le 1-\delta'$, and 
\begin{align*}
    \int_{[0,1]}\|x_{\alpha}\|_1\,\mu_C^{\phi}(d\alpha)\leq \int_{[0,1-\min\{\delta',1\}]}\frac{1+|\alpha|}{1-|\alpha|} \mu_C^{\phi}(d\alpha)\le   \frac{2-\min\{\delta',1\}}{\min\{\delta',1\}} \mu_C^{\phi}([-1,1])<\infty.
\end{align*} 

Now we consider the case where $-1 \in \operatorname{Supp}(\mu_C^{\phi})$ and show $\int_{[-1,0)}\|x_{\alpha}\|_1\,\mu_C^{\phi}(d\alpha)<\infty$.
Since $\Pi^{\phi}(r;C)$ in in $\ell_2(\Z,\R)$, by  Lemma 2 of \citet{berg2023efficient}, $\mu_C^\phi(\{-1\})=0$. 
Thus there exists $(\beta_n)_{n=1}^\infty$ with $\beta_n\in \operatorname{Supp}(\mu_C^\phi)$, $\beta_n<0$, $\forall n$, \, and $\beta_n\downarrow -1$.
Since for $\alpha\in(-1,0), \|x_\alpha\|_1 = \sum_{k\in\Z} |\alpha^{|k|}| = \sum_{k\in\Z} (-1)^{|k|}\alpha^{|k|}= \sum_{k\in\Z} \lim_{n\to \infty} \beta_n^{|k|}  \alpha^{|k|}$, and
the term $\beta_n^{|k|}  \alpha^{|k|}$ is point-wise dominated by absolutely summable $|\alpha|^{|k|}$, by dominated convergence theorem $\|x_\alpha\|_1 =\lim_{n\to \infty}  \sum_{k\in\Z} \beta_n^{|k|}  \alpha^{|k|} = \lim_{n\to \infty} \langle x_{\beta_n}, x_\alpha \rangle$. Finally we note that $\langle x_{\beta_n}, x_\alpha \rangle =\sum_{k\in \Z} (\beta_n \alpha)^{|k|} = \frac{1+\alpha\beta_n}{1-\alpha\beta_n}$. Then 
\begin{align*}\int_{[-1,0)}\|x_{\alpha}\|_1\,\mu_C^{\phi}(d\alpha)&=\int_{[-1,0)}\,\underset{n\to\infty}{\lim}\,\braket{x_{\beta_n},x_{\alpha}}\mu_C^{\phi}(d\alpha)\\
 &= \int_{[-1,0)}\,\underset{n\to\infty}{\lim}\,\frac{1+\alpha\beta_n}{1-\alpha\beta_n}\,\mu_C^{\phi}(d\alpha)\\
  &= \underset{n\to\infty}{\lim}\int_{[-1,0)} \frac{1+\alpha\beta_n}{1-\alpha\beta_n}\,\mu_C^{\phi}(d\alpha)\\
        &\leq \underset{n\to\infty}{\lim}\int_{[-1,1]}\braket{x_{\beta_n},x_{\alpha}}\mu_C^{\phi}(d\alpha)
\end{align*} 
In the third equality, we used the monotone convergence  $\frac{1+\alpha\beta_n}{1-\alpha\beta_n}\uparrow \frac{1+|\alpha|}{1-|\alpha|}$ for $\alpha<0$, since $\beta_n\downarrow -1$, in order to interchange the limit and the integral. In the inequality, we used $\braket{x_{\alpha},x_{\alpha'}}=\frac{1+\alpha\alpha'}{1-\alpha\alpha'}>0$ for all $\alpha,\alpha'\in(-1,1)$. 

Note that for any even $f \in {\ell}_2^{even}(\Z,\R)$ with Fourier transform $\hat{f}$, $\langle f, x_\alpha \rangle = (2\pi)^{-1}\int \hat{f}(\omega) K(\alpha,\omega) d\omega$ by Parseval's equality and by assumption on $\phi$ we have $c_1^{-2} \le \phi(\omega)^{-2}\le c_0^{-2}$. Therefore, we have,
\begin{align*}
    \frac{1}{2\pi c_1^2}\int K(\alpha,\omega) \hat{f}(\omega) d\omega \le \frac{1}{2\pi }\int \frac{K(\alpha,\omega) \hat{f}(\omega)}{\phi(\omega)^2} d\omega \le \frac{1}{2\pi c_0^2}\int  K(\alpha,\omega) \hat{f}(\omega) d\omega,
\end{align*}
i.e.,
\begin{equation}\label{lem_l1projection:ineq1}
    c_1^{-2}  \langle x_\alpha,f\rangle  \le \langle x_\alpha,f\rangle_\phi \le c_0^{-2}  \langle x_\alpha,f\rangle.
\end{equation}
Now, using $f=x_{\beta_n}$ in \eqref{lem_l1projection:ineq1},
\begin{align*}
    \int_{[-1,0)}\|x_{\alpha}\|_1\,\mu_C^{\phi}(d\alpha) 
    &\le c_1^2  \lim_{n\to\infty} \int_{[-1,1]} \langle x_\alpha,x_{\beta_n}\rangle_\phi \mu_C^{\phi}(d\alpha)\\
    &= c_1^2 \lim_{n\to\infty} \langle \Pi^\phi(r; C), x_{\beta_n } \rangle_\phi 
\end{align*}
where for the last equality we use Lemma \ref{lem:inner_product_rep}. Also since $\beta_n \in \operatorname{Supp}(\mu_C^{\phi})$ for each $n$, by Proposition \ref{prop:weighted_opt_eq}, $\langle \Pi^\phi(r; C), x_{\beta_n } \rangle_\phi = \langle r, x_{\beta_n } \rangle_\phi$. By applying inequality \eqref{lem_l1projection:ineq1} with $f=r$ and $\alpha=\beta_n$, we have,
$\langle r, x_{\beta_n } \rangle_\phi \le c_0^{-2} \langle r, x_{\beta_n } \rangle $. Combining these results,
\begin{align*}
     \int_{[-1,0)}\|x_{\alpha}\|_1\,\mu_C^{\phi}(d\alpha) \le c_1^2\lim_{n\to\infty }\langle r, x_{\beta_n } \rangle_\phi \le (c_1/c_0)^2 \lim_{n\to\infty } \langle r,x_{\beta_n}\rangle 
\end{align*}
Finally, 
$\langle r,x_{\beta_n}\rangle = \sum_{k\in \Z} r(k) \beta_n^{|k|} \le \sum_{k\in \Z} |r(k)| |\beta_n|^{|k|}  \le\sum_{k\in \Z} |r(k)| = \|r\|_1$, and therefore $\int_{[-1,0)}\|x_{\alpha}\|_1\,\mu_C^{\phi}(d\alpha)< \infty$ by assumption on $r$.

The case when $1\in \operatorname{Supp} (\mu_C^{\phi})$ can be similarly handled, where we can show $\int_{[0,1]}\|x_{\alpha}\|_1\,\mu(d\alpha)<\infty$ by following the same steps, but with $(\beta_n)_{n \ge 1}$ such that $\beta_n \in \operatorname{Supp} (\mu_C^{\phi})$, $\beta_n >0$ for all $n$, and $\beta_n \uparrow 1$.
\end{proof}

The last Lemma \ref{lem: K_bound} concerns pointwise lower and upper bounds of the Poisson kernel, as well as its first and second derivatives.
\begin{lem}\label{lem: K_bound}
 Let $\alpha \in (-1,1)$ be given. For $K(\alpha,\omega) = \frac{1-\alpha^2}{1-2\alpha\cos\omega+\alpha^2}$, we have
\begin{align*}
 &\inf_{\omega\in[-1,1]} K(\alpha,\omega) \ge \frac{1-|\alpha|}{1+|\alpha|} , \mbox{ and} \\
 &\max\{\sup_{\omega\in[-1,1]} |K(\alpha,\omega)|,\,\sup_{\omega\in[-1,1]} |\frac{d}{d\omega}K(\alpha,\omega)|,\,\sup_{\omega\in[-1,1]} |\frac{d^2}{d\omega^2}K(\alpha,\omega)|\} \le \frac{10}{(1-|\alpha|)^6}.
\end{align*}
 \end{lem}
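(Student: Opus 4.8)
The plan is to treat the lower bound and the three upper bounds separately, keeping the denominator $D(\omega):=1-2\alpha\cos\omega+\alpha^2$ as the central object. First I would record a two-sided bound on $D$: writing $D(\omega)=(1-\alpha\cos\omega)^2+\alpha^2\sin^2\omega=|1-\alpha e^{i\omega}|^2$ and noting that $-2\alpha\cos\omega$ ranges over $[-2|\alpha|,2|\alpha|]$, one gets $(1-|\alpha|)^2\le D(\omega)\le (1+|\alpha|)^2$ for every $\omega$. Since the numerator $1-\alpha^2=(1-|\alpha|)(1+|\alpha|)>0$ for $\alpha\in(-1,1)$, the lower bound is immediate: $K(\alpha,\omega)\ge (1-\alpha^2)/(1+|\alpha|)^2=(1-|\alpha|)/(1+|\alpha|)$. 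The same control on $D$ gives $K\le (1-\alpha^2)/(1-|\alpha|)^2=(1+|\alpha|)/(1-|\alpha|)\le 2/(1-|\alpha|)$, which is $\le 10/(1-|\alpha|)^6$ because $(1-|\alpha|)^5\le 1$. Note these bounds are uniform in $\omega\in\R$, so the choice of interval over which the supremum/infimum is taken is immaterial.

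For the derivative bounds the cleanest route I would take is to differentiate the Fourier series of Lemma~\ref{lem: PoissonKernel}, namely $K(\alpha,\omega)=\sum_{k\in\Z}\alpha^{|k|}e^{-ik\omega}$, term by term. Because $\sum_{k\in\Z}k^2|\alpha|^{|k|}<\infty$ for each fixed $\alpha\in(-1,1)$, the differentiated series converge absolutely and uniformly in $\omega$, which simultaneously legitimizes term-by-term differentiation and yields
\begin{align*}
\Bigl|\tfrac{d}{d\omega}K(\alpha,\omega)\Bigr|\le \sum_{k\in\Z}|k|\,|\alpha|^{|k|}=\frac{2|\alpha|}{(1-|\alpha|)^2},\qquad \Bigl|\tfrac{d^2}{d\omega^2}K(\alpha,\omega)\Bigr|\le \sum_{k\in\Z}k^2\,|\alpha|^{|k|}=\frac{2|\alpha|(1+|\alpha|)}{(1-|\alpha|)^3},
\end{align*}
where the two sums are evaluated from the standard identities $\sum_{k\ge1}k r^{k}=r/(1-r)^2$ and $\sum_{k\ge1}k^2 r^{k}=r(1+r)/(1-r)^3$ at $r=|\alpha|$.

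Finally I would compare each expression to the claimed bound $10/(1-|\alpha|)^6$ using only the elementary facts $|\alpha|<1$, $1+|\alpha|\le 2$, and $(1-|\alpha|)^m\le 1$ for $m\ge 0$: for the first derivative, $2|\alpha|/(1-|\alpha|)^2\le 2/(1-|\alpha|)^2\le 10/(1-|\alpha|)^6$, and for the second, $2|\alpha|(1+|\alpha|)/(1-|\alpha|)^3\le 4/(1-|\alpha|)^3\le 10/(1-|\alpha|)^6$. I do not expect a genuine obstacle; the only point needing care is confirming uniform convergence of the twice-differentiated series (so that term-by-term differentiation is valid), which follows from the summability of $k^2|\alpha|^{|k|}$. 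A purely real-variable alternative is to differentiate $K=(1-\alpha^2)/D$ directly by the quotient rule and bound the results via $D\ge(1-|\alpha|)^2$; this also works, but the second-derivative numerator must then be organized to expose a factor of $|\alpha|$ and one must invoke $|\alpha|(1-|\alpha|)\le 1/4$ to keep the constant at or below $10$, so the series approach is the more economical of the two and is the one I would present.
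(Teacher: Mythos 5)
Your proof is correct. For the pointwise bounds on $K$ itself you do exactly what the paper does: bound the denominator $1-2\alpha\cos\omega+\alpha^2$ between $(1-|\alpha|)^2$ and $(1+|\alpha|)^2$ and divide. Where you genuinely diverge is in the derivative bounds. The paper differentiates $K=(1-\alpha^2)/D$ explicitly by the quotient rule, obtaining $\frac{d}{d\omega}K=\frac{-2\alpha(1-\alpha^2)\sin\omega}{D^2}$ and a two-term expression for the second derivative, and then bounds each piece by $D\ge(1-|\alpha|)^2$, arriving at $\frac{4|\alpha|}{(1-|\alpha|)^3}$ and $\frac{2|\alpha|(1+4|\alpha|)}{(1-|\alpha|)^6}$ before taking the maximum. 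You instead differentiate the Fourier series $K(\alpha,\omega)=\sum_{k\in\Z}\alpha^{|k|}e^{-ik\omega}$ term by term, justified by the uniform convergence of the differentiated series via $\sum_k k^2|\alpha|^{|k|}<\infty$, and read off $\frac{2|\alpha|}{(1-|\alpha|)^2}$ and $\frac{2|\alpha|(1+|\alpha|)}{(1-|\alpha|)^3}$ from the standard identities for $\sum k r^k$ and $\sum k^2 r^k$. Your route yields strictly tighter intermediate bounds (the paper's $10/(1-|\alpha|)^6$ is driven by its second-derivative term with exponent $6$, whereas yours never needs more than $(1-|\alpha|)^3$ in the denominator) at the cost of invoking the series representation and a term-by-term differentiation theorem; the paper's route is self-contained calculus but requires more bookkeeping. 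The final comparisons to $10/(1-|\alpha|)^6$ are all elementary and correct, so either argument establishes the lemma.
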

\begin{proof}
First we note that for any $\alpha \in (-1,1)$, $1-\alpha^2 \ge 0$.
\begin{align*}
    K(\alpha,\omega) &\ge \frac{1-\alpha^2}{1+2|\alpha|+\alpha^2} = \frac{1-|\alpha|}{1+|\alpha|}\\
    K(\alpha,\omega) &\le \frac{1-\alpha^2}{1-2|\alpha|+\alpha^2} = \frac{1+|\alpha|}{1-|\alpha|}.
\end{align*}
as $1-\alpha^2 \ge 0$ for $\alpha\in(-1,1)$.

By taking the derivative of $K(\alpha,\omega)$ with respect to $\omega,$
\begin{align*}
    \frac{d}{d\omega} K(\alpha,\omega) = \frac{-2\alpha(1-\alpha^2) \sin(\omega)}{(1+\alpha^2 - 2\alpha \cos\omega)^2}
\end{align*}
In particular,
\begin{align*}
    |\frac{d}{d\omega} K(\alpha,\omega)| = \frac{2|\alpha(1-\alpha^2) \sin(\omega)|}{(1+\alpha^2 - 2\alpha \cos\omega)^2} \le \frac{2|\alpha|(1-\alpha^2)}{(1+|\alpha|^2 -2|\alpha|)^2} \leq  \frac{4|\alpha|}{(1-|\alpha|)^3}
\end{align*}
where we use the fact that for any $\alpha \in (-1,1)$,  $1+\alpha^2 -2\alpha \cos(\omega) \ge 1+\alpha^2 -2|\alpha| \ge 0$.
Lastly,
\begin{align*}
\frac{d^2}{d\omega^2} K(\alpha,\omega)  = 
    (1 - \alpha^2) \{ \frac{-2 \alpha \cos(\omega)}{(1 + \alpha^2 - 2 \alpha \cos(\omega))^2} + \frac{8 \alpha^2 \sin^2(\omega)}{(1 + \alpha^2 - 2 \alpha \cos(\omega))^3} \}
\end{align*}
and
\begin{align*}
    |\frac{d^2}{d\omega^2} K(\alpha,\omega)|  \le
    (1 - \alpha^2) \{ \frac{2 |\alpha| }{(1 -|\alpha|)^4} + \frac{8 \alpha^2 }{(1 -|\alpha|)^6} \} \le \frac{2|\alpha|(1+4|\alpha|)}{(1 -|\alpha|)^6}.
\end{align*}
Therefore
\begin{align*}
    &\max\{\sup_{\omega\in[-1,1]} |K(\alpha,\omega)|,\sup_{\omega\in[-1,1]} |\frac{d}{d\omega}K(\alpha,\omega),|\sup_{\omega\in[-1,1]} |\frac{d^2}{d\omega^2}K(\alpha,\omega)|\} \\
    &\le  \max \{\frac{1+|\alpha|}{1-|\alpha|}, \frac{4|\alpha|}{(1-|\alpha|)^3}, \frac{2|\alpha|(1+4|\alpha|)}{(1 -|\alpha|)^6}\} \le \frac{10}{(1 -|\alpha|)^6}
\end{align*}
\end{proof}

\section{Proofs for Section \ref{sec: 3_weighted-momentLS}}

\subsection{Proof of Proposition \ref{prop: existence_and_uniqueness}}\label{supp_sec: 3_1_unique}
\begin{proof}
The first part of the Proposition is the consequence of Hilbert space projection theorem, noting that $\ell_2(\Z,\C)$ is a Hilbert space equipped with the weighted inner product norm $\langle \cdot, \cdot \rangle_\phi$ (ref. Remark \ref{rmk: weighted_inner_product}), and $\mathscr{M}_\infty(C) \cap \ell_2(\Z,\R) $ is a closed and convex subset of $\ell_2(\Z,\C)$ with respect to the weighted norm $\|\cdot \|_\phi$ due to the equivalence of the $\ell_2$ norm $\|\cdot\|_2$ and the weighted norm $\|\cdot\|_\phi$ under Assumption \ref{cond:phi}. For the second part, from the definition of $\Pi^{\phi}(r;C)$, there exists a measure $\mu \in \mathcal{M}_\R$ supported on $C$ such that $\Pi^{\phi}(r;C) = \int x_\alpha \mu (d\alpha)$. On the other hand, since $\Pi^{\phi}(r;C) \in \mathscr{M}_\infty(C) \subseteq \mathscr{M}_\infty([-1,1])$, from Proposition 2 in \citet{berg2023efficient} (with $a=-1$ and $b=1$), we know that there exists a unique $\mu'$ supported on $[-1,1]$ such that $\Pi^{\phi}(r;C) = \int x_\alpha \mu' (d\alpha)$. Since $\operatorname{Supp}(\mu) \subseteq C \subseteq [-1,1]$, $\mu = \mu'$, which concludes the proof.
\end{proof}

\subsection{Proofs on total positivity of Kernel $K(\alpha,\omega)$}\label{supp_sec: 3_2_tp}

\subsubsection{Proof of Lemma \ref{lem:tp_K}}
\begin{proof}
    First suppose that $\alpha_i \ne 0$ for all $i = 1,\dots,n$.
    Then, we have 
    \begin{align*}
        \tilde{K}(\alpha_i,x_j) = \frac{1-\alpha_i^2}{1+\alpha_i^2 - 2\alpha_i x_j} = \frac{\frac{1-\alpha_i^2}{2\alpha_i}}{\frac{1+\alpha_i^2}{2\alpha_i}-x_j} = \frac{f_1(\alpha_i)}{f_2(\alpha_i) -x_j}
    \end{align*}
       where we define $f_1(\alpha) = \frac{1-\alpha^2}{2\alpha}$ and $f_2(\alpha) = \frac{1+\alpha^2}{2\alpha}$ for $\alpha \ne 0$. Note that $f_1(\alpha_i)$ factor is shared for all elements in the $i$th row. Then
\begin{align*}
\mathbf{M} = 
    \begin{bmatrix}
          \frac{f_1(\alpha_1)}{f_2(\alpha_1)-x_1} &\cdots & \frac{f_1(\alpha_1)}{f_2(\alpha_1)-x_n}\\
            \vdots & \cdots &\vdots\\
            \frac{f_1(\alpha_n)}{f_2(\alpha_n)-x_1} &\cdots & \frac{f_1(\alpha_n)}{f_2(\alpha_n)-x_n}\\
       \end{bmatrix} =   \begin{bmatrix}
    f_{1}(\alpha_1) &0 &\cdots \\
    \vdots & \ddots & \vdots\\
   0 &\cdots  & f_1(\alpha_n)
  \end{bmatrix}  \begin{bmatrix}
          \frac{1}{f_2(\alpha_1)-x_1} &\cdots & \frac{1}{f_2(\alpha_1)-x_n}\\
            \vdots & \cdots &\vdots\\
            \frac{1}{f_2(\alpha_n)-x_1} &\cdots & \frac{1}{f_2(\alpha_n)-x_n}\\
       \end{bmatrix} = \mathbf{M}_1  \mathbf{M}_2
\end{align*}
where $ \mathbf{M}_1 =\mathrm{diag}(\{\frac{1-\alpha_i^2}{2\alpha_i}\}_{i=1}^n)$ and $ {\mathbf{M}_2} = [\frac{1}{f_2(\alpha_i)-x_j)}]_{i,j=1}^n$. By the property of determinant, $|\mathbf{M}| = |\mathbf{M_1}||\mathbf{M_2}|$. Let $m$ be the number of $\alpha_i$ such that $\alpha_i<0$.

Since $\mathbf{M}_1$ is a diagonal matrix, $|\mathbf{M}_1| = \prod_{i=1}^n \frac{1-\alpha_i^2}{2\alpha_i}$. 
In particular, since $1-\alpha_i^2 >0$, 
\begin{align}\label{lem_TP:Diag_M1_sgn}
\textrm{sgn}(|\mathbf{M}_1|) = \prod_{i=1}^n \textrm{sgn}(\alpha_i) = (-1)^{m}    
\end{align}

For $|\mathbf{M}_2|$, we use the determinant property of a Cauchy matrix. 
A $m \times n$ matrix  $\mathbf{A}$ is a Cauchy matrix if the elements $a_{ij}$ are in the form
$a_{i j}=\frac{1}{u_i-v_j}$ where $u_i-v_j \neq 0$, for $1 \leq i \leq m$, $1 \leq j \leq n$, and $\left(u_i\right)$ and $\left(v_j\right)$ contain distinct elements. We first note $f_2$ is strictly decreasing on $(-1,0)$ and $(0,1)$, since $f'_2(\alpha) =  \frac{\alpha^2 - 1}{2\alpha^2}<0$. In particular, $(f_2(\alpha_j))_{j=1}^n$ are distinct values as $\alpha_1<\alpha_2<\dots<\alpha_n$ are assumed to be distinct.  We have $\mathbf{M}_2$ is a Cauchy matrix, as $i,j$th element of $\mathbf{M}_2$ has the form of $\frac{1}{f_2(\alpha_i) - x_j}$. 

By being a Cauchy matrix, the determinant of $\mathbf{M}_2$ is non-zero, and moreover, the determinant of $\mathbf{M}_2$ is explicitly given by
\begin{align}\label{lem_TP:Cauchy_det}
   |\mathbf{M}_2| =  \frac{\prod_{i=2}^n \prod_{j=1}^{i-1}\left(f_2(\alpha_j)-f_2(\alpha_i)\right)\left(x_i-x_j\right)}{\prod_{i=1}^n \prod_{j=1}^n\left(f_2(\alpha_i)-x_j\right)}
\end{align}
First for the denominator of \eqref{lem_TP:Cauchy_det}, for any $i,j$, 
\begin{equation} \label{lem_TP:eq_1}
    f_2(\alpha_i) - x_j = \frac{1+\alpha_i^2}{2\alpha_i} - x_j = \frac{1+\alpha_i^2 - 2\alpha_i x_j }{2\alpha_i} = \frac{(\alpha_i -x_j )^2 + (1-x_j^2)}{2\alpha_i}
\end{equation}
and since $x_j \in (-1,1)$, the numerator in \eqref{lem_TP:eq_1} is always non-negative. Therefore, 
\begin{align*}
    \textrm{sgn}(\prod_{i=1}^n \prod_{j=1}^n\left(f_2(\alpha_i)-x_j\right)) = \prod_{i=1}^n\prod_{j=1}^n \textrm{sgn}(\alpha_i) = (-1)^{mn}
\end{align*}
where we recall that $m$ is the number if $\alpha_i$ such that $\textrm{sgn}(\alpha_i) <0$.
For the numerator of \eqref{lem_TP:Cauchy_det}, for any $i>j$, $\textrm{sgn}(x_i - x_j) =1$ as $x_i > x_j$ by the choice of $x_i$s. 
Then,
\begin{align*}
    \textrm{sgn}(\prod_{i=2}^n \prod_{j=1}^{i-1}\left(f_2(\alpha_j)-f_2(\alpha_i)\right)\left(x_i-x_j\right)) = \textrm{sgn}(\prod_{i=2}^n \prod_{j=1}^{i-1}\left(f_2(\alpha_j)-f_2(\alpha_i)\right))
\end{align*}
On the other hand, since $f_2$ is not monotone on $(a,b)\setminus \{0\}$ for $a<0<b$ as $f_2(\alpha) < 0 $ when $\alpha<0$ and $f_2(\alpha)>0$ for $\alpha>0$, the sign of $f_2(\alpha_j) - f_2(\alpha_i)$ are not always the same. 

In particular, when $m=0$ (all positive $\alpha$s) or $m=n$ (all negative $\alpha$s), $\textrm{sgn}(\prod_{i=2}^n \prod_{j=1}^{i-1}\left(f_2(\alpha_j)-f_2(\alpha_i)\right)) = 1$. When $0<m<n$, 
$\textrm{sgn}(\prod_{i=2}^n \prod_{j=1}^{i-1}\left(f_2(\alpha_j)-f_2(\alpha_i)\right)) =\prod_{i=(m+1)}^n  \prod_{j=1}^{i-1}  \textrm{sgn}(\alpha_j)=  \prod_{i=(m+1)}^n \prod_{j=1}^{m} (-1) = (-1)^{m(m-n)}$. Therefore,
\begin{align}\label{lem_TP:eq_2}
    \textrm{sgn}(|M_2|) = (-1)^{nm}  (-1)^{m(n-m)} = (-1)^{2nm-m^2}
\end{align}
and $\textrm{sgn}(\mathbf{M})  = \textrm{sgn}(\mathbf{M}_1) \textrm{sgn}(\mathbf{M}_2)  = (-1)^{m + 2nm -m^2}  = (-1)^{m(m-1) + 2nm} = 1$ as desired when no $\alpha_i \ne 0$.

Now consider the case that one of $\alpha$ is zero. We continue to let $m$ be the number of $\alpha$s such that $\alpha_i <0$. Since we assume each $\alpha$s to be distinct, we have $-1 <\alpha_1<\dots<\alpha_m<\alpha_{m+1} = 0<\alpha_{m+2}<\dots\alpha_n$. 

Since $\tilde{K}(0,x) = 1$ regardless of the value of $x$, $\mathbf{M}_{m+1,k}=1$ for all $k=1,\dots,n$. Similarly, we have $\mathbf{M} = \mathbf{M}_1' \mathbf{M}_2'$ where $\mathbf{M}_1' = \textrm{diag}(\{f_1(\alpha_1),\dots,f_1(\alpha_{m}), 1, f_1(\alpha_{m+2}),\dots,f_1(\alpha_n)\})$ and
\begin{align*}
\mathbf{M}_{2,ij}' =\begin{cases}
    \frac{1}{f_2(\alpha_i) - x_j} & i \ne m+1\\
    1 & i = m+1 
\end{cases}
\end{align*}
Since $\mathbf{M}_1'$ is a diagonal matrix, $|\mathbf{M}_1'| = \prod_{i\ne (m+1)} f_1(\alpha_i)=\prod_{i\ne (m+1)} \frac{1-\alpha_i^2}{2\alpha_i}$ and therefore $ \textrm{sgn}(|M_1'|) = (-1)^{m}$.

To compute the determinant of $\mathbf{M}_2'$, we consider the determinant of the matrix which we obtain subtracting the first column from each of the columns from 2 to $n$. Then
\begin{align*}
|\mathbf{M}_2'| &=\left\lvert 
    \begin{bmatrix}
          \frac{1}{f_2(\alpha_1)-x_1} &\frac{1}{f_2(\alpha_1)-x_2} &\cdots & \frac{1}{f_2(\alpha_1)-x_n}\\
            \vdots &\vdots &\cdots &\vdots\\
             1 & 1&\dots & 1 \\
             \vdots &\vdots &\cdots &\vdots\\
            \frac{1}{f_2(\alpha_n)-x_1} &\frac{1}{f_2(\alpha_n)-x_2}&\cdots & \frac{1}{f_2(\alpha_n)-x_n}\\
       \end{bmatrix} \right\rvert\\
       &=   \left\lvert 
       \begin{bmatrix}
          \frac{1}{f_2(\alpha_1)-x_1} &\frac{1}{f_2(\alpha_1)-x_2} -\frac{1}{f_2(\alpha_1)-x_1}&\cdots & \frac{1}{f_2(\alpha_1)-x_n} - \frac{1}{f_2(\alpha_1)-x_1}\\
            \vdots &\vdots &\cdots &\vdots\\
             1 & 0&\dots & 0 \\
             \vdots &\vdots &\cdots &\vdots\\
            \frac{1}{f_2(\alpha_n)-x_1} &\frac{1}{f_2(\alpha_n)-x_2}-\frac{1}{f_2(\alpha_n)-x_1}&\cdots & \frac{1}{f_2(\alpha_n)-x_n}-\frac{1}{f_2(\alpha_n)-x_1}\\
       \end{bmatrix} \right\rvert\\
       &= (-1)^{m+2}\left\lvert 
       \begin{bmatrix}
          \frac{1}{f_2(\alpha_1)-x_2} -\frac{1}{f_2(\alpha_1)-x_1}&\cdots & \frac{1}{f_2(\alpha_1)-x_n} - \frac{1}{f_2(\alpha_1)-x_1}\\
            \vdots &\cdots &\vdots\\
              \frac{1}{f_2(\alpha_{m})-x_2} -\frac{1}{f_2(\alpha_{m})-x_1}&\cdots & \frac{1}{f_2(\alpha_{m})-x_n} - \frac{1}{f_2(\alpha_{m})-x_1}\\
               \frac{1}{f_2(\alpha_{m+2})-x_2} -\frac{1}{f_2(\alpha_{m+2})-x_1}&\cdots & \frac{1}{f_2(\alpha_{m+2})-x_n} - \frac{1}{f_2(\alpha_{m+2})-x_1}\\
             \vdots &\vdots &\cdots \\
           \frac{1}{f_2(\alpha_n)-x_2}-\frac{1}{f_2(\alpha_n)-x_1}&\cdots & \frac{1}{f_2(\alpha_n)-x_n}-\frac{1}{f_2(\alpha_n)-x_1}\\
       \end{bmatrix} \right\rvert\\
\end{align*}
where we use co-factor expansion of the determinant along the $m+1$th row. 

For $i \ne (m+1)$ and $j \ge 2$,
\begin{align*}
    \frac{1}{f_2(\alpha_i)-x_j} -\frac{1}{f_2(\alpha_i)-x_1} = \frac{x_j - x_1}{(f_2(\alpha_i)-x_j)(f_2(\alpha_i)-x_1)}
\end{align*}
Therefore,
\begin{align}\label{lem_TP:det_M2'}
    |\mathbf{M}_2'| = (-1)^{m+2} \frac{\prod_{j=2}^n (x_j - x_1)}{\prod_{i\ne (m+1)} f_2(\alpha_i)-x_1}\left\lvert 
       \begin{bmatrix}
          \frac{1}{f_2(\alpha_1)-x_2} &\cdots & \frac{1}{f_2(\alpha_1)-x_n} \\
            \vdots &\cdots &\vdots\\
              \frac{1}{f_2(\alpha_{m})-x_2} &\cdots & \frac{1}{f_2(\alpha_{m})-x_n} \\
               \frac{1}{f_2(\alpha_{m+2})-x_2} &\cdots & \frac{1}{f_2(\alpha_{m+2})-x_n}  \\
             \vdots &\vdots &\cdots \\
           \frac{1}{f_2(\alpha_n)-x_2}&\cdots & \frac{1}{f_2(\alpha_n)-x_n} \\
       \end{bmatrix} \right\rvert
\end{align}
In particular, since $x_j > x_1$ by set-up, and for any $\alpha$, $\textrm{sgn}(f_2(\alpha)-x_1 )=\textrm{sgn}(\alpha) $ by \eqref{lem_TP:eq_1}, we have
\begin{align*}
    \textrm{sgn}(\frac{\prod_{j=2}^n (x_j - x_1)}{\prod_{i\ne (m+1)} f_2(\alpha_i)-x_1}) = (-1)^m
\end{align*}
The determinant in \eqref{lem_TP:det_M2'} is exactly Cauchy determinant of the matrix formed by $(\alpha_1,\dots,\alpha_m,\alpha_{m+2},\dots,\alpha_n)$ and $(x_2,\dots,x_n)$. Therefore, by \eqref{lem_TP:eq_2},
\begin{align*}
    \textrm{sgn}(|M_2'|) = (-1)^{m+2}(-1)^m (-1)^{2(n-1)m - m^2}
\end{align*}
Then $\textrm{sgn}(M)=\textrm{sgn}(|M_1'|)\textrm{sgn}(|M_2'|) = (-1)^{m+(m+2)+m + 2(n-1)m - m^2} = (-1)^{2(m+1)+m(m-1) + 2(n-1)m}=1$
\end{proof}


\subsubsection{Proof of Corollary \ref{cor:tp_K}}
\begin{proof}
We note $K(\alpha,\omega)=\tilde{K}(\alpha,\cos\omega)$. Now, let $-1<\alpha_1<\alpha_2<...<\alpha_n<1$ and $-\pi\leq \omega_1< \omega_2<...<\omega_n\leq 0$ given. Since $\cos(\omega)$ is a strictly increasing function of $\omega$ on $[-\pi,0]$, $-1\leq \cos(\omega_1)< \cos(\omega_2)<...<\cos(\omega_n)\leq 1$. Then \begin{align*}
    \bigg|[K(\alpha_i,\omega_j)]_{i,j=1}^{n}\bigg|=\bigg|[\tilde{K}(\alpha_i,\cos(\omega_j))]_{i,j=1}^{n}\bigg|>0
\end{align*} from Lemma~\ref{lem:tp_K}. Since $n$, $\alpha_i$, $i=1,...,n$, and $\omega_j,j=1,...,n$ were arbitrary, $K(\cdot,\cdot)$ is strictly totally positive.

\end{proof}

\subsection{Proofs on discrete finite support of representing measure for $\Pi^{\phi}(r;C)$}\label{supp_sec: 3_3_finiteSupport}

For a function $f:A\to\mathbb{R}$, we define, following~\citet{karlin1968total},
\begin{align*}
S^{-}(f)=\sup\,S^{-}(f(t_1),f(t_2),...,f(t_m))
\end{align*} where the supremum is taken over sequences $t_1<t_2<...<t_m$, $t_i\in A$, $i=1,...,m$ with arbitrary finite length $m$, and $S^{-}(x_1,...,x_m)$ denotes the number of sign changes of the sequence $x_1,x_2,...,x_m$, with any zero terms discarded.

We also define \begin{align*}
    S^{+}(f)=\sup S^{+}(f(t_1),f(t_2),...,f(t_m))
\end{align*}
 where the supremum is taken over sequences $t_1<t_2<...<t_m$, $t_i\in A$, $i=1,...,m$ with arbitrary finite length $m$,  and $S^{+}(x_1,...,x_m)$ denotes the maximum number of sign changes of $x_1,...,x_m$, with the zero terms being permitted to take on arbitrary signs.

We first present the following two Lemmas, which bound the number of roots of a continuous function $f$, based on $S^{-}(f)$ and $S^{+}(f)$.
\begin{lem}\label{supp_lem: S-f}
Let $f:A \to \R$ be a continuous function for a non-empty interval $A \subseteq \R$. Suppose $f$ has exactly $n$ roots in $A$, for some $n\in\mathbb{N}$. Then we have $S^{-}(f) \le n$.
\end{lem}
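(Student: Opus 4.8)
The plan is to argue via the intermediate value theorem that each sign change of $f$ forces a distinct root, so that the number of sign changes cannot exceed the number $n$ of roots. Since $S^{-}(f)$ is defined as a supremum over finite increasing point sequences, it suffices to show that for every finite sequence $t_1 < \dots < t_m$ in $A$ the count $S^{-}(f(t_1),\dots,f(t_m))$ is at most $n$; this bound then transfers to the supremum.

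First I would fix such a sequence and pass to the subsequence $s_1 < \dots < s_l$ of those points at which $f$ is nonzero, since zero terms are discarded when computing $S^{-}$. Write $y_p = f(s_p)$, so $y_p \neq 0$ for all $p$, and let $k = S^{-}(y_1,\dots,y_l)$ be the number of indices $p$ with $y_p y_{p+1} < 0$. Label these indices $p_1 < p_2 < \dots < p_k$.

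Next, for each $s = 1,\dots,k$ the values $f(s_{p_s})$ and $f(s_{p_s+1})$ have strictly opposite signs, so the continuity of $f$ together with the intermediate value theorem produces a root $\xi_s \in (s_{p_s}, s_{p_s+1})$. I would then check that these open intervals are pairwise disjoint: because $p_1 < \dots < p_k$ forces $p_s + 1 \le p_{s+1}$, we have $s_{p_s+1} \le s_{p_{s+1}}$, so consecutive intervals do not overlap. Hence $\xi_1 < \dots < \xi_k$ are $k$ distinct roots of $f$. As $f$ has exactly $n$ roots, this gives $k \le n$, and since the sequence $t_1<\dots<t_m$ was arbitrary, taking the supremum yields $S^{-}(f) \le n$.

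The argument is essentially routine; the only delicate point is the bookkeeping around discarded zero terms and the verification that the intervals supplied by the intermediate value theorem are genuinely disjoint, which I expect to be the main (if minor) obstacle.
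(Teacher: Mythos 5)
Your proposal is correct and follows essentially the same route as the paper's proof: discard the zero terms, apply the intermediate value theorem to each consecutive pair with opposite signs, and note that the resulting roots are distinct so their count bounds the number of sign changes. Your extra care in verifying that the IVT intervals are pairwise disjoint is a welcome (if minor) tightening of the paper's argument, which leaves that point implicit.
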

\begin{proof}
First, consider the trivial case where the length of $A$ is zero, i.e., in the case $A=\{t\}$ for some $t \in \R$. Then, $S^{-}(f) = S^{-}(f(t)) = 0$. Therefore the result holds.
Now suppose $A$ has nonzero length. Let $m\in \mathbb{N}$ be given, and choose $t_1<t_2<...<t_{m}$ with $t_i\in A$, $i=1,...,m$.  Without loss of generality, assume $f(t_i)\neq 0$, $i=1,...,m$ (as zero terms are discarded in the definition of $S^{-}$). For any $(t_{i},t_{i+1})$ with $1 \le i\leq m-1$ satisfying $\textrm{sgn}{(f(t_i))}\neq \textrm{sgn}(f(t_{i+1}))$, we have from the intermediate value theorem that there exists some $t_{i}^{*}$ with $t_i<t_{i}^*<t_{i+1}$ such that $f(t_i^{*})=0$. Therefore $S^{-}(f(t_1),f(t_2),...,f(t_m))\leq n$.
\end{proof}

\begin{lem}\label{supp_lem: S+f}
Let $f:A \to \R$ be a continuous function for a non-empty interval $A \subseteq \R$ such that the length of $A >0$ . Suppose there exist $n$ roots of $f$ in $A$, for some $n \in \mathbb{N}.$ We have $n \le S^{+}(f)$.    
\end{lem}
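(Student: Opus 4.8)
The plan is to exhibit a single finite increasing sequence of points of $A$ at which the sign pattern of $f$, after the zero values are assigned arbitrary signs, realizes at least $n$ sign changes; since $S^{+}(f)$ is by definition a supremum over all such sequences, this is enough. I would begin by fixing $n$ roots $t_1<t_2<\cdots<t_n$ of $f$ in $A$, so that $f(t_1)=\cdots=f(t_n)=0$. Taken by themselves, these $n$ zeros may be given arbitrary signs, and the best one can do with $n$ freely-signed terms is a strict alternation producing only $n-1$ sign changes. The crucial observation is that one extra point is required, and this is exactly where the hypothesis that $A$ has positive length enters: such an interval is uncountable, so it contains a point $t^{*}$ distinct from all of $t_1,\dots,t_n$.

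Next I would adjoin $t^{*}$ to the roots and reorder the resulting $n+1$ points increasingly. Among the corresponding values of $f$, at least $n$ are zero (the values at the roots), while at most one value, namely $f(t^{*})$, has a prescribed sign. In evaluating $S^{+}$ of this sequence the (at least) $n$ zero entries may be assigned arbitrary signs, so I would choose these free signs to force the entire length-$(n+1)$ sign sequence to alternate strictly. Concretely, if the lone fixed entry sits in position $p$ with sign $\epsilon$, assigning position $j$ the sign $(-1)^{j-p}\epsilon$ is consistent at $p$ and achievable everywhere else because those entries are free; and if $t^{*}$ happens to be a root as well, then all entries are free and any strict alternation works. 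A strictly alternating sequence of $n+1$ terms has exactly $n$ sign changes, so $S^{+}(f(t_{(1)}),\dots,f(t_{(n+1)}))\ge n$, whence $S^{+}(f)\ge n$.

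The main obstacle — really the only subtlety — is recognizing that $n$ roots alone are insufficient, yielding only $n-1$ alternations, and that the positive-length hypothesis is precisely what repairs this gap by supplying the extra point; indeed the statement fails for a degenerate interval $A=\{t\}$ with $f(t)=0$, where $n=1$ but $S^{+}(f)=0$. The remaining verification, that a single fixed sign among $n+1$ otherwise-free signs never obstructs a full alternation, is immediate once the target pattern is written down. I would present the construction of $t^{*}$ and the alternation argument explicitly, and remark in passing that this lemma is the natural counterpart of Lemma~\ref{supp_lem: S-f}, the two together sandwiching the number of genuine sign changes of $f$ between $S^{-}(f)$ and $S^{+}(f)$.
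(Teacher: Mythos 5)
Your proposal is correct and takes essentially the same approach as the paper: both augment the $n$ roots with one additional point of $A$ (guaranteed by the positive-length hypothesis) to obtain $n+1$ points, then use the freedom in signing the zero entries to produce a strictly alternating length-$(n+1)$ sign pattern with $n$ sign changes. The paper merely handles $n=1$ separately and places the extra point specifically in $(t_1,t_2)$ when $n>1$, whereas you take an arbitrary extra point; this difference is immaterial.
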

\begin{proof}

In the case $n=1$, let $t_1\in A$ satisfy $f(t_1)=0$, and choose $t_1^*\in A$ such that $t_1^*\neq t_1$. In case $t_1^*>t_1$, then $S^{+}(f)\geq S^{+}(f(t_1),f(t_1^*))=1=n$. In the case $t_1^*<t_1$, then $S^{+}(f)\geq S^{+}(f(t_1^*),f(t_1))=1=n$. This proves the result in the case $n=1$.

In the case $n>1$, let $t_1<t_2<...<t_n$ denote  roots of $f$ in $A$, i.e., $t_i\in A$, $i=1,...,n$ and $f(t_i) = 0$ for $i=1,\dots,n$. Choose $t_1^* \in (t_1,t_2)$ so that 
$t_1<t_1^*<t_2<\dots<t_n$. Consider $S^{+}(f(t_1),f(t_1^*),f(t_2),\dots,f(t_n))$. In the case $f(t_1^*)=0$, assign the sign $+$ to $f(t_1^*)$, and $-$ to each of  $f(t_1)=0$ and $f(t_2)=0$. In case $f(t_1^*)\neq 0$, assign to each of $f(t_1)$ and $f(t_2)$ the opposite sign from $f(t_1^*)$. Choose the sign of $f(t_i)$ for any remaining $i\ge 3$ so that the sign of $f(t_i)$ is opposite to the sign of $f(t_{i-1})$. Then $S^{+}(f(t_1),f(t_1^*),f(t_2),f(t_3),...,f(t_n))= n$. Therefore, $S^{+}(f) \ge n$.

\end{proof}

We now present the proof of Proposition~\ref{prop:finiteSupport}:

\begin{proof}
    Let $\hat{f}:\T \to \mathbb{R}$ defined by $\hat{f}(\omega)=\widehat{\Pi^\phi(r;C)}(\omega)=\sum_{k\in\mathbb{Z}}\Pi^\phi(r;C)\exp(-i\omega k)$ denote the Fourier transform of $\Pi^\phi(r;C)$. Let $\mu_C^{\phi}$ be the representing measure of $\Pi^\phi(r;C)$. 
    
    Note by Lemma~\ref{lem: fourier_xa} and Lemma~\ref{lem: l1projection}, $\hat{f}(\omega)=\int K(\alpha,\omega)\,\mu_C^{\phi}(d\alpha)$ for $\omega\in[-\pi,\pi]$ and $\hat{f}$ is continuous on $[-\pi,\pi]$. For $\alpha\in(-1,1)$, define 
    \begin{align*}
    g(\alpha)
    &=\braket{x_{\alpha},r-\Pi^\phi(r;C)}_{\phi}=(2\pi)^{-1}\int_{[-\pi,\pi]}K(\alpha,\omega)\frac{\{\hat{r}(\omega)-\hat{f}(\omega)\}}{\phi(\omega)^2}\,d\omega\\
    &=\pi^{-1}\int_{[-\pi,0]}K(\alpha,\omega)\frac{\{\hat{r}(\omega)-\hat{f}(\omega)\}}{\phi(\omega)^2}\,d\omega,
    \end{align*}
    where the last equality follows since $K$, $\hat{r}$, $\hat{f}$, and $\phi$ are even functions of $\omega$.
    Note that by Proposition \ref{prop:weighted_opt_eq}, if $\alpha \in \operatorname{Supp}(\hat{\mu}_C)$, then $g(\alpha)=0$.

    We consider separately the cases $\hat{r}(\omega)-\hat{f}(\omega)\equiv 0$ and $\hat{r}(\omega)-\hat{f}(\omega)\not\equiv 0 $.

    \paragraph{Case 1: $\hat{r}(\omega)-\hat{f}(\omega)\equiv 0$.} In this case, we show $|\operatorname{Supp}(\mu_C^{\phi})|\in\{0,1\}$. Since $\hat{r}(\omega)=\hat{f}(\omega)$ for each $\omega$, we have $r(k)=\Pi^\phi(r;C)(k)$ for each $k\in\mathbb{Z}$. Since $r(k)=0$ for $|k|>M-1$, we have $\Pi^\phi(r;C)(k)=0$ for each $k>M-1$. Thus, $\hat{\mu}_{C}=c\delta_0$ for some $c>0$, so that $|\operatorname{Supp}(\mu_C^{\phi})|=|\{0\}|=1$, or else $\mu_C^{\phi}$ is the null measure, so that $|\operatorname{Supp}(\mu_C^{\phi})|=0$. In each case, $-1,1\notin \operatorname{Supp}(\mu_C^{\phi})$. Further, recalling the definition of $n$ as the smallest even number such that $n>(M-1)$, we have $|\operatorname{Supp}(\mu_C^{\phi})|\leq 1\leq \frac{n}{2}\leq \frac{n}{2}+1\leq n$, where we used $M\geq 1$ implies $n\geq 2$. This proves the result in Case 1.
    
    \paragraph{Case 2: $\hat{r}(\omega)-\hat{f}(\omega)\not\equiv 0$.} First, we show $g(\alpha)=0$ for at most $n$ points $\alpha_1,...,\alpha_n\in (-1,1)$. We proceed by showing $h_0(\omega):=\frac{\hat{r}(\omega)-\hat{f}(\omega)}{\phi(\omega)^2}$ is continuous with at most $n$ zeroes in $[-\pi,0]$, and thus, $S^{-}(h_0)\leq n$ by Lemma \ref{supp_lem: S-f}. Since $K(\alpha,\omega)$ is a strictly totally positive kernel on $(-1,1)\times [-\pi,0]$, the variation diminishing property of strictly totally positive kernels implies $S^{+}(g)\leq  S^{-}(h_0)$. By Lemma \ref{supp_lem: S+f}, the number of roots of the function $g$ is less than equal to $S^{+}(g)$. Thus, combining the two bounds, we have $($the number of roots of $g) \le S^{+}(g) \le S^{-}(h_0)\le n$.
    
Now we show that the number of roots of $h_0$ is less than or equal to $n$.   
We have that $h_0$ is a continuous function of $\omega$, since the weight $\phi(\omega)$ is continuous and bounded away from 0 by assumption, $\hat{r}(\omega) = \sum_{k\in \Z} r(k) \cos(\omega k)$ is a finite degree cosine polynomial and thus continuous, and $f(\omega)$ is continuous from Lemma~\ref{lem: fourier_xa} and \ref{lem: l1projection}. Now, the number of sign changes of $h_0$ and $h_1(\omega):=\hat{r}(\omega)-\hat{f}(\omega)$ are identical, that is, $S^{-}(h_0)=S^{-}(h_1)$, since $\phi(\omega)^2>0$, $\forall \omega$. Additionally, define $\tilde{h}:[-1,1]\to \mathbb{R}$ by $\tilde{h}(x)=\sum_{k=-(M-1)}^{M-1}r(k)T_{|k|}(x)-\int_{[-1,1]}\frac{1-\alpha^2}{1-2\alpha x+\alpha^2}\,\hat{\mu}_{C}(d\alpha)$, where $T_{k}(x)$ denotes the $k$th Chebyshev cosine polynomial, $k=0,1,2,...$, and we recall $T_k(\cdot)$ is a polynomial uniquely defined by $T_k(\cos\omega)=\cos(k\omega)$ for $\omega\in\mathbb{R}$. Then $\tilde{h}(\cos\omega)=h_1(\omega)$ for $\omega\in[-\pi,0]$. Since $\omega\mapsto\cos(\omega)$ is a strictly increasing and surjective map from $[-\pi,0]\to [-1,1]$, we have $S^{-}(\tilde{h})=S^{-}(h_1)=S^{-}(h_0)$.

Now, we show $\tilde{h}$ has at most $n$ roots in $(-1,1)$. Let $g^{(m)}(x)=\frac{d^{m}g(x)}{dx^{m}}$ denote the $m$th derivative of $g$. Then for $x\in(-1,1)$, we have 
\begin{align}
    \frac{d^m}{dx^m}\tilde{h}^{(m)}(x)
    &=\sum_{k=-(M-1)}^{(M-1)}r(k)\frac{d^m}{dx^m}T_{|k|}(x)-\frac{d^m}{dx^m} \int_{[-1,1]} \frac{1-\alpha^2}{1-2\alpha x +\alpha^2}\hat{\mu}_{C}(d\alpha) \nonumber\\
    &=\sum_{k=-(M-1)}^{(M-1)}r(k)\frac{d^m}{dx^m}T_{|k|}(x)-\int_{[-1,1]}\frac{m!(1-\alpha^2)(2\alpha)^{m}}{(1-2\alpha x+\alpha^2)^{m+1}}\,\hat{\mu}_{C}(d\alpha)\label{eq:hDerivative}
\end{align} where the integrand in the second term of~\eqref{eq:hDerivative} results from differentiation under the integral sign, and the interchange of integration and differentiation can be justified similarly as in the proof of Proposition 6 in~\citet{berg2023efficient}. First, we consider $m=1$, and show $\frac{d}{dx} \int_{[-1,1]} \frac{1-\alpha^2}{1-2\alpha x +\alpha^2}\hat{\mu}_{C}(d\alpha)= \int_{[-1,1]} \frac{d}{dx}\left\{\frac{1-\alpha^2}{1-2\alpha x +\alpha^2}\right\}\hat{\mu}_{C}(d\alpha)$ for each $x\in (-1,1)$. Let $x_0\in(-1,1)$ given. We note that for $x\in(-1,1)$, minimization over $\alpha$ shows $(1-2\alpha x+\alpha^2)\geq 1-x^2>0$. Now, define $\beta=(1-|x_0|)/2$ and take $\tilde{x}=1-\beta$. Let $\tilde{g}(\alpha,x)=\frac{(1-\alpha^2)(2\alpha)}{(1-2\alpha x+\alpha^2)^2}$. Define $\bar{g}(\alpha)=\frac{(1-\alpha^2)(2\alpha)}{(1-\tilde{x}^2)^2}$. Let $N_{\beta}(x_0)=\{y:|y-x_0|<\beta\}$. Then for $\alpha\in (-1,1)$, $y\in N_{\beta}(x_0)$, $|\tilde{g}(\alpha,y)|\leq |\bar{g}(\alpha)|$. Since $|\bar{g}(\alpha)|$ is bounded for $\alpha\in (-1,1)$ and $\hat{\mu}_{C}((-1,1))$ is finite, we have $\int_{[-1,1]}|\bar{g}(\alpha)|\,\hat{\mu}_{C}(d\alpha)<\infty$. Thus $\frac{d}{dx} \int_{[-1,1]} \frac{1-\alpha^2}{1-2\alpha x +\alpha^2}\hat{\mu}_{C}(d\alpha)= \int_{[-1,1]} \frac{d}{dx}\left\{\frac{1-\alpha^2}{1-2\alpha x +\alpha^2}\right\}\hat{\mu}_{C}(d\alpha)$ for each $x\in (-1,1)$. Proceeding by induction from $m=1$, a similar approach yields $\frac{d^m}{dx^m}\left\{ \int_{[-1,1]} \frac{1-\alpha^2}{1-2\alpha x +\alpha^2}\hat{\mu}_{C}(d\alpha)\right\}= \int_{[-1,1]} \frac{d^m}{dx^m}\left\{\frac{1-\alpha^2}{1-2\alpha x +\alpha^2}\right\}\hat{\mu}_{C}(d\alpha)$
    
Now, for $m>(M-1)$, the summation in~\eqref{eq:hDerivative} vanishes, since the $T_{|k|}$ are degree $|k|$ polynomials and $\frac{d^m}{dx^m}T_{|k|}(x) = 0$. Further, since $n$ is the smallest even number such that $n>(M-1)$, we have $n=2k>(M-1)$ for some $k\in \mathbb{N}$, and 
\begin{align*}
    \tilde{h}^{(n)}(x)=-\int_{[-1,1]}\frac{n!(1-\alpha^2)(2\alpha)^{2k}}{(1-2\alpha x+\alpha^2)^{n+1}}\,\hat{\mu}_{C}(d\alpha) \leq 0,
\end{align*}
since $\mu_C^{\phi}$ is nonnegative, and the integrand is nonnegative for each $\alpha$, and strictly positive for $\alpha\neq 0$. In the case $\tilde{h}^{(n)}(x)=0$, we have 
$\operatorname{Supp}(\mu_C^{\phi})=\{0\}$ or else $\mu_C^{\phi}$ is the null measure with $\operatorname{Supp}(\mu_C^{\phi})=\emptyset$, and in both cases the Lemma holds. Otherwise, $\tilde{h}^{(n)}(x)<0$ for all $x\in(-1,1)$, and so  $\tilde{h}$ has at most $n$ roots in $(-1,1)$. Thus $S^{-}(h_0)=S^{-}(\tilde{h})\leq n$. Now, since $K(\alpha,\omega)$ is totally positive on $(-1,1)\times (-\pi,0)$ (ref. Corollary \ref{cor:tp_K}), we have from Theorem 3.1b) of~\citet{karlin1968total} that $g(\alpha)=\pi^{-1}\int_{(-\pi,0)}K(\alpha,\omega)h_0(\omega)\,d\omega$ satisfies $S^{+}(g)\leq S^{-}(h_0)\leq n$.

Since the number of roots of $g$ in $(-1,1)$ is smaller than or equal to $S^{+}(g)$ by Lemma \ref{supp_lem: S+f} and $S^{+}(g)\le n$ , we have from Proposition~\ref{prop:weighted_opt_eq} that $|\text{Supp}(\hat{\mu}_{C})\cap (-1,1)|\leq n$. Additionally, $\hat{\mu}_{C}(\{-1,1\})=0$ since $\|\Pi^\phi(r;C)\|_{\phi}<\infty$, $\|\cdot\|_{\phi}$ and $\|\cdot\|_2$ are equivalent norms, and from Lemma 2 of~\citet{berg2023efficient}, $\|\Pi^{\phi}(r;C)\|_2<\infty$ implies $\hat{\mu}_{C}(\{-1,1\})=0$. Therefore $-1,1\notin \text{Supp}(\hat{\mu}_{C})$, so $\text{Supp}(\hat{\mu}_{C})\subset (-1,1)$.

We now show the tighter bounds on $|\text{Supp}(\hat{\mu}_C)|$ in the case $C=[L,U]$ for $-1\leq L\leq U\leq 1$. We note $g(\alpha)\leq 0$ for each $\alpha\in C$ from Proposition~\ref{prop:weighted_opt_ineq}. 

\paragraph{Case 2a: $-1<L\leq U<1$} We have from Proposition~\ref{prop:weighted_opt_eq} and the fact $\text{Supp}(\hat{\mu}_{C})\subset (-1,1)$ that $g(\alpha)=0$ for each $\alpha\in \text{Supp}(\hat{\mu}_{C})$. In the case that $\hat{\mu}_{C}$ is the null measure or $\text{Supp}(\hat{\mu}_{C})$ contains only a single element, then the result holds. Otherwise, let $\alpha_1<\alpha_2<...<\alpha_m$, $\alpha_i\in \text{Supp}(\hat{\mu}_{C})$, $i=1,...,m$ given. Define $\alpha_i^*=(\alpha_i+\alpha_{i+1})/2$, $i=1,...,m-1$. Then $n\geq S^{+}(g)\geq S^{+}(g(\alpha_1),g(\alpha_1^*),g(\alpha_2),g(\alpha_2^*),...,g(\alpha_{m-1}^*),g(\alpha_m))=2(m-1)$. Thus $|\text{Supp}(\hat{\mu}_{C})|\leq \frac{n}{2}+1$. 

    \paragraph{Case 2b: at least one of $L=-1$ or $U=1$ holds} We note that in the case $L=U=-1$ or $L=U=1$, then $\hat{\mu}_{C}$ is the null measure since $-1,1\notin \text{Supp}(\hat{\mu}_{C})$. 

    Next, we consider the case $C=[-1,1]$. In the case $\hat{\mu}_C$ is the null measure or $\text{Supp}(\hat{\mu}_{C})$ contains only a single element, then the result holds. Otherwise, let $\alpha_1<\alpha_2<...<\alpha_m$, $\alpha_i\in \text{Supp}(\hat{\mu}_{C})$, $i=1,...,m$ given. Choose $\alpha_0^*$, $\alpha_m^*$ such that $-1<\alpha_0^*<\alpha_1$ and $\alpha_m<\alpha_m^*<1$ and define $\alpha_i^*=(\alpha_i+\alpha_{i+1})/2$, $i=1,...,m-1$.. Then 
    \begin{align*}n\geq S^{+}(g)\geq S^{+}(g(\alpha_0^*),g(\alpha_1),g(\alpha_1^*),g(\alpha_2),g(\alpha_2^*),...,g(\alpha_{m-1}^*),g(\alpha_m),g(\alpha_m^*))=2m.\end{align*} Thus $\text{Supp}(\hat{\mu}_C)\leq \frac{n}{2}$.

    Now we consider the case $-1=L<U<1$. In the case $\hat{\mu}_C$ is the null measure or $\text{Supp}(\hat{\mu}_{C})$ contains only a single element, then the result holds. Otherwise, let $\alpha_1<\alpha_2<...<\alpha_m$, $\alpha_i\in \text{Supp}(\hat{\mu}_{C})$, $i=1,...,m$ given. Choose $-1<\alpha_0^*<\alpha_1$, and define $\alpha_i^*=(\alpha_i+\alpha_{i+1})/2$, $i=1,...,m-1$. Then $$n\geq S^{+}(g)\geq S^{+}(g(\alpha_0^*),g(\alpha_1),g(\alpha_1^*),g(\alpha_2),g(\alpha_2^*),...,g(\alpha_{m-1}^*),g(\alpha_m))=2m-1.$$ Thus $\text{Supp}(\hat{\mu}_{C})\leq \lfloor \frac{n+1}{2}\rfloor =\frac{n}{2}$, where we use the fact that $n$ is even. The case $-1<L<U=1$ is similar.

\end{proof}

\section{Proofs for Section \ref{sec: 4_statistical_analysis}}
\label{supp_sec: 4_statistical_analysis}

\subsection{Proof of Proposition~\ref{prop:prop_c}} \label{supp_sec: 4_1_prop_c}
\begin{proof}
We first show that for any given $\alpha \in \mathcal{K}$, we have
\begin{equation}\label{prop_c:conv_pw}
   \lim_{M\to\infty} |\langle  x_\alpha, r_M - \gamma\rangle_{\phi_M} | = 0
\end{equation}
$P_x$-almost surely for any $x \in \mathsf{X}$.  Let the initial condition $x\in \mathsf{X}$ be given.
Let $\hat{r}_M$ and $\hat{\gamma}$ be the Fourier transform $r_M$ and $\gamma$, respectively. Also, let $\hat{g}_{M\alpha}(\omega) = K(\alpha,\omega) \phi_M^{-2}(\omega)$. 

We first note for each $M$, $\hat{g}_{M\alpha} \in C^2(\T)$ is a twice continuously differentiable function for each $M$, as $K(\alpha,\omega)$ and $\phi_M(\omega)$ are twice continuously differentiable and $\phi_M(\omega) \ge c_{0M}>0$ by Assumption \ref{cond:phi_M}. In particular, $\hat{g}_{M\alpha} \in L^2(\T)$. 
We define the inverse Fourier transform $g_{M\alpha}$ of $\hat{g}_{M\alpha}$ by $g_{M\alpha}(k) = ( \hat{g}_{M\alpha}, u_k ) =  \frac{1}{2\pi} \int_{[-\pi,\pi]} \hat{g}_{M\alpha} (\omega) e^{i\omega k} d\omega$ for each $k\in \Z$ where we recall the definition that $u_k(\omega)=e^{-i\omega k}$. By Lemma \ref{lem:partialSumL2limit} and \ref{lem:FourierCoef}, $g_{M\alpha}$ is square summable and the power series $\sum_{k\in\Z}  ( \hat{g}_{M\alpha}, u_k ) u_k(\omega) = \sum_{k\in\mathbb{Z}}g_{M\alpha}(k)e^{-i\omega k}$ is uniformly convergent for $\omega \in [-\pi,\pi]$.

We have
\begin{align*}
    \langle x_\alpha, r_M - \gamma\rangle_{\phi_M} = \frac{1}{2\pi}\int \hat{g}_{M\alpha}(\omega) (\hat{r}_M - \hat{\gamma})(\omega) d\omega = \sum_{k\in\Z} g_{M\alpha}(k) (r_M(k) - \gamma(k))
\end{align*}
where the last equality is due to Parseval's identity (ref. eq \eqref{eq:parseval-2}). Note for any $B>0$,
\begin{align}
|\langle  x_\alpha, r_M - r\rangle_{\phi_M} | 
    &=| \sum_{k\le \Z} g_{M\alpha}(k) (r_M(k) - \gamma(k))| \nonumber\\
    &\le \sum_{|k|< B} |g_{M\alpha}(k) (r_M(k) - \gamma(k)) | + \sum_{|k|\ge B} |g_{M\alpha}(k)| |r_M(k) - \gamma(k)|\nonumber\\
    &\le \sum_{|k|< B}  |g_{M\alpha}(k) (r_M(k) - \gamma(k)) | + (r_M(0) + \gamma(0))\sum_{|k|\ge B} |g_{M\alpha}(k)| \label{prop_c:eq_1}
\end{align}
where we use the fact that $|\gamma(k)| \le \gamma(0)$ and $|r_M(k)| \le r_M(0)$ $P_x$-a.s. for any $k\in \Z$.
By Lemma \ref{lem:FourierCoef}, we have 
\begin{align*}
    k^2 g_{M\alpha}(k) &= -\frac{1}{2\pi } \int_{[-\pi,\pi]}\frac{d^2\{\hat{g}_{M\alpha} (\omega) e^{i\omega k}\}}{d\omega^2}\, d\omega.
\end{align*}
In particular, for any $k \in \Z$,
\begin{align}\label{prop_c:gmk_bound}
    | g_{M\alpha}(k)| & \le \frac{1}{k^2}\left( \sup_{\omega \in [ -\pi, \pi]} |\frac{d^2}{d\omega^2}\hat{g}_{M\alpha} (\omega) | \right)\frac{1}{2\pi }  \int_{[-\pi,\pi]}|e^{i\omega k}| d\omega = \frac{C_{M\alpha}}{k^2}
\end{align}
where we define $C_{M\alpha}=  \sup_{\omega \in [ -\pi, \pi]} |\frac{d^2}{d\omega^2}\hat{g}_{M\alpha} (\omega)|$. Suppose for now that the limit of $ C_{M\alpha }$ is $P_x$-almost surely bounded by a constant $C_\alpha<\infty$, i.e., $\limsup_{M\to\infty}  C_{M\alpha } \le C_\alpha$, $P_x$-almost surely. From \eqref{prop_c:eq_1} and \eqref{prop_c:gmk_bound}, we have
\begin{align}\label{prop_c:eq_2}
    |\langle  x_\alpha, r_M - r\rangle_{\phi_M} |   \le \sum_{|k|< B} |g_{M\alpha}(k) (r_M(k) - \gamma(k))| + (r_M(0) + \gamma(0))C_{M\alpha}\sum_{|k|\ge B}  \frac{1}{k^2}
\end{align}
Let $\epsilon>0$ be given. Choose $B \ge  4\gamma(0)C_\alpha/\epsilon + 2$ so that
\begin{align}\label{prop_c:B_choice}
    2\gamma(0)C_\alpha \sum_{|k|\ge B } \frac{1}{k^2} \le 4\gamma(0)C_\alpha \int_{B-1}^\infty \frac{1}{x^2} dx  = \frac{4\gamma(0)C_\alpha}{B-1} \le \epsilon.
\end{align}
Then
\begin{align*}
    &\limsup_{M\to\infty}  |\langle  x_\alpha, r_M - r\rangle_{\phi_M} |\\
    &\le \limsup_{M\to\infty}  \sum_{|k|< B} |g_{M\alpha}(k) (r_M(k) - \gamma(k))| + \limsup_{M\to\infty}  \{(r_M(0) + \gamma(0))C_{M\alpha}\}\sum_{|k|\ge B}  \frac{1}{k^2}\\
    &\le 2\gamma(0) C_{\alpha}\sum_{|k|\ge B}  \frac{1}{k^2} \le \epsilon
\end{align*}
$P_x$-almost surely, where for the second last inequality we use condition \ref{cond:rM_pwConvergence}, and $B$ is a finite number which only depends on $\gamma(0), C_\alpha$, and $\epsilon$. For the last inequality we use \eqref{prop_c:B_choice}. Since $\epsilon$ is arbitrary, we have $|\langle x_\alpha, r_M - \gamma \rangle_{\phi_M}| \to 0$ $P_x$-almost surely assuming $\limsup_{M\to\infty}  C_{M\alpha} \le C_\alpha $.

To establish the pointwise convergence, it remains to show $\limsup_{M\to\infty}  C_{M\alpha} \le C_\alpha $ for a constant $C_\alpha< \infty$, $P_x$-almost surely. From the definition of $\hat{g}_{M\alpha}$ and product rule, we have
\begin{align*}
    &\frac{d^2}{d\omega^2}\hat{g}_{M\alpha} (\omega) = \frac{d^2}{d\omega^2} \left( \frac{K(\alpha,\omega)}{\phi_M^2(\omega)}\right) \\
    &= \frac{1}{\phi_M^2(\omega)}\frac{d^2K(\alpha,\omega)}{d\omega^2}  +2\frac{dK(\alpha,\omega)}{d\omega}\frac{d}{d\omega}\left\{ \frac{1}{\phi_M^2(\omega)}\right\}+ K(\alpha,\omega)\frac{d^2}{d\omega^2}\left\{\frac{1}{\phi_M^2(\omega)}\right\}\\
    & = \frac{1}{\phi_M^2(\omega)}\frac{d^2K(\alpha,\omega)}{d\omega^2}  -\frac{4}{\phi_M^3(\omega)}\frac{dK(\alpha,\omega)}{d\omega}\frac{d\phi_M(\omega)}{d\omega} + K(\alpha,\omega)\left\{\frac{6}{\phi_M^4(\omega)}\frac{d\phi_M(\omega)}{d\omega} -\frac{2}{\phi_M^3(\omega)}\frac{d^2\phi_M(\omega)}{d\omega^2} \right\}
\end{align*}
In particular,
\begin{align*}
    C_{M\alpha}= &\sup_{\omega\in[-\pi,\pi]} |\frac{d^2}{d\omega^2}\hat{g}_{M\alpha} (\omega)|\\
    &\le \frac{10}{(1-|\alpha|)^{6}} \left\{\frac{1}{\inf_{\omega\in[-\pi,\pi]} \phi^2_M(\omega)} + \frac{4\|\phi'\|_\infty}{\inf_{\omega\in[-\pi,\pi]} |\phi^3_M(\omega)|} +  \frac{6\|\phi'\|_\infty}{\inf_{\omega\in[-\pi,\pi]} \phi_M^4(\omega)} + \frac{2\|\phi_M''\|_\infty}{\inf_{\omega\in[-\pi,\pi]} |\phi_M^3(\omega)|} \right\}
\end{align*}
where we use Lemma \ref{lem: K_bound}. Then, using assumptions on $\phi_M$,
\begin{align}\label{prop_c:C_Ma_bound}
\limsup_{M\to\infty} C_{M\alpha}\le  \frac{10}{(1-|\alpha|)^{6}}\left\{\frac{1}{c_\phi^2} + \frac{4 c'_\phi}{c_\phi^3} +   \frac{6c'_\phi}{c_\phi^4} + \frac{2c'_\phi}{c_\phi^3} \right\}
\end{align}
and take $C_\alpha =  \frac{10}{(1-|\alpha|)^{6}}\{\frac{1}{c_\phi^2} + \frac{4 c'_\phi}{c_\phi^3} +   \frac{6c'_\phi}{c_\phi^4} + \frac{2c'_\phi}{c_\phi^3} \} <\infty$.

Now, we show that the convergence is uniform over $\mathcal{K}$. Similar to the proof of Proposition 8 in \citet{berg2023efficient}, we obtain a finite subcover of a compact set $\mathcal{K}$ with sufficiently small radius and use the pointwise convergence in \eqref{prop_c:conv_pw} to establish a uniform convergence. Let $\epsilon_1>0$ be given. 
Let $d_{\mathcal{K}}$ be the smallest distance between a point $x\in\mathcal{K}$ and $\{-1,1\}$, i.e., $d_{\mathcal{K}} =\inf_{x\in \mathcal{K}} \min\{|x-1|,|x+1|\}$. Since $\mathcal{K} \subset(-1,1)$,  $0<d_\mathcal{K}\le 1$. If $d_{\mathcal{K}} = 1$ then $\mathcal{K} = \{0\}$ and the result \eqref{prop_c:result} is true by \eqref{prop_c:conv_pw}. Let 
\begin{align*}
r_{\mathcal{K}} = \min\,\{(\epsilon_1d_\mathcal{K}^2)/\{8\tilde{c}_{\phi}b\},d_\mathcal{K}/2\}
\end{align*} where $\tilde{c}_{\phi}=\frac{1}{c_{\phi}^2}+2\left\{\frac{6c_{\phi}'}{c_{\phi}^4}+\frac{2c'_{\phi}}{c_{\phi}^3}\right\}$ and $b=\begin{cases}\gamma(0) & \gamma(0)\neq 0\\1 & \gamma(0)=0\end{cases}$. 
Define $N(\alpha; r_{\mathcal{K}}) = \{\beta \in \mathcal{K} ; |\beta-\alpha| < r_{\mathcal{K}}\}$.
Since $\mathcal{K}$ is compact, we can find $\alpha_1,\dots,\alpha_K$ with $K<\infty$ and $\alpha_j\in \mathcal{K}$, $j=1,...,K$, such that 
\begin{align*}
    \mathcal{K} \subseteq \cup_{j\in \{1,\dots,K\}} N(\alpha_j; r_{\mathcal{K}}).
\end{align*}
We have
\begin{align}
& \sup _{\alpha \in \mathcal{K}}|\langle g_{M\alpha}, r_M-\gamma\rangle| \nonumber\\
& \leq \max _{j\in \{1,\dots,K\}} \sup _{\alpha \in N(\alpha_j; r_{\mathcal{K}})}|\langle g_{M\alpha}, r_M-\gamma\rangle| \nonumber\\
& \leq  \max _{j\in \{1,\dots,K\}} \sup _{\alpha \in N(\alpha_j; r_{\mathcal{K}})} \{ |\langle g_{M\alpha}-g_{M\alpha_j}, r_M-\gamma\rangle|+|\langle g_{M\alpha_j}, r_M-\gamma\rangle| \}\nonumber\\
& \leq  \max _{j\in \{1,\dots,K\}} \sup _{\alpha \in N(\alpha_j; r_{\mathcal{K}})} |\langle g_{M\alpha}-g_{M\alpha_j}, r_M-\gamma\rangle|+\max _{j\in \{1,\dots,K\}} |\langle g_{M\alpha_j}, r_M-\gamma\rangle|.\label{prop_c:eq_3}
\end{align}
Recall that $g_{M\alpha}(k)$ is the inverse Fourier transform of $\hat{g}_{M\alpha}(\omega)= K(\alpha,\omega) \phi_M^{-2}(\omega)$. Let $h_M(k) = ( \phi_M^{-2}, u_k)$ be the inverse Fourier transform of $\phi_M(\omega)^{-2}$ so that $\phi_M^{-2}(\omega) = \sum_{k\in\Z} h_M(k) e^{-i\omega k}$ (note by Lemma~\ref{lem:FourierCoef}, this power series is uniformly convergent). 
We have $g_{M\alpha}(k) = (x_\alpha * h_M)(k)$ for $k \in \Z$, since
\begin{align*}
    g_{M\alpha} (k) 
    &= \frac{1}{2\pi} \int_{[-\pi,\pi]} K(\alpha,\omega) \phi_M^{-2}(\omega) e^{i\omega k }d \omega\\
    &=\frac{1}{2\pi} \int_{[-\pi,\pi]} \sum_{u\in \Z} x_\alpha(u) e^{-i\omega u} \phi_M^{-2}(\omega) e^{i\omega k }d \omega\\
    &= \sum_{u\in \Z} x_\alpha(u)  \frac{1}{2\pi} \int_{[-\pi,\pi]} \phi_M^{-2}(\omega) e^{i\omega (k-u) }d \omega\\
     &= \sum_{u\in \Z} x_\alpha(u)  h_M(k-u).
\end{align*}

Now we bound $\sum_{k\in\mathbb{Z}}|h_{M}(k)|$. We have \begin{align}
    \sum_{k\in\mathbb{Z}}|h_{M}(k)|&=\sum_{k\in\mathbb{Z}}|\frac{1}{2\pi}\int_{[-\pi,\pi]}\frac{1}{\phi_{M}^2(\omega)}e^{ik\omega}\,d\omega|\nonumber\\
    &=|\frac{1}{2\pi}\int_{[-\pi,\pi]}\frac{1}{\phi_{M}^2(\omega)}\,d\omega|+\sum_{k\in\mathbb{Z}\setminus{\{0\}}}\frac{1}{k^2}\frac{1}{2\pi}\bigg|\int_{[-\pi,\pi]}\frac{d^2}{d\omega^2}\left\{\frac{1}{\phi_{M}^2(\omega)}\right\}\,d\omega\bigg|\label{eq:l1start}
\end{align} where the second equality follows from Lemma~\ref{lem:FourierCoef}. Now \begin{align}
    \bigg|\frac{d^2}{d\omega^2}\left\{\frac{1}{\phi_{M}^2(\omega)}\right\}\bigg|&=\bigg|\frac{6}{\phi_{M}^4(\omega)}\frac{d\phi_{M}(\omega)}{d\omega}-\frac{2}{\phi_{M}^3(\omega)}\frac{d^2\phi_{M}(\omega)}{d\omega^2}\bigg|\nonumber
\end{align} and thus \begin{align}
    \underset{M\to\infty}{\lim\sup}\bigg|\frac{d^2}{d\omega^2}\left\{\frac{1}{\phi_{M}^2(\omega)}\right\}\bigg|\leq \frac{6c_{\phi}'}{c_{\phi}^4}+\frac{2c'_{\phi}}{c_{\phi}^3}\label{eq:phiBound}
\end{align} almost surely $P_x$, from the asymptotic bounds on $\phi_M$ in Assumption~\ref{cond:phi_M_asymp}. Combining~\eqref{eq:l1start} and~\eqref{eq:phiBound}, we obtain \begin{align*}
    \underset{M\to\infty}{\lim\sup}\sum_{k\in\mathbb{Z}}|h_{M}(k)|\leq \frac{1}{c_{\phi}^2}+2\left\{\frac{6c_{\phi}'}{c_{\phi}^4}+\frac{2c'_{\phi}}{c_{\phi}^3}\right\}=\tilde{c}_{\phi}
\end{align*} almost surely $P_x$, where the inequality uses $\sum_{k=1}^{\infty}\frac{1}{k^2}\leq 2$.

In addition, we bound $\|x_{\alpha}-x_{\beta}\|_1$ for $\alpha,\beta\in(-1,1)$. We have \begin{align*}
    \|x_{\alpha}-x_{\beta}\|_1&=\sum_{k\in\mathbb{Z}}|\alpha^{|k|}-\beta^{|k|}|\\
    &=2\sum_{k=1}^{\infty}|\alpha^{k}-\beta^{k}|\\
    &=2\sum_{k=1}^{\infty}|(\alpha-\beta)\sum_{j=1}^{k}\alpha^{k-j}\beta^{j-1}|\\
    &\leq 2|\alpha-\beta|\sum_{k=1}^{\infty}\sum_{j=1}^{k}|\alpha|^{k-j}|\beta^{j-1}|\\
    &=2|\alpha-\beta|\sum_{j=1}^{\infty}\sum_{k=j}^{\infty}|\alpha|^{k-j}|\beta^{j-1}|\\
    &=\frac{2|\alpha-\beta|}{(1-|\alpha|)(1-|\beta|)}.
\end{align*}

Thus, \begin{align*}
    |\langle x_{\alpha}-x_{\alpha_j}, r_M-\gamma\rangle_{\phi_{M}}|
    &=|\braket{(x_{\alpha}-x_{\alpha_j})*h_{M},r_M-\gamma}\\
    &\leq \|(x_{\alpha}-x_{\alpha_j})*h_{M}\|_{1}\,\|r_M-\gamma\|_{\infty}\\
    &\leq \|x_{\alpha}-x_{\alpha_j}\|_1\,\|h_{M}\|_1\, \|r_M-\gamma\|_{\infty}\\
    &\le \{r_M(0) + \gamma(0)\}\|h_M\|_1\frac{2|\alpha - \alpha_j|}{(1-|\alpha|)(1-|\alpha_j|)}
\end{align*}
where the first inequality uses Holder's inequality, and the second inequality uses Young's inequality for convolutions. The third inequality uses $|r_M(k)|\leq r_M(0)$ from Assumption~\ref{cond:rM:even} as well as $|\gamma(k)|\leq \gamma(0)$.
Now taking the $M$-limits in \eqref{prop_c:eq_3}, we have
\begin{align*}
    \limsup_{M\to\infty}\sup _{\alpha \in \mathcal{K}}| \langle g_{M\alpha}, r_M-\gamma\rangle| 
    &\le  \limsup_{M}  \max _{j\in \{1,\dots,K\}} \sup _{\alpha \in N(\alpha_j; r_{\mathcal{K}})} \{r_M(0) + \gamma(0)\}\|h_{M}\|_1\frac{2|\alpha - \alpha_j|}{(1-|\alpha|)(1-|\alpha_j|)}
\end{align*}
since $\limsup_{M}\max _{j\in \{1,\dots,K\}} |\langle g_{M\alpha_j}, r_M-\gamma\rangle| =0$ as $K<\infty$ from the pointwise convergence in \eqref{prop_c:conv_pw}. Since $1-|\alpha| \ge d_{\mathcal{K}}$ for any $\alpha \in \mathcal{K}$ and for $\alpha \in N(\alpha_j;r_{\mathcal{K}})$ we have $|\alpha-\alpha_j| < r_{\mathcal{K}}$,  we have
\begin{align*}
     \limsup_{M\to\infty}\sup _{\alpha \in \mathcal{K}}| \langle g_{M\alpha}, r_M-\gamma\rangle| 
     &\le
     \limsup_{M}  \max _{j\in \{1,\dots,K\}} \sup _{\alpha \in N(\alpha_j; r_{\mathcal{K}})}  \{r_M(0) + \gamma(0)\}\|h_{M}\|_1\frac{2r_{\mathcal{K}}}{(1-|\alpha|)(1-|\alpha_j|)}\\
    &\le \frac{4\gamma(0)}{\tilde{c}_{\phi}} \frac{(\epsilon_1d_\mathcal{K}^2\tilde{c}_\phi)/\{8b\}}{(d_\mathcal{K}^2/2)} \leq  \epsilon_1
\end{align*} where the last inequality uses $\hat{r}_{M}(0)\overset{a.s.}{\to}\gamma(0)$ by~\ref{cond:rM_pwConvergence}, as well as $(1-|\alpha_j|)\geq d_{\mathcal{K}}$ and $(1-|\alpha|)\geq d_{\mathcal{K}}/2$ from the definitions of $d_{\mathcal{K}}$ and $r_{\mathcal{K}}$. Since $\epsilon_1>0$ is arbitrary, the proof is complete.
\end{proof}

\begin{prop}\label{prop:measbdd}
Let $r_M$ be an initial autocovariance estimator satisfying \ref{cond:rM_pwConvergence}--\ref{cond:rM:even}. 
Let $\hat{\mu}_{\delta}^{\phi_M}$ be the representing measure of the weighted moment LS estimator $\Pi^{\phi_{ M}}(r_M;\delta)$ where $\delta>0$ and the weight function $\phi_{M}$ satisfies Assumption \ref{cond:phi_M} and \ref{cond:phi_M_asymp}.  We have $\limsup_{M\to\infty} \hat{\mu}_{\delta}^{\phi_M} ([-1, 1]) \le C_\mu$ almost surely for a constant $C_{\mu}<\infty $.
\end{prop}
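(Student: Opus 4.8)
The plan is to reduce the statement to a bound on the value of the estimator at lag zero, and then to trap that value between a quadratic lower bound and a linear upper bound in the total mass. Write $f:=\Pi^{\phi_M}(r_M;\delta)$ and let $\hat{\mu}_\delta^{\phi_M}$ be its representing measure, which is supported in $C=[-1+\delta,1-\delta]$. Since $x_\alpha(0)=\alpha^{0}=1$ for every $\alpha$, evaluating the mixture representation $f=\int x_\alpha\,\hat{\mu}_\delta^{\phi_M}(d\alpha)$ at $k=0$ gives $f(0)=\hat{\mu}_\delta^{\phi_M}([-1,1])$. Hence it suffices to show $\limsup_M f(0)\le C_\mu$, and I may assume $f(0)>0$ since otherwise $\hat{\mu}_\delta^{\phi_M}$ is the null measure and the bound is trivial.

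For the upper bound I would start from the optimality identity $\|f\|_{\phi_M}^2=\langle r_M,f\rangle_{\phi_M}$ (part 2 of Proposition~\ref{prop:weighted_opt_ineq}) and rewrite the right-hand side via Lemma~\ref{lem:inner_product_rep} as $\int\langle r_M,x_\alpha\rangle_{\phi_M}\,\hat{\mu}_\delta^{\phi_M}(d\alpha)$. The naive continuation $\|f\|_{\phi_M}\le\|r_M\|_{\phi_M}$ is the tempting route, but it is exactly the obstacle: $\|r_M\|_2$, and hence $\|r_M\|_{\phi_M}$, need not stay bounded as $M\to\infty$, so it cannot yield a finite limiting constant. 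Instead I would only use inner products against the well-behaved sequences $x_\alpha$. Splitting $\langle r_M,x_\alpha\rangle_{\phi_M}=\langle\gamma,x_\alpha\rangle_{\phi_M}+\langle r_M-\gamma,x_\alpha\rangle_{\phi_M}$, the second term is controlled uniformly over the fixed compact set $C\subset(-1,1)$ by Proposition~\ref{prop:prop_c}, giving $\sup_{\alpha\in C}|\langle r_M-\gamma,x_\alpha\rangle_{\phi_M}|=:\epsilon_M\to 0$ almost surely. For the first term I would use $\hat{\gamma}=\phi_\gamma\ge 0$, $K\ge 0$, and $\phi_M^{-2}\le c_{0M}^{-2}$ to obtain $\langle\gamma,x_\alpha\rangle_{\phi_M}\le c_{0M}^{-2}\langle\gamma,x_\alpha\rangle\le c_{0M}^{-2}\|\gamma\|_1$, where $\|\gamma\|_1<\infty$ by geometric ergodicity. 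Integrating against $\hat{\mu}_\delta^{\phi_M}$ then yields $\|f\|_{\phi_M}^2\le\big(c_{0M}^{-2}\|\gamma\|_1+\epsilon_M\big)\,f(0)$.

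For the lower bound I would expand $\|f\|_{\phi_M}^2=\int\!\!\int\langle x_\alpha,x_\beta\rangle_{\phi_M}\,\hat{\mu}_\delta^{\phi_M}(d\alpha)\,\hat{\mu}_\delta^{\phi_M}(d\beta)$ and bound the kernel from below: for $\alpha,\beta\in C$ we have $|\alpha|,|\beta|\le 1-\delta$, so Lemma~\ref{lem: K_bound} gives $K(\alpha,\omega),K(\beta,\omega)\ge\delta/(2-\delta)$, and together with $\phi_M\le c_{1M}$ this produces $\langle x_\alpha,x_\beta\rangle_{\phi_M}\ge c_{1M}^{-2}\big(\delta/(2-\delta)\big)^2$ for all $\alpha,\beta$ in the support. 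Integrating this constant lower bound over both measures gives $\|f\|_{\phi_M}^2\ge c_{1M}^{-2}\big(\delta/(2-\delta)\big)^2 f(0)^2$.

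Combining the two inequalities and dividing by $f(0)>0$ gives $f(0)\le c_{1M}^2\big((2-\delta)/\delta\big)^2\big(c_{0M}^{-2}\|\gamma\|_1+\epsilon_M\big)$. Taking $\limsup_{M}$ and invoking Assumption~\ref{cond:phi_M_asymp} (with the choices $c_{0M}=\inf_\omega\phi_M(\omega)$ and $c_{1M}=\|\phi_M\|_\infty$, so that $\liminf_M c_{0M}\ge c_\phi$ and $\limsup_M c_{1M}\le c_\phi'$), together with $\epsilon_M\to 0$, yields $\limsup_M f(0)\le (c_\phi')^2\big((2-\delta)/\delta\big)^2\|\gamma\|_1/c_\phi^2=:C_\mu<\infty$. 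Since $f(0)=\hat{\mu}_\delta^{\phi_M}([-1,1])$, this is the claim. I expect the only delicate point beyond routine bookkeeping to be the uniform-in-$\alpha$ control of the $\langle r_M-\gamma,x_\alpha\rangle_{\phi_M}$ term, which is precisely where Proposition~\ref{prop:prop_c} does the work and where the unbounded-norm obstacle is sidestepped.
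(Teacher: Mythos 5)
Your argument is correct, and its skeleton is the same sandwich the paper uses: trap the total mass $\hat{\mu}_{\delta}^{\phi_M}([-1,1])=\Pi^{\phi_M}(r_M;\delta)(0)$ between the quadratic lower bound $\inf_{\alpha,\alpha'}\langle x_\alpha,x_{\alpha'}\rangle_{\phi_M}\,\hat{\mu}_{\delta}^{\phi_M}([-1,1])^2$ and the linear upper bound $\sup_{\alpha}|\langle x_\alpha,r_M\rangle_{\phi_M}|\,\hat{\mu}_{\delta}^{\phi_M}([-1,1])$, both coming from $\|f\|_{\phi_M}^2=\int\langle x_\alpha,r_M\rangle_{\phi_M}\,\hat{\mu}_{\delta}^{\phi_M}(d\alpha)$ via Propositions~\ref{prop:weighted_opt_ineq}--\ref{prop:weighted_opt_eq} and Lemma~\ref{lem:inner_product_rep}; your lower bound via $K(\alpha,\omega)\geq\delta/(2-\delta)$ and $\phi_M\leq c_{1M}$ is essentially identical to the paper's. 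The one genuine difference is how the numerator is controlled: the paper bounds $|\langle x_\alpha,r_M\rangle_{\phi_M}|$ directly, by re-deriving the $O(k^{-2})$ decay of the Fourier coefficients $g_{M\alpha}(k)$ of $K(\alpha,\cdot)/\phi_M^2$ and using $|r_M(k)|\leq r_M(0)\to\gamma(0)$, whereas you split off $\gamma$, invoke Proposition~\ref{prop:prop_c} as a black box for the uniform vanishing of $\langle x_\alpha,r_M-\gamma\rangle_{\phi_M}$, and bound $\langle x_\alpha,\gamma\rangle_{\phi_M}\leq c_{0M}^{-2}\|\gamma\|_1$ using nonnegativity of $K$ and of $\phi_\gamma$ together with $\gamma\in\ell_1$ (available from geometric ergodicity, as the paper notes in the proof of Theorem~\ref{thm:weightedl1}). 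Your version is more modular and avoids repeating the second-derivative estimates, at the small price of importing $\|\gamma\|_1<\infty$ and $\phi_\gamma\geq 0$; the resulting constant $C_\mu$ differs from the paper's ($\|\gamma\|_1$ in place of $\gamma(0)$ times kernel-decay constants) but both are finite, which is all the statement requires. Your remark that the naive route $\|f\|_{\phi_M}\leq\|r_M\|_{\phi_M}$ fails because $\|r_M\|_2$ need not stay bounded correctly identifies the obstacle both proofs are designed to avoid.
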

\begin{proof}

By Lemma \ref{lem:inner_product_rep} and Proposition \ref{prop:weighted_opt_eq}, we have
\begin{align*}
\langle\Pi^{\phi_M}(r_M;\delta), \Pi^{\phi_M}(r_M;\delta)\rangle_{\phi_M} &=\int_{[-1,1]}\langle  x_\alpha, \Pi^{\phi_M}(r_M;\delta) \rangle_{\phi_M} \,\hat{\mu}_{\delta}^{\phi_M}(d \alpha)\\
&=\int_{[-1,1]}\langle x_\alpha, r_M \rangle_{\phi_M}\, \hat{\mu}_{\delta}^{\phi_M}(d \alpha).
\end{align*}
Since $\operatorname{Supp}(\hat{\mu}_{\delta}^{\phi_M}) \subseteq [-1+\delta,1-\delta]$, we have
\begin{align}\label{prop_measbdd:eq1}
    \langle\Pi^{\phi_M}(r_M;\delta), \Pi^{\phi_M}(r_M;\delta)\rangle_{\phi_M} \le \sup_{\alpha \in [-1+\delta,1-\delta]} |\langle x_\alpha, r_M  \rangle_{\phi_M}| \,\hat{\mu}_{\delta}^{\phi_M} ([-1, 1]).
\end{align}

For $\alpha \in [-1+\delta, 1-\delta]$, define $g_{M\alpha}(k)$ to be the inverse Fourier transform of $\hat{g}_{M\alpha}(\omega) = K(\alpha,\omega) \phi_M^{-2}(\omega)$ such that
$\langle x_\alpha, r_M  \rangle_{\phi_M} = \sum_{k\in\Z} g_{M\alpha}(k) r_M(k)$. Moreover let $h_M(k)$ be the inverse Fourier transform of $\phi_M^{-2}(\omega)$. We have
\begin{align*}
|\langle x_\alpha, r_M  \rangle_{\phi_M}|&\le  \sum_{k \in \Z} |g_{M\alpha} (k) | | r_M(k)| \\
& = |g_{M\alpha}(0) r_M(0)| + 2\sum_{k=1}^\infty |g_{M\alpha} (k) | | r_M(k)| \\
&\le |g_{M\alpha}(0) r_M(0)|+2r_M(0) \sum_{k=1}^\infty \frac{C_{M\alpha}}{k^2} 
\end{align*}
where for the last inequality we use the bound in \eqref{prop_c:gmk_bound} in the proof of Proposition \ref{prop:prop_c} and $|r_M(k)| \le r_M(0)$ by \ref{cond:rM:even}. Since $g_{M\alpha}(k) = (x_\alpha * h_M)(k)$, we have $|g_{M\alpha}(0)| = |\sum_{y \in \Z} h_M(y) \alpha^{|y|} |\le \|h_M\|_\infty \frac{1+|\alpha|}{1-|\alpha|}$, and therefore
\begin{align*}
    |\langle x_\alpha, r_M  \rangle_{\phi_M}| 
    &\le \|h_M\|_\infty \frac{1+|\alpha|}{1-|\alpha|}r_M(0) + 4r_M(0) C_{M\alpha}\\
    &\le \frac{2}{\delta \inf_{\omega \in [-\pi,\pi]} \phi_M^{2}(\omega)} r_M(0)+4 r_M(0) C_{M\alpha}
\end{align*}for $\alpha\in[-1+\delta,1-\delta]$, where we use $|h_M(k)|=\bigg|(2\pi)^{-1}\int_{[-\pi,\pi]}\frac{1}{\phi_M^2(\omega)}\exp(ik\omega)\,d\omega\bigg|\leq (2\pi)^{-1}\int_{[-\pi,\pi]}\frac{1}{\phi_M(\omega)^2}\,d\omega\leq \frac{1}{\underset{\omega\in[-\pi,\pi]}{\inf}\,\phi_{M}^2(\omega)}$, $\forall k\in\mathbb{Z}$ to bound $\|h_{M}\|_{\infty}$.

As in Equation \eqref{prop_c:C_Ma_bound} in Proposition~\ref{prop:prop_c}, we have
$\limsup_{M\to\infty} C_{M\alpha} \le C_\alpha =  \frac{10}{(1-|\alpha|)^{6}}\{\frac{1}{c_\phi^2} + \frac{4 c'_\phi}{c_\phi^3} +   \frac{6c'_\phi}{c_\phi^4} + \frac{2c'_\phi}{c_\phi^3} \} $. For $\alpha \in [-1+\delta,1-\delta]$, $C_\alpha$ is further bounded by $c_2 := 10\delta^{-6}\{\frac{1}{c_\phi^2} + \frac{4 c'_\phi}{c_\phi^3} +   \frac{6c'_\phi}{c_\phi^4} + \frac{2c'_\phi}{c_\phi^3} \} $, which is independent of $\alpha$.
In particular,
\begin{align}\label{prop_mu_bound:ineq1}
    \limsup_{M\to\infty}\sup_{\alpha\in [-1+\delta, 1-\delta]} |\langle x_\alpha, r_M  \rangle_{\phi_M}| \le \frac{2}{\delta c_\phi^2}\gamma(0) + 4\gamma(0) c_2.
\end{align} On the other hand,
\begin{align*}
     \langle\Pi^{\phi_M}(r_M;\delta), \Pi^{\phi_M}(r_M;\delta)\rangle_{\phi_M} 
     & = 
     \int \langle  x_\alpha,\Pi^{\phi_M}(r_M;\delta) \rangle_{\phi_M} \,\hat{\mu}_{\delta}^{\phi_M}(d\alpha)
     \\
     &=\int \int \langle x_\alpha, x_{\alpha'}\rangle_{\phi_M}\,\hat{\mu}_{\delta}^{\phi_M}(d \alpha')  \,\hat{\mu}_{\delta}^{\phi_M}(d \alpha).
\end{align*}
Thus
\begin{align}\label{prop_measbdd:eq2}
    \langle\Pi^{\phi_M}(r_M;\delta), \Pi^{\phi_M}(r_M;\delta)\rangle_{\phi_M} \ge\inf_{\alpha \in [-1+\delta,1-\delta]} \langle x_\alpha, x_{\alpha'}  \rangle_{\phi_M} \,\hat{\mu}_{\delta}^{\phi_M} ([-1, 1])^2.
\end{align} We have \begin{align}
    \underset{\alpha,\alpha'\in[-1+\delta,1-\delta]}{\inf}\,\langle x_\alpha, x_{\alpha'} \rangle_{\phi_M} &= \underset{\alpha,\alpha'\in[-1+\delta,1-\delta]}{\inf}\; \frac{1}{2\pi} \int_{[-\pi,\pi]} \frac{K(\alpha,\omega)K(\alpha',\omega)}{\phi_M^2(\omega)}\,d\omega\nonumber\\
    &\geq \underset{\alpha,\alpha'\in[-1+\delta,1-\delta]}{\inf}
    \;\left(\frac{1-|\alpha|}{1+|\alpha|}\right)\left(\frac{1-|\alpha'|}{1+|\alpha'|}\right)\frac{1}{2\pi}\int_{[-\pi,\pi]}\frac{1}{\phi_{M}^2(\omega)}\,d\omega\nonumber\\
    & \ge \frac{\delta^2}{4 \underset{\omega \in [-\pi, \pi]}{\inf}\phi_M^2(\omega) }\nonumber\\
    &>0 \label{eq:liminf}
\end{align} where we used $K(\alpha,\omega) \ge \frac{1-|\alpha|}{1+|\alpha|}$ uniformly over $\omega\in[-\pi,\pi]$ from Lemma~\ref{lem: K_bound}. Therefore, combining \eqref{prop_measbdd:eq1} and \eqref{prop_measbdd:eq2}, we have
\begin{align}\label{prop_mu_bound:muhat_bound}
     \hat{\mu}_{\delta}^{\phi_M} ([-1, 1]) \le \frac{\sup_{\alpha \in [-1+\delta,1-\delta]} |\langle x_\alpha, r_M  \rangle_{\phi_M}|}{\inf_{\alpha \in [-1+\delta,1-\delta]} \langle x_\alpha, x_\alpha' \rangle_{\phi_M}}
\end{align} 

Finally, from the second to last inequality in~\eqref{eq:liminf} we have \begin{align}\label{prop_mu_bound:ineq2}
     \liminf_{M\to\infty}  \inf_{\alpha\in [-1+\delta, 1-\delta]}  |\langle x_\alpha, x_\alpha' \rangle_{\phi_M} | \ge \frac{4}{\delta^2 c_\phi^2}.
\end{align} Combining \eqref{prop_mu_bound:ineq1}, \eqref{prop_mu_bound:muhat_bound},
and \eqref{prop_mu_bound:ineq2}, we obtain the result.

\end{proof}

\subsection{Proof of Lemma~\ref{lem:deterministicPhi_M}}

\begin{proof}
    First, we show $\tilde{\phi}_{\delta M}$ is infinitely differentiable. Note $\Pi(r_M;\delta)(0) = \int \alpha^{0} \hat{\mu}_{\delta M} = \hat{\mu}_{\delta M} ([-1,1])$. In the case $\hat{\mu}_{\delta M}([-1,1])=0$, we have $\tilde{\phi}_{\delta M}\equiv1$ for all $\omega \in[-\pi,\pi]$, and $\tilde{\phi}_{M}(\omega)$ is differentiable. 
    Otherwise, from~\ref{cond:rM_finiteSupport} and~\ref{cond:rM:even}, we have $|\text{Supp}(\hat{\mu}_{\delta M})|$ is finite and $\text{Supp}(\hat{\mu}_{\delta M})\subset (-1,1)$ from Proposition 6 in \citet{berg2023efficient} or Proposition~\ref{prop:finiteSupport} in this work with weight function $\phi\equiv 1$. Thus $\phi_{\delta M}(\omega) = \int K(\alpha,\omega) \hat{\mu}_\delta (d\alpha) =  \sum_{\alpha\in \text{Supp}(\hat{\mu}_{\delta M})} \hat{\mu}_{\delta}(\{\alpha\}) K(\alpha,\omega)$ where the number of terms in the summation is finite. Since $K(\alpha,\omega)$ is infinitely differentiable with respect to $\omega$ for each $\alpha\in(-1,1)$, $\phi_{\delta M}$ is also infinitely differentiable. Then in the case $\hat{\mu}_{\delta M}([-1,1])\ne 0$, $\tilde{\phi}_{\delta M}(\omega)=\phi_{\delta M}(\omega)/\hat{\mu}_{\delta}([-1,1])$ is also infinitely differentiable.

    Next, we bound $\tilde{\phi}_{M}(\omega)$. In the case $\hat{\mu}_{\delta M}([-1,1])=0$, we have $1= \underset{\omega\in[-\pi,\pi]}{\inf}\,\tilde{\phi}_{\delta M}(\omega)= \underset{\omega\in[-\pi,\pi]}{\sup}\,\tilde{\phi}_{\delta M}(\omega)=1$. Otherwise, we have $\hat{\mu}_{\delta M}([-1,1])>0$ and \begin{align*}
        \underset{\omega\in[-\pi,\pi]}{\inf}\,\tilde{\phi}_{\delta M}(\omega)&=\underset{\omega\in[-\pi,\pi]}{\inf}\,\frac{1}{\hat{\mu}_{\delta M}([-1+\delta,1-\delta])}\int_{[-1+\delta,1-\delta]}K(\alpha,\omega)\,\hat{\mu}_{\delta M}(d\alpha)\\
        &\geq  \frac{1}{\hat{\mu}_{\delta M}([-1+\delta,1-\delta])}\int_{[-1+\delta,1-\delta]}\underset{\omega\in[-\pi,\pi]}{\inf}\,\underset{\alpha'\in[-1+\delta,1-\delta]}{\inf}\,K(\alpha',\omega)\,\hat{\mu}_{\delta M}(d\alpha)\\
        &=\frac{\hat{\mu}_{\delta M}([-1+\delta,1-\delta]}{\hat{\mu}_{\delta M}([-1+\delta,1-\delta]}\frac{\delta}{2-\delta}=\frac{\delta}{2-\delta}.
    \end{align*} Similarly, still in the case $\hat{\mu}_{\delta M}([-1,1])>0$, we have $\underset{\omega\in[-\pi,\pi]}{\sup}\,\tilde{\phi}_{\delta M}(\omega)\leq \frac{2-\delta}{\delta}$. Since $0<\frac{\delta}{2-\delta}\leq 1\leq \frac{2-\delta}{\delta}<\infty$, $\tilde{\phi}_{\delta M}$ satisfies~\ref{cond:phi_M} with $c_{0M}=\frac{\delta}{2-\delta}$ and $c_{1M}=\frac{2-\delta}{\delta}$. This shows~\ref{cond:phi_M} holds for $\tilde{\phi}_{\delta M}$. We note that the bounding constants $c_{0M}$ and $c_{1M}$ are deterministic and do not depend on $M$.

    Now we show~\ref{cond:phi_M_asymp} also holds. From the above, taking $c_{\phi}=\frac{\delta }{2-\delta}$, we have \begin{align*}\liminf_{M\to\infty} \inf_{\omega \in [-\pi,\pi]} \tilde{\phi}_{\delta M}(\omega) \geq c_{\phi}.
    \end{align*} Next, we obtain a valid choice of $c_{\phi}'$ in~\ref{cond:phi_M_asymp}. In the case $\hat{\mu}_{\delta M}([-1+\delta,1-\delta])=0$, $\tilde{\phi}_{\delta M}(\omega)=1$ for all $\omega \in[-\pi,\pi]$, and so $\tilde{\phi}'(\omega)=\tilde{\phi}_{\delta M}''(\omega)=0$, for all $\omega\in [-\pi,\pi]$. Otherwise, in the case $\hat{\mu}_{\delta M}([-1+\delta,1-\delta])>0$, we have
\begin{align*}
    \phi'_{\delta M}(\omega) = \sum_{\alpha\in   \text{Supp}(\hat{\mu}_{\delta M})} \hat{\mu}_{\delta M}(\alpha) \frac{d}{d\omega}K(\alpha,\omega)\quad \mbox{and} \quad
    \phi''_{\delta M}(\omega) = \sum_{\alpha\in   \text{Supp}(\hat{\mu}_{\delta M})} \hat{\mu}_{\delta M}(\alpha)\frac{d^2}{d\omega^2}K(\alpha,\omega).
\end{align*} where the numbers of terms in the summations are finite.
Since $\text{Supp}(\hat{\mu}_{\delta M})\subset [-1+\delta,1-\delta]$, we have from Lemma \ref{lem: K_bound} that $\max\{\|\phi_{\delta M}\|_\infty, \|\phi'_{\delta M}\|_\infty,\|\phi''_{\delta M}\|_\infty\} \le 10\delta^{-6}  \hat{\mu}_{\delta}([-1+\delta,1-\delta])$. Then $\tilde{\phi}_{\delta M}(\omega)=\frac{\phi_{\delta M}(\omega)}{\hat{\mu}_{\delta}([-1+\delta,1-\delta])}$ satisfies $\max\{\|\tilde{\phi}_{\delta M}\|_\infty, \|\tilde{\phi}'_{\delta M}\|_\infty,\|\tilde{\phi}''_{\delta M}\|_\infty\} \le 10\delta^{-6}$. Now since $0<\delta\leq 1$, we have the deterministic bound $\max\{\|\tilde{\phi}_{\delta M}\|_\infty, \|\tilde{\phi}'_{\delta M}\|_\infty,\|\tilde{\phi}''_{\delta M}\|_\infty\}\leq 10\delta^{-6}$ in each of the cases $\hat{\mu}_{\delta M}([-1+\delta,1-\delta])=0$ and $\hat{\mu}_{\delta M}([-1+\delta,1-\delta])>0$. Thus~\ref{cond:phi_M_asymp} holds for $\tilde{\phi}_{\delta M}$ with $c_{\phi}=\frac{\delta}{2-\delta}$ and $c_{\phi}'=10\delta^{-6}$.
\end{proof}
\subsection{Proof of Theorem~\ref{thm:l2conv}}

\begin{proof}

We let $\hat{\mu}_\delta^{\phi_M}$ be the representing measure of $\Pi^{\phi_M}(r_M;\delta)$.
%
By taking $f=\gamma$ and $\phi = \phi_M$ in Lemma \ref{lem:weightedl2_ineq}, we have
\begin{align*}
0 \leq \|\Pi^{\phi_M} (r_M; \delta)-\gamma\|^2_{\phi_M} 
& \leq-\int \langle x_\alpha, r-\gamma \rangle_{\phi_M} \mu_\gamma(d \alpha)+\int\left\langle x_\alpha, r-\gamma\right\rangle_{\phi_M} \hat{\mu}_\delta^{\phi_M}(d \alpha)  \\
&\le \sup_{\alpha \in [-1+\delta, 1-\delta]} |\langle x_\alpha, r-\gamma\rangle_{\phi_M}| \{\mu_\gamma([-1,1]) + \hat{\mu}_\delta^{\phi_M}([-1,1])\}
\end{align*}
as $\operatorname{Supp}(\mu_\gamma) \subseteq [-1+\delta,1-\delta]$ by the assumption on the choice of $\delta$. 
Taking $M$-limits to both sides, we have
\begin{align*}
&\limsup_{M\to\infty} \|\Pi^{\phi_M} (r_M; \delta)-\gamma\|^2_{\phi_M} \\
&\le \limsup_{M\to\infty}\sup_{\alpha \in [-1+\delta, 1-\delta]} |\langle x_\alpha, r-\gamma\rangle_{\phi_M}| \{\mu_\gamma([-1,1]) + \hat{\mu}_\delta^{\phi_M}([-1,1])\}\\
&\leq \limsup_{M\to\infty}\sup_{\alpha \in [-1+\delta, 1-\delta]} |\langle x_\alpha, r-\gamma\rangle_{\phi_M}| \mu_\gamma([-1,1]) \\&+ \left\{\limsup_{M\to\infty}\sup_{\alpha \in [-1+\delta, 1-\delta]} |\langle x_\alpha, r-\gamma\rangle_{\phi_M}|\right\}\left\{\limsup_{M\to\infty}\hat{\mu}_\delta^{\phi_M}([-1,1])\right\}
\end{align*}
and the result follows by Propositions \ref{prop:prop_c} and Proposition \ref{prop:measbdd} as $\limsup_{M\to\infty}\sup_{\alpha \in [-1+\delta, 1-\delta]} |\langle x_\alpha, r-\gamma\rangle_{\phi_M}| = 0$ and $\limsup_{M\to\infty}\hat{\mu}_\delta^{\phi_M}([-1,1]) \le C_\mu <\infty$, $P_x$-almost surely for any initial condition $x \in \mathsf{X}$.

For the convergence in $\ell_2$ norm, we have
\begin{align*}
    \|\Pi^{\phi_M} (r_M; \delta)-\gamma\|^2_{\phi_M} 
    &= \frac{1}{2\pi} \int_{[-\pi,\pi]} \frac{(\phi^W_{\delta M}(\omega) - \phi_\gamma(\omega))^2}{\phi_M^2(\omega)} d\omega\\
    &\ge \frac{1}{\underset{\omega \in [-\pi,\pi]}{\sup}\, \phi_M^{2}(\omega)}\ \frac{1}{2\pi} \int_{[-\pi,\pi]}  (\phi^W_{\delta M}(\omega) - \phi_\gamma(\omega))^2 d\omega\\
    &= (\|\phi_M\|_\infty)^{-2}\|\Pi^{\phi_M} (r_M; \delta)-\gamma\|^2_2
\end{align*}
where we denote the Fourier transform of $\Pi^{\phi_M}(r_M;\delta)$ as $\phi^W_{\delta M}$ and we use Parseval's equality for the last equality. Taking limits to both sides, we obtain
\begin{align*}
   (c_{\phi}')^{2} \limsup_{M\to\infty} \|\Pi^{\phi_M} (r_M; \delta)-\gamma\|^2_{\phi_M}  \ge \limsup_{M\to\infty} \|\Pi^{\phi_M} (r_M; \delta)-\gamma\|^2_{2}\ge 0.
\end{align*}
In particular, $\limsup_{M\to\infty} \|\Pi^{\phi_M} (r_M; \delta)-\gamma\|^2_{2}= 0$ $P_x$ almost surely, as $0<(c_\phi')^{2}<\infty$.
\end{proof}

\subsection{Proof of Theorem~\ref{thm:weightedl1}}

\begin{proof}
	First, we note $\gamma\in\ell_1(\mathbb{Z})$ from Lemma 3 of~\citet{berg2023efficient}. Also, we have $\Pi^{\phi_{M}}(r_M;\delta)\in \ell_1(\mathbb{Z})$ for each $M$ from Assumptions~\ref{cond:rM_finiteSupport}--\ref{cond:rM:even} and~\ref{cond:phi_M}, and Lemma~\ref{lem: l1projection}. Thus $\phi_{\delta M}^W(\omega)$ and $\phi_{\gamma}(\omega)$ are well-defined for each $\omega\in[-\pi,\pi]$. Also, 1. implies 2., since
	\begin{align*}
		|\phi_{\delta M}^W(\omega) - \phi(\omega)| 
		&= |\sum_{k=-\infty}^\infty  \Pi^{\phi_{M}}(r_M;\delta )(k)e^{-i\omega k} - \gamma(k)e^{-i\omega k}|\\
		&\le \|\Pi^{\phi_{M}}(r_M; \delta) - \gamma\|_1 
	\end{align*} uniformly over $\omega\in[-\pi,\pi]$. Additionally, 2. implies 3., since \begin{align*}
	    |\sigma^2(\Pi^{\phi_M}(r_M;\delta))-\sigma^2(\gamma)|&=|\phi_{\delta M}^W(0)-\phi_{\gamma}(0)|\\&\leq \underset{\omega\in[-\pi,\pi]}{\sup}\,|\phi_{\delta M}^W(\omega)-\phi_{\gamma}(\omega)|
	\end{align*}
	Therefore, it is sufficient to show 1. $\lim_{M\to\infty} \|\Pi^{\phi_M}(r_M; \delta) - \gamma\|_1 = 0$, $P_x$-almost surely.	Let $\epsilon>0$ be given. From the given conditions, we have from 2. of Theorem~\ref{thm:l2conv} that $\|\Pi^{\phi_M}(r_M;\delta)-\gamma\|_2\to 0$ a.s. as $M\to\infty$. Therefore, $\underset{M\to\infty}{\lim}\,\hat{\mu}_{\delta}^{\phi_M}([-1,1])=\underset{M\to\infty}{\lim}\,\Pi^{\phi_M}(r_M;\delta)(0)=\gamma(0)$ almost surely. Choose $L>0$ such that
	\begin{align}
		\frac{(1-\delta)^L}{\delta}4\gamma(0)\le \epsilon.
	\end{align}
We have
	\begin{align*}
		 \|\Pi^{\phi_M}(r_M; \delta) - \gamma\|_1 \le \sum_{k; |k|<L} |\Pi^{\phi_M}(r_M; \delta)(k) - \gamma(k)|+\sum_{k; |k|\ge L} |\Pi^{\phi_M}(r_M; \delta)(k)| +  \sum_{k; |k|\ge L}|\gamma(k)|
	\end{align*}
We first bound the second and third terms. Since $\operatorname{Supp}(\hat{\mu}_\delta^{\phi_M}) \subseteq [-1+\delta,1-\delta]$, we have
 \begin{align*}
		\sum_{k; |k|\ge L} |\Pi^{\phi_M}(r_M; \delta)(k)|
		 =\sum_{k; |k|\ge L} |\int \alpha^k \hat{\mu}_{\delta}^{\phi_M}(d\alpha)| 
		 \le \sum_{k; |k|\ge L}\int  |\alpha|^k \hat{\mu}_{\delta}^{\phi_M} (d\alpha) 
		 \le 2 \sum_{k; k \ge L} (1-\delta)^k  \int  \hat{\mu}_{\delta}^{\phi_M} (d\alpha)	
	\end{align*}
therefore \begin{align*}
    \underset{M\to\infty}{\lim\sup}\,\sum_{k; |k|\ge L} |\Pi^{\phi_M}(r_M; \delta)(k)| \le \underset{M\to\infty}{\lim\sup}\;2 \frac{(1-\delta)^L}{\delta}\hat{\mu}_{\delta}^{\phi_M}([-1,1])\le 2\gamma(0)\frac{(1-\delta)^L}{\delta}
\end{align*}
		Similarly, since $\operatorname{Supp}(\mu_\gamma)\subseteq [-1+\delta, 1-\delta]$,
		\begin{align*}
			\sum_{k; |k|\ge L} |\gamma(k)| \le 2 \frac{(1-\delta)^L}{\delta} \mu_\gamma([-1,1])
		\end{align*}
	Therefore,
	\begin{align*}
		 \|\Pi^{\phi_M}(r_M; \delta) - \gamma\|_1 
		 &\le \sum_{k; |k|<L} |\Pi^{\phi_M}(r_M; \delta)(k) - \gamma(k)|+2 \frac{(1-\delta)^L}{\delta} \{\hat{\mu}_\delta^{\phi_M}([-1,1])	+\mu_\gamma([-1,1])\}\\
		 &\le \left[\sum_{k; |k|<L} \{\Pi^{\phi_M}(r_M; \delta)(k) - \gamma(k)\}^2\right]^{1/2} \left\{\sum_{k;|k|\le L} 1\right\}^{1/2}\\&+2 \frac{(1-\delta)^L}{\delta} \{\hat{\mu}_\delta^{\phi_M}([-1,1])	+\mu_\gamma([-1,1])\}\\
		 &\le \|\Pi^{\phi_M}(r_M; \delta) - \gamma\|_2 \sqrt{2L+1} +2 \frac{(1-\delta)^L}{\delta} \{\hat{\mu}_\delta^{\phi_M}([-1,1])	+\mu_\gamma([-1,1])\}
	\end{align*}	
 where we use Holder's inequality for the second inequality. Taking the limit as $M\to\infty$ to both sides and using part 2. of Theorem~\ref{thm:l2conv}, we have
\begin{align*}
	\limsup_M \|\Pi^{\phi_M}(r_M; \delta) - \gamma\|_1 \le \frac{(1-\delta)^L}{\delta}4\gamma(0) \le \epsilon, \,\, a.s.
\end{align*}
by the choice of $L$. Since $\epsilon$ is arbitrary, we have the desired result.
\end{proof}

\begin{lem}\label{lem:unweighted_spectr_bound}
    Suppose $r\in\ell_2(\mathbb{Z},\mathbb{R})$ and suppose $r(0)>0$. Let $\phi_{\delta}$ denote the Fourier transform of the unweighted projection $\Pi(r;\delta)$. Then $\|\Pi(r;\delta)\|_2>0$ and $0<\underset{\omega\in[-\pi,\pi]}{\inf}\phi_{\delta}(\omega)\leq \underset{\omega\in[-\pi,\pi]}{\sup}\phi_{\delta}(\omega)<\infty$.
\end{lem}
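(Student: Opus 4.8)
The plan is to establish the two assertions separately: first that the unweighted projection $\Pi(r;\delta)$ is not the zero sequence, and then that its Fourier transform is pinched between two positive, finite constants via the integral representation of $\phi_\delta$ and the Poisson-kernel bounds of Lemma~\ref{lem: K_bound}. Throughout I take $0<\delta\le 1$ (the relevant range, which ensures $0\in[-1+\delta,1-\delta]$ and keeps the lower bound below strictly positive), and I write $\hat{\mu}_\delta$ for the representing measure of $\Pi(r;\delta)$, so that $\Pi(r;\delta)=\int x_\alpha\,\hat{\mu}_\delta(d\alpha)$ with $\hat{\mu}_\delta\ge 0$ and $\operatorname{Supp}(\hat{\mu}_\delta)\subseteq[-1+\delta,1-\delta]$.

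For the first assertion I would show that the zero sequence cannot minimize $\|r-f\|_2^2$ over $f\in\mathscr{M}_\infty(\delta)\cap\ell_2(\mathbb{Z},\mathbb{R})$. Since $\epsilon\delta_0$ is a nonnegative measure supported in $[-1+\delta,1-\delta]$ for every $\epsilon\ge 0$, the sequence $\epsilon x_0$ (where $x_0(k)=\mathbbm{1}\{k=0\}$) is feasible, and using $\langle r,x_0\rangle=r(0)$ and $\|x_0\|_2^2=1$ one gets $\|r-\epsilon x_0\|_2^2=\|r\|_2^2-2\epsilon r(0)+\epsilon^2$. Choosing $\epsilon=r(0)>0$ makes this equal to $\|r\|_2^2-r(0)^2$, so by optimality $\|r-\Pi(r;\delta)\|_2^2\le\|r\|_2^2-r(0)^2<\|r\|_2^2=\|r-0\|_2^2$; hence $\Pi(r;\delta)\ne 0$ and $\|\Pi(r;\delta)\|_2>0$. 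Because $\Pi(r;\delta)=\int x_\alpha\,\hat{\mu}_\delta(d\alpha)$, this forces $\hat{\mu}_\delta$ to be non-null, and therefore $\Pi(r;\delta)(0)=\int\alpha^{0}\,\hat{\mu}_\delta(d\alpha)=\hat{\mu}_\delta([-1,1])>0$, a quantity which is finite since $\Pi(r;\delta)\in\ell_2$.

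For the second assertion I would apply Lemma~\ref{lem: fourier_xa} to write $\phi_\delta(\omega)=\int_{[-1+\delta,1-\delta]}K(\alpha,\omega)\,\hat{\mu}_\delta(d\alpha)$, where $K(\alpha,\omega)=\frac{1-\alpha^2}{1-2\alpha\cos\omega+\alpha^2}$. On the support $|\alpha|\le 1-\delta$, so Lemma~\ref{lem: K_bound} (whose proof also gives $K(\alpha,\omega)\le\frac{1+|\alpha|}{1-|\alpha|}$) yields the uniform bounds $\frac{\delta}{2-\delta}\le K(\alpha,\omega)\le\frac{2-\delta}{\delta}$ for all $\omega$ and all $\alpha$ in the support. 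Integrating against $\hat{\mu}_\delta$ and substituting $\hat{\mu}_\delta([-1,1])=\Pi(r;\delta)(0)$ gives $\frac{\delta}{2-\delta}\Pi(r;\delta)(0)\le\phi_\delta(\omega)\le\frac{2-\delta}{\delta}\Pi(r;\delta)(0)$ for every $\omega$, and since $0<\Pi(r;\delta)(0)<\infty$ these are exactly the desired strictly positive and finite bounds on $\inf_\omega\phi_\delta$ and $\sup_\omega\phi_\delta$.

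The argument is largely mechanical; the only step carrying genuine content is the first assertion, where the hypothesis $r(0)>0$ enters precisely through the observation that the zero sequence is strictly improvable by a small positive multiple of $x_0$. Everything afterward—the integral representation of $\phi_\delta$, the uniform Poisson-kernel bounds, and the positivity and finiteness of the total mass $\hat{\mu}_\delta([-1,1])=\Pi(r;\delta)(0)$—follows directly from results already established, so I do not anticipate any real obstacle beyond keeping track of the $\delta$-dependent constants.
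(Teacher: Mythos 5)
Your proof is correct, and the second half (the integral representation $\phi_\delta(\omega)=\int K(\alpha,\omega)\,\hat\mu_\delta(d\alpha)$ combined with the uniform Poisson-kernel bounds $\frac{\delta}{2-\delta}\le K(\alpha,\omega)\le\frac{2-\delta}{\delta}$ on $[-1+\delta,1-\delta]\times[-\pi,\pi]$) is essentially identical to the paper's argument. Where you diverge is the key positivity step. The paper establishes $\hat\mu_\delta([-1,1])>0$ in one line by invoking the optimality characterization (Proposition~\ref{prop:weighted_opt_ineq} with $\phi\equiv 1$ and $\alpha=0$), which gives the quantitative inequality $\hat\mu_\delta([-1,1])=\braket{x_0,\Pi(r;\delta)}\ge\braket{x_0,r}=r(0)>0$. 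You instead argue from first principles: the feasible point $r(0)\,x_0$ achieves objective value $\|r\|_2^2-r(0)^2<\|r-0\|_2^2$, so the zero sequence cannot be the minimizer, hence $\hat\mu_\delta$ is non-null and has positive total mass. Your route is more elementary in that it uses only the definition of the projection and the feasibility of $\epsilon x_0$ (for $\delta\le 1$, which both proofs implicitly require), at the cost of losing the explicit lower bound $\hat\mu_\delta([-1,1])\ge r(0)$ that the paper's one-liner delivers; neither bound is needed beyond strict positivity here, so both suffice. One small point of care: the paper first notes $\Pi(r;\delta)\in\ell_1(\mathbb{Z},\mathbb{R})$ so that $\phi_\delta$ is pointwise (indeed uniformly) defined and equals $\int K(\alpha,\omega)\,\hat\mu_\delta(d\alpha)$ for \emph{every} $\omega$, not merely almost every $\omega$; since you take pointwise infima and suprema, you should make the same remark before applying Lemma~\ref{lem: fourier_xa}, but this is a cosmetic fix rather than a gap.
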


\begin{proof}

Since $\Pi(r;\delta)\in\mathscr{M}_{\infty}(\delta)\cap \ell_2(\mathbb{Z},\R)$, we have $\Pi(r;\delta)\in\ell_1(\mathbb{Z},\R)$ from Lemma 3 of~\citet{berg2023efficient}. Thus from Lemma \ref{lem: fourier_xa} and Remark \ref{remark:FT_l1seq}, $\phi_{\delta}(\omega)=\int_{[-1+\delta,1-\delta]}\,K(\alpha,\omega)\,\mu(d\alpha)$ for each $\omega\in[-\pi,\pi]$ where $\mu$ denotes the representing measure for $\Pi(r;\delta)$. We have $\Pi(r;\delta)(0)=\mu([-1,1])<\infty$ from $\Pi(r;\delta)\in\mathscr{M}_{\infty}(\delta)$. Also, by applying Proposition~\ref{prop:weighted_opt_ineq} with the weight function $\phi\equiv 1$, we have
\begin{align*}\mu([-1,1])=\int_{[-1,1]}1\,\mu(d\alpha)=\braket{x_0,\Pi(r;\delta)}=\braket{x_{0},\Pi(r;\delta)}_{\phi}\geq \braket{x_{0},r}_{\phi}=\braket{x_{0},r}=r(0)>0.
\end{align*}

Thus, \begin{align*}
    \underset{\omega\in[-\pi,\pi]}{\inf}\,\phi_{\delta}(\omega)&=\underset{\omega\in[-\pi,\pi]}{\inf}\int_{[-1+\delta,1-\delta]}K(\alpha,\omega)\,\mu(d\alpha)\\&\geq \mu([-1+\delta,1-\delta])\underset{\alpha\in [-1+\delta,1-\delta]}{\inf}\,\underset{\omega\in[-\pi,\pi]}{\inf}\,K(\alpha,\omega)\\&\geq\mu([-1+\delta,1-\delta])\frac{\delta}{2-\delta}>0.
\end{align*} from Lemma~\ref{lem: K_bound}.

Similarly, using Lemma~\ref{lem: K_bound} we obtain \begin{align*}
    \underset{\omega\in[-\pi,\pi]}{\sup}\,\phi_{\delta}(\omega)&\leq\mu([-1+\delta,1-\delta])\frac{2-\delta}{\delta}<\infty.
\end{align*} 

\end{proof}

\newpage
\section{Supplementary Tables}\label{supp_sec: 5_tables}

\begin{table}[H]
\centering
\setlength\tabcolsep{3pt}
\caption{\textit{Estimated average mean squared error (s.e.) for the asymptotic variance estimators and mean integrated squared error (s.e.) for the spectral density estimators from AR1 example when $M=40000$}}

\begin{subtable}[t]{\textwidth}
\small
\centering
\begin{threeparttable}
\caption{Asymptotic variance mean squared error\tnote{1}}
\begin{tabular}{l|c|ccccc}
  \hline
 & AR1(MLE) & OBM & Init-Conv & IO & mLS-uw & mLS-w \\ 
  \hline
AR1($-0.9$) & 0.05 (0.00) & 5.84 (0.38) & 324.49 (13.17) & 4.30 (0.29) & 4.17 (0.26) & 0.88 (0.11) \\ 
  AR1($0.9$) & 19.05 (1.23) & 90.19 (4.98) & 44.81 (3.07) & 51.22 (3.61) & 41.34 (3.20) & 31.11 (2.44) \\ 
   \hline
\end{tabular}
 \begin{tablenotes}
\footnotesize
\item[1] values for AR1($\rho=-0.9$) are scaled by $10^4$
    \end{tablenotes}
\end{threeparttable}
\vspace{1em}

\begin{threeparttable}
\caption{spectral density mean integrated squared error}
\begin{tabular}{l|c|cccc}
  \hline
 & AR1(MLE) & Bart & IO & mLS-uw & mLS-w \\ 
  \hline
AR1($-0.9$) & 0.33 (0.02) & 1.34 (0.05) & 0.86 (0.04) & 0.61 (0.03) & 0.56 (0.06) \\ 
  AR1($0.9$) & 0.31 (0.02) & 1.25 (0.04) & 0.84 (0.04) & 0.57 (0.03) & 0.44 (0.03) \\ 
   \hline
\end{tabular}
\end{threeparttable}
\end{subtable}

\end{table}

\begin{table}[H]
\centering
\setlength\tabcolsep{3pt}
\caption{\textit{Estimated average mean squared error (s.e.) for the asymptotic variance estimators and mean integrated squared error (s.e.) for the spectral density estimators from Bayesian Lasso example when $M=40000$}}

\begin{subtable}[t]{\textwidth}
\small
\centering
\begin{threeparttable}
\caption{Asymptotic variance mean squared error\tnote{1}}
\begin{tabular}{l|cccccc}
  \hline
& Bart & OBM & Init-Conv & IO & mLS-uw & mLS-w  \\ 
  \hline
$\beta_0$ & 121.65 (5.73) & 124.20 (5.80) & 45.26 (3.40) & 44.43 (3.13) & 42.49 (3.12) & 38.52 (2.78) \\ 
  $\beta_1$ & 104.97 (5.27) & 107.23 (5.33) & 41.73 (2.79) & 37.98 (2.35) & 39.43 (2.58) & 35.53 (2.40) \\ 
  $\beta_2$ & 140.46 (7.74) & 143.81 (7.89) & 66.44 (4.32) & 62.18 (3.86) & 63.77 (4.29) & 58.84 (3.93) \\ 
  $\beta_3$ & 35.85 (1.89) & 36.60 (1.92) & 14.43 (0.97) & 14.85 (0.88) & 14.17 (0.92) & 13.06 (0.83) \\ 
  $\beta_4$ & 24.86 (1.51) & 25.37 (1.54) & 12.08 (0.82) & 10.87 (0.74) & 11.37 (0.77) & 10.48 (0.70) \\ 
  $\beta_5$ & 6.55 (0.34) & 6.66 (0.35) & 3.07 (0.19) & 3.07 (0.19) & 3.19 (0.21) & 3.06 (0.20) \\ 
  $\beta_6$ & 52.63 (2.79) & 53.85 (2.83) & 23.18 (1.65) & 21.94 (1.55) & 22.22 (1.64) & 19.79 (1.42) \\ 
  $\beta_7$ & 0.17 (0.01) & 0.17 (0.01) & 0.25 (0.02) & 0.33 (0.01) & 0.15 (0.01) & 0.14 (0.01) \\ 
  $\sigma^2$ & 82.46 (4.84) & 83.46 (4.88) & 190.61 (12.53) & 248.73 (11.42) & 115.34 (8.15) & 95.81 (7.08) \\ 
   \hline
\end{tabular}
 \begin{tablenotes}
\footnotesize
\item[1] values for $\beta_0$-$\beta_7$ are scaled by $10^4$, and $\sigma^2$ is scaled by $10^2$
    \end{tablenotes}
\end{threeparttable}
\vspace{1em}

\begin{threeparttable}
\caption{spectral density mean integrated squared error\tnote{2}}
\begin{tabular}{l|cccc}
  \hline
 & Bart &IO & mLS-uw & mLS-w  \\ 
  \hline
$\beta_0$ & 64.30 (1.88) & 26.41 (1.08) & 20.43 (0.97) & 18.22 (0.89) \\ 
  $\beta_1$ & 54.78 (1.60) & 23.27 (0.86) & 18.45 (0.82) & 16.36 (0.78) \\ 
  $\beta_2$ & 72.54 (2.18) & 35.20 (1.35) & 28.50 (1.29) & 25.91 (1.24) \\ 
  $\beta_3$ & 27.77 (0.80) & 12.77 (0.44) & 9.67 (0.40) & 9.00 (0.38) \\ 
  $\beta_4$ & 22.97 (0.63) & 10.33 (0.41) & 8.29 (0.36) & 7.68 (0.35) \\ 
  $\beta_5$ & 4.15 (0.14) & 2.28 (0.09) & 1.89 (0.08) & 1.82 (0.08) \\ 
  $\beta_6$ & 26.12 (0.81) & 12.43 (0.52) & 10.01 (0.49) & 9.03 (0.45) \\ 
  $\beta_7$ & 0.81 (0.02) & 1.22 (0.02) & 0.37 (0.01) & 0.36 (0.01) \\ 
  $\sigma^2$ & 42.71 (1.08) & 77.27 (2.60) & 24.42 (0.83) & 22.35 (0.79) \\ 
   \hline
\end{tabular}
 \begin{tablenotes}
\footnotesize
\item[2] values for $\beta_0$-$\beta_7$ are scaled by $10^5$, and $\sigma^2$ is scaled by $10^2$
    \end{tablenotes}
\end{threeparttable}
\end{subtable}

\end{table}

\ifnum\pageoption=2
\putbib 
\end{bibunit}\fi

\ifnum\pageoption=3
\bibliographystyle{plainnat}  
\bibliography{bib}
\fi

\fi
\end{document}